\newcommand{\be}{\begin{equation}}
\newcommand{\ee}{\end{equation}}
\newcommand{\bC}{\mathbb{C}}
\newcommand{\bN}{\mathbb{N}}
\newcommand{\bR}{\mathbb{R}}
\newcommand{\bZ}{\mathbb{Z}}
\newcommand{\cA}{\mathcal{A}}
\newcommand{\cB}{\mathcal{B}}
\newcommand{\cC}{\mathcal{C}}
\newcommand{\cD}{\mathcal{D}}
\newcommand{\cE}{\mathcal{E}}
\newcommand{\cF}{\mathcal{F}}
\newcommand{\cG}{\mathcal{G}}
\newcommand{\cH}{\mathcal{H}}
\newcommand{\cK}{\mathcal{K}}
\newcommand{\cL}{\mathcal{L}}
\newcommand{\cM}{\mathcal{M}}
\newcommand{\cN}{\mathcal{N}}
\newcommand{\cO}{\mathcal{O}}
\newcommand{\cP}{\mathcal{P}}
\newcommand{\cR}{\mathcal{R}}
\newcommand{\cS}{\mathcal{S}}
\newcommand{\cU}{\mathcal{U}}
\newcommand{\cV}{\mathcal{V}}
\newcommand{\cW}{\mathcal{W}}
\newcommand{\cZ}{\mathcal{Z}}
\newcommand{\mfa}{\mathfrak{a}}
\newcommand{\mfc}{\mathfrak{c}}
\newcommand{\mfg}{\mathfrak{g}}
\newcommand{\mfgl}{\mathfrak{gl}}
\newcommand{\mfk}{\mathfrak{k}}
\newcommand{\mfm}{\mathfrak{m}}
\newcommand{\mfp}{\mathfrak{p}}
\newcommand{\mfu}{\mathfrak{u}}
\newcommand{\mfM}{\mathfrak{M}}
\newcommand{\mfN}{\mathfrak{N}}
\newcommand{\mfX}{\mathfrak{X}}
\newcommand{\bszero}{\boldsymbol{0}}
\newcommand{\bsone}{\boldsymbol{1}}
\newcommand{\bslambda}{\boldsymbol{\lambda}}
\newcommand{\bsLambda}{\boldsymbol{\Lambda}}
\newcommand{\bsTheta}{\boldsymbol{\Theta}}
\newcommand{\bsX}{\boldsymbol{X}}
\newcommand{\bscA}{\boldsymbol{\cA}}
\newcommand{\bscB}{\boldsymbol{\cB}}
\newcommand{\bscV}{\boldsymbol{\cV}}
\newcommand{\bscW}{\boldsymbol{\cW}}
\newcommand{\bscZ}{\boldsymbol{\cZ}}
\newcommand{\bsxi}{\boldsymbol{\xi}}
\newcommand{\bseta}{\boldsymbol{\eta}}
\newcommand{\ri}{\mathrm{i}}
\newcommand{\dd}{\mathrm{d}}
\newcommand{\diag}{\mathrm{diag}}
\newcommand{\reg}{\mathrm{reg}}
\newcommand{\tr}{\mathrm{tr}}
\newcommand{\Id}{\mathrm{Id}}
\newcommand{\ad}{\mathrm{ad}}
\newcommand{\wad}{\widetilde{\mathrm{ad}}}
\newcommand{\Real}{\mathrm{Re}}
\newcommand{\Imag}{\mathrm{Im}}
\newcommand{\Spec}{\mathrm{Spec}}
\newcommand{\hg}{{\hat{g}}}
\newcommand{\hu}{{\hat{u}}}
\newcommand{\hy}{{\hat{y}}}
\newcommand{\hz}{{\hat{z}}}
\newcommand{\hF}{{\hat{F}}}
\newcommand{\hH}{{\hat{H}}}
\newcommand{\hL}{{\hat{L}}}
\newcommand{\hlambda}{{\hat{\lambda}}}
\newcommand{\htheta}{{\hat{\theta}}}
\newcommand{\hmu}{{\hat{\mu}}}
\newcommand{\hnu}{{\hat{\nu}}}
\newcommand{\hDelta}{\hat{\Delta}}
\newcommand{\hTheta}{{\hat{\Theta}}}
\newcommand{\hbsTheta}{{\hat{\bsTheta}}}
\newcommand{\tx}{\tilde{x}}
\newcommand{\tS}{\tilde{S}}
\newcommand{\tmfM}{\tilde{\mfM}}
\newcommand{\half}{\frac{1}{2}}
\newcommand{\PB}[2]{ \{ #1, #2 \} }
\newcommand{\PD}[2]{\frac{\partial #1}{\partial #2}}
\newcommand{\midand}{\quad \text{and} \quad}
\numberwithin{equation}{section}
\theoremstyle{plain}
\newtheorem{THEOREM}{Theorem}
\newtheorem{LEMMA}[THEOREM]{Lemma}
\newtheorem{PROPOSITION}[THEOREM]{Proposition}
\begin{document}

\title[]{Self-duality and scattering map for the hyperbolic van Diejen systems 
with two coupling parameters (with an appendix by S.~Ruijsenaars)}

\author[]{B\'ela G\'abor Pusztai}

\address{
    Bolyai Institute, University of Szeged \\
    Aradi v\'ertan\'uk tere 1, 
    H-6720 Szeged, Hungary
    \emph{and}
    MTA Lend\"ulet Holographic QFT Group, Wigner RCP \\
    H-1525 Budapest 114, P.O.B. 49, Hungary}

\email[B.G.~Pusztai]{gpusztai@math.u-szeged.hu}

\address{
    School of Mathematics, University of Leeds \\
    Leeds LS2 9JT, UK}

\email[S.~Ruijsenaars]{siru@maths.leeds.ac.uk}

\keywords{
    Ruijsenaars--Schneider--van Diejen models, 
    Action-angle duality,
    Scattering theory}

\subjclass[2010]{70E40, 70H06, 81U15}

\begin{abstract}
In this paper, we construct global action-angle variables for a certain 
two-parameter family of hyperbolic van Diejen systems. Following Ruijsenaars'
ideas on the translation invariant models, the proposed action-angle variables 
come from a thorough analysis of the commutation relation obeyed by the Lax 
matrix, whereas the proof of their canonicity is based on the study of the 
scattering theory. As a consequence, we show that the van Diejen system of 
our interest is self-dual with a factorized scattering map. Also, in an 
appendix by S.~Ruijsenaars, a novel proof of the spectral asymptotics of 
certain exponential type matrix flows is presented. This result is of crucial 
importance in our scattering-theoretical analysis.
\end{abstract}

\maketitle
\tableofcontents

\section{Introduction}
\label{SECTION_Introduction}
The action-angle duality, or Ruijsenaars duality, is one of the most 
attractive features of the Calogero--Moser--Sutherland (CMS) 
\cite{Calogero, Sutherland, Moser_1975, Olsha_Pere_1976, Olsha_Pere_1981,
Sutherland_2004} and the Ruijsenaars--Schneider--van Diejen (RSvD) 
\cite{Ruij_Schneider, Ruij_FiniteDimSolitonSystems, van_Diejen_ComposMath, 
van_Diejen_TMP1994, van_Diejen_JMP1995} integrable many-particle systems. 
At the classical level, in the context of the translation invariant rational 
and hyperbolic models associated with the $A$-type root systems, this 
fascinating phenomenon was discovered by Ruijsenaars in the seminal paper 
\cite{Ruij_CMP1988}. Due to the importance of this observation, using 
various advanced techniques ranging from the methods of gauge theory to the 
machinery of symplectic reductions, by now the duality properties have been 
reinterpreted, and also exhibited for a much wider class of $A$-type models 
(see e.g. \cite{Ruij_RIMS_2, Ruij_RIMS_3, Nekrasov_1999, Fock_Rosley_1999, 
Fock_et_al_JHEP07, Feher_Klimcik_JPA2009, Feher_Ayadi_JMP2010, 
Feher_Klimcik_CMP2011, Feher_Klimcik_NPB2012, Feher_Kluck_NPB2014}).

Although the translation invariant members of the RSvD family have drawn 
considerable attention, the theory of the classical multi-parametric 
non-$A$-type integrable deformations introduced by van Diejen 
\cite{van_Diejen_TMP1994, van_Diejen_JMP1995} is far less developed. For 
brevity, in the rest of the paper the non-$A$-type members of the RSvD family 
are simply referred to as the van Diejen systems. Now, the transparent 
asymmetry between the maturation of the theories of the classical $A$-type 
and non-$A$-type models is probably best elucidated by the somewhat surprising 
fact that no Lax representation is known for the most general van Diejen 
dynamics. Of course, it was observed already in the early stage of the 
developments that by folding the $A$-type root systems one can construct Lax 
matrices of type $C$ and $BC$, but only with a single coupling parameter 
\cite{Ruij_CMP1988}. As concerns the $D$-type models, only partial results are 
available for small values of the number of particles \cite{Chen_Hou_JPA2001}.
Nevertheless, to close the gap, the last couple of years have witnessed 
the emergence of some new ideas in the literature to cope with the intricacies 
posed by the classical van Diejen models. Indeed, by working out Lax matrices 
for the most general rational variants of the RSvD family associated with the
$BC$-type root systems, the duality properties of these special non-$A$-type 
van Diejen models are also settled completely (see \cite{Pusztai_NPB2011, 
Pusztai_NPB2012, Feher_Gorbe_JMP2014}). Prior to our present work, at the 
level of the classical hyperbolic systems, non-trivial results could be found 
only in \cite{Ruij_TMP2008}, where the $1$-particle $BC_1$ model is studied
by direct techniques. However, as the most recent progress in this research 
area, in a joint work with G\"orbe \cite{Pusztai_Gorbe} we constructed Lax 
pairs for certain two-parameter family of \emph{hyperbolic} van Diejen 
systems, too. As a natural step forward, in this paper we wish to uncover 
the self-duality property of these special hyperbolic systems. 

In order to describe the van Diejen models of our interest, it proves 
convenient to start with the shorthand notation
\be
    \bN_m = \{ 1, \ldots, m \} \subset \bN
    \qquad
    (m \in \bN).
\label{bN_m}
\ee
Furthermore, take an arbitrary $n \in \bN$, let $N = 2 n$, and consider the 
open subset
\be
    P = \{ 
        p = (\bsxi, \bseta) 
        = (\xi_1, \ldots, \xi_n, \eta_1, \ldots, \eta_n) \in \bR^N
        \mid
        \xi_1 > \ldots > \xi_n > 0 
    \},
\label{P}
\ee
that we endow with the smooth manifold structure inherited from the ambient 
space $\bR^N$. Note that the functions 
$\lambda_1, \ldots, \lambda_n, \theta_1, \ldots, \theta_n \in C^\infty(P)$
defined by the formulae
\be
    \lambda_a(p) = \xi_a
    \midand
    \theta_a(p) = \eta_a
    \qquad
    (a \in \bN_n, \, p \in P)
\label{lambda_and_theta}
\ee
provide a global coordinate system on $P$. For consistency with the 
terminology of the Ruijsenaars--Schneider models, the coordinates $\lambda_a$ 
and $\theta_a$ are usually called the particle positions and the particle 
rapidities, respectively. For brevity, we also introduce the notations
\be
    x_a = \lambda_a
    \midand
    x_{n + a} = \theta_a
    \qquad
    (a \in \bN_n).
\label{x}
\ee
Note that the even dimensional manifold $P$ can be equipped with the 
symplectic form
\be
    \omega = \sum_{a = 1}^n \dd \lambda_a \wedge \dd \theta_a,
\label{omega}
\ee
which is natural in the sense that thereby we can think of $(P, \omega)$ as 
a model of the cotangent bundle of the configuration space
\be
    Q 
    = \{ \bsxi = (\xi_1, \ldots, \xi_n) \in \bR^n 
        \mid 
        \xi_1 > \ldots > \xi_n > 0 \}.
\label{Q}
\ee
Upon introducing the $N \times N$ matrix
\be
    \Omega
    = [\Omega_{k, l}]_{1 \leq k, l \leq N}
    = \begin{bmatrix}
        \bszero_n & \bsone_n \\
        -\bsone_n & \bszero_n
    \end{bmatrix},    
\label{Omega}
\ee
observe that the Poisson bracket associated with $\omega$ \eqref{omega} 
takes the form
\be
    \PB{f}{g} = \sum_{k, l = 1}^N \Omega_{k, l} \PD{f}{x_k} \PD{g}{x_l}
    \qquad
    (f, g \in C^\infty(P)).
\label{PB}
\ee
In particular, the distinguished coordinates \eqref{lambda_and_theta} are
canonical, i.e.,
\be
    \PB{\lambda_a}{\lambda_b} = 0,
    \quad
    \PB{\theta_a}{\theta_b} = 0,
    \quad
    \PB{\lambda_a}{\theta_b} = \delta_{a, b}
    \qquad
    (a, b \in \bN_n).
\label{lambda_and_theta_PBs}
\ee

To proceed further, let $g = (\mu, \nu) \in \bR^2$ be an arbitrary point 
satisfying 
\be
    \sin(\mu) \neq 0 \neq \sin(\nu), 
\label{mu_nu_genericity}
\ee    
and for each $a \in \bN_n$ define the function
\be
    u^g_a 
    = \left( 1 + \frac{\sin(\nu)^2}{\sinh(2 \lambda_a)^2} \right)^\half
        \prod_{\substack{c = 1 \\ (c \neq a)}}^n
            \left( 
                1 + \frac{\sin(\mu)^2}{\sinh(\lambda_a - \lambda_c)^2} 
            \right)^\half
            \left( 
                1 + \frac{\sin(\mu)^2}{\sinh(\lambda_a + \lambda_c)^2} 
            \right)^\half
    \in C^\infty(P).
\label{u}
\ee
In passing we mention that in the above expression the values of the 
parameters $\mu$ and $\nu$ matter only modulo $\pi$. Now, the $n$-particle 
van Diejen model of our interest is the classical Hamiltonian system 
$(P, \omega, H^g)$ characterized by the smooth Hamiltonian
\be
    H^g = \sum_{a = 1}^n \cosh(\theta_a) u^g_a.
\label{H}
\ee
Remembering the integrable many-particle models introduced in 
\cite{van_Diejen_TMP1994, van_Diejen_JMP1995}, it is clear that this 
Hamiltonian system does belong to the hyperbolic RSvD family with two 
independent coupling parameters $\mu$ and $\nu$. 

Having defined the key player, now we wish to summarize the content of the 
rest of the paper. In the next section we gather the necessary background 
material underlying the study of the van Diejen system \eqref{H}. Since our 
present work can be seen as a natural continuation of \cite{Pusztai_Gorbe}, 
the reader may find it convenient to have a copy of \cite{Pusztai_Gorbe} on 
hand for the proofs of the results outlined in Section 
\ref{SECTION_Preliminaries}. 

Turning to the study of the commutation relation \eqref{commut_rel} obeyed by 
the Lax matrix $L$ \eqref{L} of the van Diejen system \eqref{H}, in Section 
\ref{SECTION_the_dual_objects} we construct a diffeomorphism $\Psi^g$ 
\eqref{Psi} from the spectral data associated with the Lax matrix. Since 
its components $\htheta^g_a$ \eqref{htheta} and $\hlambda^g_a$ \eqref{hlambda}
are defined with the aid of the diagonalization of $L$, the construction of 
$\Psi^g$ is of purely algebraic nature. We wish to emphasize that this 
construction is a direct generalization of Ruijsenaars' approach on the 
translation invariant models \cite{Ruij_CMP1988}, so it is natural to expect 
that the globally defined smooth functions $\htheta^g_a$ and $\hlambda^g_a$ 
$(a \in \bN_n)$ are actually action-angle coordinates for the Hamiltonian 
system \eqref{H}. 

To prove that they do form a Darboux system, in Section 
\ref{SECTION_Scattering_theory_and_duality} we turn our attention to the 
scattering properties of \eqref{H}. Of course, this idea was the other 
cornerstone of the developments presented in \cite{Ruij_CMP1988}, but our 
implementation is quite different. Indeed, rather than using techniques 
from the theory of functions of several complex variables, in Section 
\ref{SECTION_Scattering_theory_and_duality} we apply straightforward 
dynamical system techniques, and a bit real analysis, to prove the 
canonicity of the proposed action-angle variables. The main result of 
the paper is formulated in Theorem \ref{THEOREM_main_theorem}. This theorem 
allows us to complete the study of the scattering theory, too. Namely, 
as formulated in Theorem \ref{THEOREM_wave_maps}, we show that the M\o ller 
wave transformations are symplectomorphisms. Moreover, from the explicit 
formula \eqref{scattering_formula} we see that the scattering map $S$ has 
a factorized form. In Section \ref{SECTION_Discussion} we briefly discuss 
the consequences of our results. Also, we pose some open problems related 
to the theory of the van Diejen type particle systems. We conclude the paper 
with an appendix by S.~Ruijsenaars, on the spectral asymptotics of certain 
exponential type matrix flows.

\section{Preliminaries}
\label{SECTION_Preliminaries}
In this short section we wish to summarize the fundamental algebraic 
properties of the Lax matrix of the van Diejen system \eqref{H} we 
constructed in \cite{Pusztai_Gorbe}. To make the presentation essentially 
self-contained, and also to fix the notations, it is expedient to start 
with a brief overview on the underlying Lie theoretical objects. As a rule, 
the manifolds appearing in this paper are real and smooth.

\subsection{Background material from Lie theory}
\label{SUBSECTION_Lie_theory_background}
One of the most important observations we made in \cite{Pusztai_Gorbe} is 
that many properties of the van Diejen systems \eqref{H} can be understood 
in a geometric setup based on the non-compact reductive matrix Lie group
\be
    G = U(n, n) = \{ y \in GL(N, \bC) \mid y^* C y = C \},
\label{G}
\ee
where 
\be
    C 
    = \begin{bmatrix}
        \bszero_n & \bsone_n \\
        \bsone_n & \bszero_n
    \end{bmatrix}
    \in GL(N, \bC).
\label{C} 
\ee
It is clear that the Lie algebra of $G$ \eqref{G} can be identified with
\be
    \mfg 
    = \mfu(u, n) 
    = \{ Y \in \mfgl(N, \bC) \mid Y^* C + C Y = 0 \}.
\label{mfg}
\ee
Notice also that the set of unitary elements
\be
    K = \{ y \in G \mid y^* y = \bsone_N \} \cong U(n) \times U(n)
\label{K}
\ee
forms a maximal compact subgroup in $G$ \eqref{G}, and the corresponding
Lie subalgebra takes the form
\be
    \mfk = \{ Y \in \mfg \mid Y^* + Y = 0 \} \cong \mfu(n) \oplus \mfu(n).
\label{mfk}
\ee
By taking the complementary subspace
\be
    \mfp = \{ X \in \mfg \mid X^* = X \},
\label{mfp}
\ee
we end up with the $\bZ_2$-gradation 
\be
    \mfg = \mfk \oplus \mfp, 
\label{gradation}
\ee    
which is actually orthogonal with respect to the trace pairing defined on 
the matrix Lie algebra $\mfg$ \eqref{mfg}. Note that at the Lie group level 
the natural analogue of this decomposition is the diffeomorphism
\be
    \cC \colon \mfp \times K \rightarrow G,
    \quad 
    (X, k) \mapsto e^X k.
\label{cC}    
\ee
As a further ingredient, recall that the restriction of the exponential map 
onto the subspace $\mfp$ \eqref{mfp} is injective. Moreover, by taking the 
image of $\mfp$ under the exponential map, in $G$ we obtain the closed 
embedded submanifold
\be
    \cP = \exp(\mfp) = \{ e^X \in G \mid X \in \mfp \}.
\label{cP}
\ee
As a matter of fact, it coincides with the set of positive definite elements 
of $G$; that is,
\be
    \cP = \{ y \in U(n, n) \, | \, y > 0 \}.
\label{cP_identification}
\ee
We mention in passing that, due to the global Cartan decomposition 
\eqref{cC}, $\cP$ can be identified with the non-compact symmetric space 
associated with $(G, K)$, i.e., $\cP \cong G / K$.

Besides the above basic objects, in the following we shall also need some 
finer elements from the structure theory of $G$ \eqref{G}. As the first
step toward this goal, in $\mfp$ \eqref{mfp} we introduce the maximal Abelian 
subspace
\be
    \mfa 
    = \{ X = \diag(x_1, \ldots, x_n, -x_1, \ldots, -x_n) 
        \mid 
        x_1, \ldots, x_n \in \bR \}.
\label{mfa}
\ee
Moreover, for the subset of the off-diagonal elements of $\mfp$ we introduce 
the notation $\mfa^\perp$. Clearly the centralizer of $\mfa$ in $K$ \eqref{K} 
is the Abelian Lie group
\be
    M = 
    \{ \diag(e^{\ri \chi_1}, \ldots, e^{\ri \chi_n}, 
                e^{\ri \chi_1}, \ldots, e^{\ri \chi_n}) 
        \mid
        \chi_1, \ldots, \chi_n \in \bR \}
\label{M}
\ee
with Lie algebra
\be
    \mfm 
    = \{ \diag(\ri \chi_1, \ldots, \ri \chi_n, 
                \ri \chi_1, \ldots, \ri \chi_n) 
        \mid
        \chi_1, \ldots, \chi_n \in \bR \}.
\label{mfm}
\ee
Now, if $\mfm^\perp$ denotes the set of the off-diagonal elements of the 
subalgebra $\mfk$ \eqref{mfk}, then by \eqref{gradation} we can write the 
refined orthogonal decomposition
\be
    \mfg = \mfm \oplus \mfm^\perp \oplus \mfa \oplus \mfa^\perp.
\label{refined_decomposition}
\ee

We proceed by noting that for all
\be
    X = \diag(x_1, \ldots, x_n, -x_1, \ldots, -x_n) \in \mfa
    \qquad
    (x_1, \ldots, x_n \in \bR)
\label{X_matrix} 
\ee
the subspace $\mfm^\perp \oplus \mfa^\perp$ consisting of the off-diagonal
elements of $\mfg$ is invariant under the action of the linear operator
\be
    \ad_X \colon \mfg \rightarrow \mfg, 
    \quad
    Y \mapsto [X, Y].
\label{ad}
\ee
Thus, the restriction
\be
    \wad_X 
    = \ad_X |_{\mfm^\perp \oplus \mfa^\perp}
    \in \mfgl(\mfm^\perp \oplus \mfa^\perp)
\label{wad}
\ee
is well-defined, and for its spectrum we have
\be
    \mathrm{Spec}(\wad_X) 
    = \{x_a - x_b, \pm (x_a + x_b), \pm 2 x_c 
        \mid 
        a, b, c \in \bN_n, \, a \neq b\}.
\label{wad_X_spectrum}
\ee
Recall that in the study of the reductive Lie groups, the regular part of
$\mfa$ \eqref{mfa} is usually defined by the open subset
\be
    \mfa_\reg 
    = \{ X \in \mfa \mid \det(\wad_X) \neq 0 \}.
\label{mfa_reg}
\ee
From \eqref{mfa}, \eqref{wad_X_spectrum} and \eqref{mfa_reg} we see that
\be
    \mfc 
    = \{ X = \diag(x_1, \ldots, x_n, -x_1, \ldots, -x_n) 
        \mid 
        x_1 > \ldots > x_n > 0 \}
\label{mfc}
\ee
is a connected component of $\mfa_\reg$. So, giving a glance at \eqref{Q},
it is clear the configuration space $Q$ of the van Diejen systems \eqref{H} 
can be naturally identified with the standard Weyl chamber $\mfc$ \eqref{mfc}, 
i.e., $Q \cong \mfc$.

Due to its importance in the study of the Lax matrix of the van Diejen
system \eqref{H}, we conclude this subsection with some important facts 
related to the notion of regularity. Our discussion is based on the smooth
surjective map
\be
    \mfa \times K \ni (X, k)
    \mapsto
    k X k^{-1} \in \mfp,
\label{mfa_and_K_and_mfp}
\ee
that allows us to introduce the regular part of $\mfp$ \eqref{mfp} by
\be
    \mfp_\reg = \{ k X k^{-1} \mid k \in K \text{ and } X \in \mfc \}. 
\label{mfp_reg}
\ee
Recall that $\mfp_\reg$ is an open and dense subset of $\mfp$, and the map
\be
    \mfc \times (K / M) \ni (X, k M)
    \mapsto
    k X k^{-1} \in \mfp_\reg
\label{mfp_reg_characterization}
\ee
is a diffeomorphism. Next, by taking the image of $\mfp_\reg$ under the
exponential map, we define
\be
    \cP_\reg 
    = \exp(\mfp_\reg) = \{ e^X \in G \mid X \in \mfp_\reg \} 
        \subset \cP.
\label{cP_reg}
\ee
Since the exponential map is a diffeomorphism from $\mfp$ \eqref{mfp} onto 
$\cP$ \eqref{cP}, the subset $\cP_\reg$ is open and dense in $\cP$. Therefore,
$\cP_\reg$ can be also seen as an embedded submanifold of $G$. Moreover, the
map
\be
    \varUpsilon \colon \mfc \times (K / M) \rightarrow \cP_\reg,
    \quad
    (X, k M) \mapsto k e^X k^{-1}
\label{varUpsilon}
\ee
is a diffeomorphism from $\mfc \times (K / M)$ onto $\cP_\reg$. As a closing
remark, for the proofs of the Lie theoretic facts appearing in this subsection 
we recommend \cite{Knapp}.

\subsection{Fundamental properties of the Lax matrix}
Having reviewed the necessary objects from Lie theory, now we wish to describe
the Lax matrix we constructed in \cite{Pusztai_Gorbe} for the two-parameter
family of hyperbolic van Diejen systems \eqref{H}. As a preliminary step, for 
the space of the admissible model parameters we introduce the notation 
\be
    \mfM
    = \{ g = (\mu, \nu) \in \bR^2 \mid \sin(\mu) \neq 0 \neq \sin(\nu) \}.
\label{mfM}
\ee
Next, for each $a \in \bN_n$ we define the smooth function
\be
    z_a \colon P \times \mfM \rightarrow \bC,
    \quad
    (p, g) \mapsto z_a(p, g) = z_a^g(p),
\label{z_a_def}
\ee
where for each $g = (\mu, \nu) \in \mfM$ the $g$-section $z_a^g$ is given by
\be
    z_a^g = - \frac{\sinh(\ri \nu + 2 \lambda_a)}{\sinh(2 \lambda_a)}
            \prod_{\substack{c = 1 \\ (c \neq a)}}^n
                \frac{\sinh(\ri \mu + \lambda_a - \lambda_c)}
                    {\sinh(\lambda_a - \lambda_c)}
                \frac{\sinh(\ri \mu + \lambda_a + \lambda_c)}
                    {\sinh(\lambda_a + \lambda_c)}
    \in C^\infty(P, \bC).
\label{z_a}
\ee
By taking the modulus of \eqref{z_a_def}, we let 
\be
    u_a = \vert z_a \vert \in C^\infty(P \times \mfM)
    \qquad
    (a \in \bN_n).
\label{u_a_def}
\ee 
This is consistent with our earlier notation, since for all 
$g = (\mu, \nu) \in \mfM$ the $g$-section of $u_a$ does coincide with the 
function we defined in \eqref{u}. An equally important ingredient in the 
construction of the Lax matrix is the column vector valued smooth function
\be
    F \colon P \times \mfM \rightarrow \bC^{N \times 1},
    \quad
    (p, g) \mapsto F(p, g) = F^g(p)    
\label{F_def}
\ee
with components
\be
    F_a(p, g) = e^{\frac{\theta_a(p)}{2}} u_a(p, g)^\half
    \quad \text{and} \quad
    F_{n + a}(p, g) 
    = e^{-\frac{\theta_a(p)}{2}} \bar{z}_a(p, g) u_a(p, g)^{-\half}
    \qquad
    (a \in \bN_n).
\label{F}
\ee
Finally, upon introducing the shorthand notation
\be
    \Lambda_a = \lambda_a
    \midand
    \Lambda_{n + a} = - \lambda_a
    \qquad
    (a \in \bN_n),
\label{Lambda}
\ee
let us recall that the Lax matrix we worked out in \cite{Pusztai_Gorbe} is 
the matrix valued smooth function
\be
    L \colon P \times \mfM \rightarrow \cP,
    \quad
    (p, g) \mapsto L(p, g) = L^g(p),
\label{L}
\ee
where for each $g = (\mu, \nu) \in \mfM$ the $g$-section $L^g$ is given by 
the entries
\be
    L^g_{k, l} 
    = \frac{\ri \sin(\mu) F^g_k \bar{F}^g_l 
        + \ri \sin(\mu - \nu) C_{k, l}}
        {\sinh(\ri \mu + \Lambda_k - \Lambda_l)}
    \qquad
    (k, l \in \bN_N).
\label{L_entries}
\ee
Concerning the relationship between the Hamiltonian $H^g$ \eqref{H} and the
Lax matrix $L$ \eqref{L}, let us note that
\be
    H^g = \half \tr(L^g).
\label{H_and_L}
\ee

Remember that the fact that $L$ \eqref{L} takes values in $\cP$ \eqref{cP}
is itself not entirely trivial (see the proofs of Proposition 1 and Lemma 2 
in \cite{Pusztai_Gorbe}). However, for our present purposes it is much more
important that $L$ obeys a Ruijsenaars type commutation relation (see equation 
(2.4) and the surrounding ideas in \cite{Ruij_CMP1988}). Indeed, utilizing 
the matrix valued smooth function
\be
    \bsLambda 
    = \diag(\Lambda_1, \ldots, \Lambda_N) 
        \in C^\infty(P, \mfc),
\label{bsLambda}
\ee
one can easily show that $\forall g = (\mu, \nu) \in \mfM$ we have
\be
    e^{\ri \mu} e^{\bsLambda} L^g e^{-\bsLambda}
    - e^{- \ri \mu} e^{-\bsLambda} L^g e^{\bsLambda}
    = 2 \ri \sin(\mu) F^g (F^g)^* + 2 \ri \sin(\mu - \nu) C.
\label{commut_rel} 
\ee
Although the proof of this statement is quite straightforward (see Lemma 3
in \cite{Pusztai_Gorbe}), this commutation relation proved to be one of the 
cornerstones of the developments presented in \cite{Pusztai_Gorbe}. As it 
turns out, this relation can be seen as the starting point of the present 
work, too. However, compared to \cite{Pusztai_Gorbe}, in this paper we take 
a completely different route by turning our attention to the `dual systems'. 
Before presenting the new results, let us not forget about an important 
regularity property of $L$. Under the slight technical assumption that the 
coupling parameter belongs to
\be
    \tmfM = \{ g = (\mu, \nu) \in \mfM \mid \sin(2 \mu - \nu) \neq 0 \},
\label{tmfM}
\ee
in \cite{Pusztai_Gorbe} we proved that the spectrum of the Lax matrix 
\eqref{L} is simple. More precisely, the statement we made in Lemma 4 of 
\cite{Pusztai_Gorbe} is that $\forall (p, g) \in P \times \tmfM$ we 
have $L(p, g) \in \cP_\reg$ \eqref{cP_reg}. During the whole paper we shall 
assume that the map $L$ \eqref{L} is regular in this sense.

\section{Construction of the dual objects}
\label{SECTION_the_dual_objects}
Starting with this section, we wish to present our new results related to 
the $2$-parameter family of classical hyperbolic van Diejen systems \eqref{H}. 
Following the lead of the paper \cite{Ruij_CMP1988}, we shall utilize the 
commutation relation \eqref{commut_rel} to infer the relevant spectral 
properties of the Lax matrix $L$ \eqref{L}. As a consequence, our analysis 
shall naturally give rise to the `dual objects' playing the fundamental role 
in the construction of action-angle variables for the Hamiltonian system 
\eqref{H}.

\subsection{Diagonalization of the Lax matrix}
\label{SUBSECTION_Diagonalization_of_L}
As we discussed at the end of the previous section, for all $p \in P$ and 
$g \in \tmfM$ the self-adjoint Lax matrix $L(p, g)$ \eqref{L_entries} is 
regular in the sense that it belongs to $\cP_\reg$ \eqref{cP_reg}. Bearing 
in mind \eqref{varUpsilon}, we see that the matrix $L(p, g)$ can be conjugated 
into a unique element of the subset $\exp(\mfc) \subset \cP_\reg$ by certain 
elements of the compact subgroup $K$ \eqref{K}. In particular, the spectrum 
of $L(p, g)$ is simple. More precisely, due to the identification 
$Q \cong \mfc$ (see \eqref{Q} and \eqref{mfc}), there is a \emph{unique} 
element
\be
    \htheta(p, g) = (\htheta_1 (p, g), \ldots, \htheta_n(p, g)) \in Q
\label{htheta}
\ee
such that we can write that
\be
    \Spec(L(p, g)) = \{ e^{\pm 2 \htheta_a(p, g)} \mid a \in \bN_n \}.
\label{L_spec}
\ee
Utilizing the spectrum of $L$, let us define the function
\be
    \htheta \colon P \times \tmfM \rightarrow Q,
    \quad
    (p, g) \mapsto \htheta(p, g),
\label{htheta_function}
\ee
together with the matrix valued function
\be
    \hbsTheta
    = \diag(\hTheta_1, \ldots, \hTheta_N),
\label{hbsTheta}
\ee
where the diagonal entries are given by
\be
    \hTheta_a = \htheta_a
    \midand
    \hTheta_{n + a} = -\htheta_a
    \qquad
    (a \in \bN_n).
\label{hTheta}
\ee
Notice that $\hbsTheta$ is smooth, i.e.,
\be 
    \hbsTheta \in C^\infty(P \times \tmfM, \mfc). 
\label{hbsTheta_smooth}
\ee
Indeed, recalling the fact that the smooth function $L$ \eqref{L} takes 
values in $\cP_\reg$, it is evident that with the aid of the diffeomorphism 
$\varUpsilon$ \eqref{varUpsilon} we can write 
\be
    \hbsTheta = \half \text{pr}_\mfc \circ \varUpsilon^{-1} \circ L,
\label{hbsTheta_eq}
\ee
where $\text{pr}_\mfc \colon \mfc \times (K / M) \twoheadrightarrow \mfc$ is 
the (smooth) canonical projection from the product $\mfc \times (K / M)$ onto 
the first factor $\mfc$. As a trivial consequence, the smoothness of $\htheta$ 
\eqref{htheta_function} also follows.

As concerns the diagonalization of the matrix $L(p, g)$ at any given point 
$(p, g) \in P \times \tmfM$, later on we shall also need information about 
those elements of $K$ \eqref{K} that can be used to transform $L(p, g)$ 
into $\exp(\mfc)$ by conjugation. Of course, there is a plethora of such 
diagonalizing elements. Indeed, from \eqref{varUpsilon} it is evident that the 
freedom of choice is completely characterized by the $n$-dimensional subgroup 
$M$ \eqref{M}. Thus, aiming for uniqueness, on the diagonalizing matrices we 
shall impose  that the first $n$ components of a certain column vector be 
strictly positive (see \eqref{hy_2} below). Namely, keeping in mind the 
objects defined in \eqref{F} and \eqref{bsLambda}, our observation can be 
formulated as follows.

\begin{LEMMA}
\label{LEMMA_hy}
For all $p \in P$ and $g \in \tmfM$ there is a unique element 
$\hy(p, g) \in K$ such that
\be
    L(p, g) = \hy(p, g) e^{2 \hbsTheta(p, g)} \hy(p, g)^{-1},
\label{hy_1}
\ee
and also $\forall a \in \bN_n$ we have
\be
    ( e^{-\hbsTheta(p, g)} \hy(p, g)^{-1} e^{\bsLambda(p)} F(p, g) )_a > 0.
\label{hy_2}
\ee
Moreover, the resulting function
\be
    \hy \colon P \times \tmfM \rightarrow K,
    \quad
    (p, g) \mapsto \hy(p, g)
\label{hy_function}
\ee
is smooth, i.e., $\hy \in C^\infty(P \times \tmfM, K)$.
\end{LEMMA}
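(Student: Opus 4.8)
The plan is to separate the statement into an existence-uniqueness part for the matrix $\hy(p,g)$ at a fixed point and a smoothness part for the resulting map. For the pointwise part, fix $(p,g) \in P \times \tmfM$. Since $L(p,g) \in \cP_\reg$, the diffeomorphism $\varUpsilon$ of \eqref{varUpsilon} supplies a coset $kM \in K/M$ and the element $X = 2\hbsTheta(p,g) \in \mfc$ with $L(p,g) = k e^{2\hbsTheta(p,g)} k^{-1}$; this is exactly \eqref{hy_1} for $\hy(p,g) = k$, and any other diagonalizing element of $K$ producing the \emph{same} ordered spectral matrix $e^{2\hbsTheta(p,g)}$ differs from $k$ precisely by right multiplication by an element of the centralizer $M$ of $\mfc$ in $K$ (this is the content of \eqref{mfp_reg_characterization} and \eqref{varUpsilon}). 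So the remaining task is to show that the normalization \eqref{hy_2} picks out a unique representative in the $M$-orbit $kM$.

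To carry this out, write $v = v(p,g) = e^{-\hbsTheta(p,g)} k^{-1} e^{\bsLambda(p)} F(p,g) \in \bC^{N\times 1}$ for the candidate vector built from $k$. Replacing $k$ by $k m$ with $m = \diag(e^{\ri\chi_1},\dots,e^{\ri\chi_n},e^{\ri\chi_1},\dots,e^{\ri\chi_n}) \in M$ multiplies $v_a$ by $e^{-\ri\chi_a}$ for $a \in \bN_n$ (and $v_{n+a}$ by $e^{-\ri\chi_a}$ as well, since $m^{-1}$ commutes with the diagonal $e^{-\hbsTheta}$ and acts on $e^{\bsLambda}F$ by the conjugate phases). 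Hence the first $n$ entries $v_1,\dots,v_n$ can all be rotated to the positive real axis by a \emph{unique} choice of $\chi_1,\dots,\chi_n \bmod 2\pi$, \emph{provided} each $v_a \neq 0$ for $a \in \bN_n$. Establishing this non-vanishing is the main obstacle: one must rule out $v_a = 0$. The natural tool is the commutation relation \eqref{commut_rel}: conjugating it by $\hy^{-1}$ and using $\hy^{-1} L \hy = e^{2\hbsTheta}$ turns it into a relation among $v$, $\hbsTheta$ and $\bsLambda$ of Cauchy-like type, from which $v_a = 0$ would force a degeneracy incompatible with the regularity $g \in \tmfM$ and the genericity \eqref{mu_nu_genericity}; alternatively one invokes the analogous non-vanishing already established in \cite{Pusztai_Gorbe} for the vector $F$ together with the fact that $\hy^{-1}e^{\bsLambda}F$ is an eigenvector-adapted frame. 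This fixes $\hy(p,g)$ uniquely and proves existence.

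For smoothness, the idea is to exhibit $\hy$ locally as a composition of smooth maps. Near any $(p_0,g_0)$ choose a smooth local section of $\varUpsilon$ — more precisely, a smooth map $(p,g) \mapsto k(p,g) \in K$ diagonalizing $L(p,g)$ with ordered spectrum, which exists because $\varUpsilon$ is a diffeomorphism and $K \to K/M$ is a smooth principal $M$-bundle, hence admits local smooth sections. Then $v(p,g) = e^{-\hbsTheta(p,g)} k(p,g)^{-1} e^{\bsLambda(p)} F(p,g)$ is smooth with nowhere-vanishing first $n$ components, so $\chi_a(p,g) = \arg(v_a(p,g))$ is smooth (the argument is smooth on $\bC \setminus \{0\}$ locally, and globally well-defined mod $2\pi$, which suffices since only $e^{\ri\chi_a}$ enters), and $\hy(p,g) = k(p,g)\,\diag(e^{\ri\chi_1},\dots,e^{\ri\chi_n},e^{\ri\chi_1},\dots,e^{\ri\chi_n})$ is a smooth expression. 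By the uniqueness part this local formula agrees on overlaps with the globally defined $\hy$, so $\hy \in C^\infty(P \times \tmfM, K)$. I expect the bundle-section argument and the non-vanishing of $v_a$ to be the only genuinely non-routine points; everything else is bookkeeping with \eqref{varUpsilon} and \eqref{commut_rel}.
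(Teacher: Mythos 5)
Your overall architecture coincides with the paper's: diagonalize $L(p,g)$ via $\varUpsilon$, note that the residual freedom is exactly right multiplication by the centralizer $M$, fix the representative by rotating the first $n$ components of $v = e^{-\hbsTheta(p,g)} y^{-1} e^{\bsLambda(p)} F(p,g)$ onto the positive real axis, and get smoothness from local smooth sections of the principal $M$-bundle combined with uniqueness. The uniqueness and smoothness parts of your write-up are fine and essentially identical to the paper's.

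There is, however, a genuine gap at the step you yourself flag as the main obstacle: the non-vanishing $v_a \neq 0$. You only gesture at it. The sketch that conjugating \eqref{commut_rel} would show ``$v_a = 0$ forces a degeneracy incompatible with $g \in \tmfM$ and \eqref{mu_nu_genericity}'' is not an argument, and it points in the wrong direction: no spectral-simplicity or $\tmfM$-regularity input is needed here beyond $\sin(\mu) \neq 0$. What actually works (and is the paper's proof) is to multiply \eqref{commut_rel} on the left by $e^{-\hbsTheta(p,g)} y^{-1} e^{\bsLambda(p)}$ and on the right by $e^{\bsLambda(p)} y e^{-\hbsTheta(p,g)}$; the diagonal entries of the resulting identity give $2\ri\sin(\mu)\,(y^{-1} e^{2\bsLambda(p)} y)_{k,k} = 2\ri\sin(\mu)\,\vert v_k \vert^2$, since the $C$-term drops out ($C_{k,k} = 0$) and the two exponential conjugations coincide on the diagonal. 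Hence $\vert v_k \vert^2$ equals a diagonal entry of the positive definite matrix $y^{-1} e^{2\bsLambda(p)} y$ and is strictly positive for every $k \in \bN_N$ (not just the first $n$ indices). Your proposed alternative, importing the non-vanishing of the components of $F$ from \cite{Pusztai_Gorbe}, does not suffice either: $y^{-1}$ mixes the entries of $e^{\bsLambda}F$, so componentwise non-vanishing is not preserved under that multiplication. Until this diagonal-entry computation (or an equivalent) is supplied, the existence part of the Lemma is unproved.
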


\begin{proof}
We start with the existence part of the Lemma. Let $p \in P$ 
and $g = (\mu, \nu) \in \tmfM$ be arbitrary elements and keep them fixed. 
Recalling the construction of the diagonal matrix \eqref{hbsTheta}, from 
\eqref{varUpsilon} it follows that there is an element $y \in K$ such that
\be
    L(p, g) = y e^{2 \hbsTheta(p, g)} y^{-1}.
\label{y_def}
\ee
Plugging it into the commutation relation \eqref{commut_rel}, we get
\be
\begin{split}
    & e^{\ri \mu} 
        e^{\bsLambda(p)} y e^{2 \hbsTheta(p, g)} y^{-1} e^{-\bsLambda(p)}
    - e^{-\ri \mu} 
        e^{-\bsLambda(p)} y e^{2 \hbsTheta(p, g)} y^{-1} e^{\bsLambda(p)}
    \\
    & \quad 
    = 2 \ri \sin(\mu) F(p, g) F(p, g)^* + 2 \ri \sin(\mu - \nu) C.
\end{split}
\label{AAA_2}
\ee
By multiplying this equation with the matrices
\be 
    e^{-\hbsTheta(p, g)} y^{-1} e^{\bsLambda(p)}
    \midand 
    e^{\bsLambda(p)} y e^{-\hbsTheta(p, g)}
\label{multipliers}
\ee
from the left and the right, respectively, one finds immediately that
\be
\begin{split}
    & e^{\ri \mu} 
        e^{-\hbsTheta(p, g)} y^{-1} e^{2 \bsLambda(p)} y e^{\hbsTheta(p, g)} 
    - e^{-\ri \mu}
        e^{\hbsTheta(p, g)} y^{-1} e^{2 \bsLambda(p)} y e^{-\hbsTheta(p, g)} 
    \\
    & \quad 
    = 2 \ri \sin(\mu) 
        (e^{-\hbsTheta(p, g)} y^{-1} e^{\bsLambda(p)} F(p, g)) 
        (e^{-\hbsTheta(p, g)} y^{-1} e^{\bsLambda(p)} F(p, g))^* 
    + 2 \ri \sin(\mu - \nu) C.
\end{split}
\label{commut_rel_and_y}
\ee
Now, focusing on the diagonal entries on the above matrix equation, it 
follows that $\forall k \in \bN_N$ we have
\be
    (y^{-1} e^{2 \bsLambda(p)} y)_{k, k} 
    = \vert (e^{-\hbsTheta(p, g)} y^{-1} e^{\bsLambda(p)} F(p, g))_k \vert^2.
\label{AAA_4}
\ee
Notice that the matrix $y^{-1} e^{2 \bsLambda(p)} y$ is positive definite, 
whence its diagonal entries are strictly positive. Therefore, we conclude 
that
\be
    (e^{-\hbsTheta(p, g)} y^{-1} e^{\bsLambda(p)} F(p, g))_k \neq 0
    \qquad
    (k \in \bN_N).   
\label{hF_components_nonzero}
\ee

To proceed further, $\forall a \in \bN_n$ we define the complex number
\be
    m_a 
    = \frac{(e^{-\hbsTheta(p, g)} y^{-1} e^{\bsLambda(p)} F(p, g))_a}
        {\vert 
            (e^{-\hbsTheta(p, g)} y^{-1} e^{\bsLambda(p)} F(p, g))_a 
        \vert}.
\label{m_a}
\ee 
Since $\vert m_a \vert = 1$, from \eqref{M} it is clear that
\be
    m = \diag(m_1, \ldots, m_n, m_1, \ldots, m_n) \in M
\label{m}
\ee
is well-defined, just as the group element
\be
    \hy(p, g) = y m \in K. 
\label{y_hat}
\ee
Since the subgroup $M$ \eqref{M} is the centralizer of $\mfa$ in $K$, from 
\eqref{y_def} it is immediate that
\be
    L(p, g) = \hy(p, g) e^{2 \hbsTheta(p, g)} \hy(p, g)^{-1}.    
\label{y_hat_OK_1}
\ee
Keeping in mind the definition \eqref{m_a}, it is also evident that 
$\forall a \in \bN_n$ we can write
\be
\begin{split}
    & (e^{-\hbsTheta(p, g)} \hy(p, g)^{-1} e^{\bsLambda(p)} F(p, g))_a
    = (e^{-\hbsTheta(p, g)} m^{-1} y^{-1} e^{\bsLambda(p)} F(p, g))_a
    \\
    & \quad
    = e^{-\hTheta_a(p, g)} m_a^{-1} (y^{-1} e^{\bsLambda(p)} F(p, g))_a
    = m_a^{-1} (e^{-\hbsTheta(p, g)} y^{-1} e^{\bsLambda(p)} F(p, g))_a
    \\
    & \quad
    = \vert (e^{-\hbsTheta(p, g)} y^{-1} e^{\bsLambda(p)} F(p, g))_a \vert 
    > 0,   
\end{split}
\label{y_hat_OK_2}
\ee
whence the matrix \eqref{y_hat} does meet the requirements we imposed in 
the Lemma.

As concerns the uniqueness part of the Lemma, in the light of the above 
calculations it is trivial. As a matter of fact, the smoothness of the 
function $\hy$ \eqref{hy_function} is also quite straightforward from 
the above construction. Since $\varUpsilon$ \eqref{varUpsilon} is a 
diffeomorphism, and since the smooth map $L$ \eqref{L} takes values in 
$\cP_\reg$, the map 
\be
    \varUpsilon^{-1} \circ L 
    \colon P \times \tmfM \rightarrow \mfc \times (K / M)
\label{varUpsilon_and_L}
\ee
is well-defined and smooth. Now, by composing
$\varUpsilon^{-1} \circ L$ from the left with appropriate \emph{smooth} local 
sections of the principal $M$-bundle
\be
    \mfc \times K \ni (X, k) \mapsto (X, k M) \in \mfc \times (K / M),
\label{M_bundle}
\ee
it is evident that the diagonalizing matrix $y$ appearing in \eqref{y_def} 
can be chosen smoothly in a small neighborhood of any point 
$(p, g) \in P \times \tmfM$. It entails that the complex numbers $m_a$ 
\eqref{m_a} also depend smoothly in the same neighborhood of $(p, g)$. 
However, by virtue of the uniqueness of $\hy$, at each point of this 
neighborhood it must coincide with the product given in \eqref{y_hat}. 
As a consequence, the function defined in \eqref{hy_function} is smooth 
in a small neighborhood of any point, whence it is smooth everywhere.
\end{proof}

\subsection{The form of the dual Lax matrix}
\label{SUBSECTION_The_form_of_the_dual_Lax_matrix}
With the aid of the functions $L$ \eqref{L} and $\hy$ \eqref{hy_function} 
let us define the matrix valued function
\be
    \hL \colon P \times \tmfM \rightarrow \cP_\reg,
    \quad
    (p, g) \mapsto \hL(p, g) = \hy(p, g)^{-1} e^{2 \bsLambda(p)} \hy(p, g).
\label{hL}
\ee 
Utilizing $F$ \eqref{F_def} and $\bsLambda$ \eqref{bsLambda}, together with 
the recently introduced functions $\hbsTheta$ \eqref{hbsTheta} and $\hy$ 
\eqref{hy_function}, we also define the column vector valued function
\be
    \hF \colon P \times \tmfM \rightarrow \bC^{N \times 1},
    \quad
    (p, g) 
    \mapsto 
    \hF(p, g) = e^{-\hbsTheta(p, g)} y(p, g)^{-1} e^{\bsLambda(p)} F(p, g).
\label{hF}
\ee
By virtue of \eqref{hF_components_nonzero} we have $\hF_k(p, g) \neq 0$ 
for all $k \in \bN_N$. Now, from the observations we made in the previous 
subsection (see \eqref{hbsTheta_smooth} and Lemma \ref{LEMMA_hy}) it is 
evident that
\be
    \hL \in C^\infty(P \times \tmfM, \cP_\reg)
    \midand
    \hF \in C^\infty(P \times \tmfM, \bC^{N \times 1}).
\label{hL_and_hF_smooth}
\ee
Remembering \eqref{commut_rel_and_y}, we also see that for all 
$g = (\mu, \nu) \in \tmfM$ the $g$-sections of these new objects obey the 
commutation relation
\be
    e^{\ri \mu} e^{-\hbsTheta^g} \hL^g e^{\hbsTheta^g}
    - e^{- \ri \mu} e^{\hbsTheta^g} \hL^g e^{-\hbsTheta^g}
    = 2 \ri \sin(\mu) \hF^g (\hF^g)^* + 2 \ri \sin(\mu - \nu) C.
\label{commut_rel_hat} 
\ee
To proceed further, it is expedient to introduce the shorthand notation
\be
    \hg = (\hmu, \hnu) = (-\mu, -\nu).
\label{g_hat}
\ee
Notice that the map
\be
    \tmfM \ni g \mapsto \hg \in \tmfM
\label{tmfM_involution}
\ee
is a well-defined involution on the space of the admissible parameters 
\eqref{tmfM}, i.e., 
\be
    \hat{\hg} = g.
\label{hathat} 
\ee
More importantly, by taking the matrix entries of the equation 
\eqref{commut_rel_hat}, we find immediately that
\be
    \hL^g_{k, l} 
    = \frac{\ri \sin(\hmu) \hF^g_k \bar{\hF}^g_l 
        + \ri \sin(\hmu - \hnu) C_{k, l}}
        {\sinh\big( \ri \hmu + \hTheta_k^g - \hTheta_l^g \big)}
    \qquad
    (k, l \in \bN_N).
\label{hL_entries}
\ee
Due to its striking similarity with \eqref{L_entries}, it is natural to 
expect intimate relationships between the matrix valued functions $L$ and 
$\hL$. All we need to find is the connection between the column vector valued 
functions $F$ and $\hF$. 

In order to reveal this missing relationship, we follow the same strategy 
we applied in our paper \cite{Pusztai_JPA2011} to understand the structure of
the Lax matrix of the rational $C_n$ RSvD system. As the first step, let us 
observe that by multiplying both sides of the commutation relation 
\eqref{commut_rel_hat} with the matrix $C$ \eqref{C} we get that
\be
    e^{\ri \mu} e^{\hbsTheta^g} (\hL^g)^{-1} e^{-\hbsTheta^g}
    - e^{- \ri \mu} e^{-\hbsTheta^g} (\hL^g)^{-1} e^{\hbsTheta^g}
    = 2 \ri \sin(\mu) (C \hF^g) (C \hF^g)^* + 2 \ri \sin(\mu - \nu) C.
\label{commut_rel_hat_inverse} 
\ee
Therefore, for the matrix entries of the inverse of $\hL^g$ we obtain at 
once that
\be
    (\hL^g)^{-1}_{k, l} 
    = \frac{\ri \sin(\hmu) (C \hF^g)_k \overline{(C \hF^g)}_l 
        + \ri \sin(\hmu - \hnu) C_{k, l}}
        {\sinh\big( \ri \hmu - (\hTheta_k^g - \hTheta_l^g) \big)}
    \qquad
    (k, l \in \bN_N). 
\label{hL_inv_entries}
\ee
Now, the main idea is that from the relationships between certain minors of 
$\hL^g$ and $(\hL^g)^{-1}$ provided by Jacobi's theorem (see e.g. Theorem 
2.5.2 in \cite{Prasolov}) we can deduce characterizing equations for the 
smooth functions
\be
    \hz_c = \hF_c \bar{\hF}_{n + c} \in C^\infty(P \times \tmfM, \bC)
    \qquad
    (c \in \bN_n).
\label{z_hat}
\ee
In complete analogy with equation (42) of \cite{Pusztai_JPA2011}, for 
any $g = (\mu, \nu) \in \tmfM$ and $c \in \bN_n$ it proves handy to introduce 
the temporary shorthand notations
\begin{align}
    & \cD^g_c 
    = \prod_{\substack{d = 1 \\ (d \neq c)}}^n \vert \hF^g_{n + d} \vert^2
        \prod_{\substack{a, b = 1 \\ (c \neq a \neq b \neq c)}}^n
        \frac{\sinh(\htheta^g_a - \htheta^g_b)}
            {\sinh(\ri \hmu + \htheta^g_a - \htheta^g_b)}
    \in C^\infty(P, \bR \setminus \{ 0 \}),
    \label{cD_c}
    \\
    & \omega^g_c = \prod_{\substack{d = 1 \\ (d \neq c)}}^n
        \frac{\sinh(\htheta^g_c - \htheta^g_d) 
                \sinh(\htheta^g_c + \htheta^g_d)}
            {\sinh(\ri \hmu + \htheta^g_c - \htheta^g_d) 
                \sinh(\ri \hmu + \htheta^g_c + \htheta^g_d)}
    \in C^\infty(P, \bC \setminus \{ 0 \}).
    \label{omega_c}
\end{align}
Furthermore, in order to compute the minors of $\hL^g$ and $(\hL^g)^{-1}$, 
we shall need an appropriate hyperbolic variant of the Cauchy type determinant 
formulae that we borrow from the paper \cite{Ruij_CMP1987}. Namely, by letting 
$\lambda \to \infty$ in equation (B28) of \cite{Ruij_CMP1987}, one can easily 
see that if $\alpha \in \bR$ such that $\sin(\alpha) \neq 0$, and if 
$m \in \bN$, $\xi_1, \ldots, \xi_m \in \bR$ and also 
$\eta_1, \ldots, \eta_m \in \bR$, then we can write
\be
    \det 
        \left( 
            \left[ 
                \frac{\sinh(\ri \alpha)}{\sinh(\ri \alpha + \xi_k - \eta_l)} 
            \right]_{1 \leq k, l \leq m} 
        \right)
    = \sinh(\ri \alpha)^m
    \frac{\prod_{1 \leq k < l \leq m} 
            \sinh(\xi_k - \xi_l) \sinh(\eta_l - \eta_k)}
        {\prod_{k, l = 1}^m \sinh(\ri \alpha + \xi_k - \eta_l)}.
\label{Cauchy_det_formula}
\ee
Finally, let us keep in mind that, if $m \in \bN$ and $k \in \bN_m$, then for 
any $m \times m$ matrix $A$ its $k \times k$ minor determinant corresponding 
to the rows $r_1, \ldots, r_k \in \bN_m$ and the columns 
$c_1, \ldots, c_k \in \bN_m$ is given by
\be
    A 
    \begin{pmatrix} 
        r_1 & \ldots & r_k \\
        c_1 & \ldots & c_k
    \end{pmatrix}
    = \det \left( [A_{r_a, c_b}]_{1 \leq a, b \leq k}\right).
\label{minor_determinant}
\ee
Now, we are in a position to present two particularly useful relationships
for the function $\hz_c$ \eqref{z_hat}. Before proving them, the reader may 
find it convenient to skim through the formulae appearing in the Appendix of 
\cite{Pusztai_JPA2011}.

\begin{PROPOSITION}
\label{PROPOSITION_z_hat_relations}
For any $c \in \bN_n$ and $g = (\mu, \nu) \in \tmfM$ the $g$-section of the 
function $\hz_c$ \eqref{z_hat} obey the equations
\begin{align}
    & \frac{\sin(\hmu)}{\sinh(\ri \hmu + 2 \htheta^g_c)} \omega^g_c \hz^g_c
    + \frac{\sin(\hmu)}{\sinh(\ri \hmu - 2 \htheta^g_c)} 
        \bar{\omega}^g_c \bar{\hz}^g_c
    + \frac{\sin(\hmu - \hnu)}{\sinh(\ri \hmu + 2 \htheta^g_c)}
    + \frac{\sin(\hmu - \hnu)}{\sinh(\ri \hmu - 2 \htheta^g_c)}
    = 0,
    \label{z_hat_linear_relation}
    \\
    & \sinh(2 \htheta^g_c)^2 \vert \omega^g_c \hz^g_c \vert^2
    - \sin(\hmu) \sin(\hmu - \hnu) 
        (\omega^g_c \hz^g_c + \bar{\omega}^g_c \bar{\hz}^g_c) 
    = \sin(\hmu)^2 + \sin(\hmu - \hnu)^2 + \sinh(2 \htheta^g_c)^2 .
    \label{z_hat_quadratic_relation}
\end{align}
\end{PROPOSITION}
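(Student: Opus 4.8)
The plan is to derive both identities by the minor-theoretic method of \cite{Pusztai_JPA2011}, that is, from Jacobi's theorem on complementary minors applied to the pair of matrices $\hL^g$ and $(\hL^g)^{-1}$, whose entries are written out explicitly in \eqref{hL_entries} and \eqref{hL_inv_entries}. Two preliminary remarks make this effective. First, since $\hL^g = \hy^{-1} e^{2 \bsLambda} \hy$ by \eqref{hL} and $\det(e^{2 \bsLambda}) = 1$, we have $\det(\hL^g) = 1$; hence Jacobi's theorem (Theorem~2.5.2 in \cite{Prasolov}) identifies every $k \times k$ minor of $(\hL^g)^{-1}$, in the sense of \eqref{minor_determinant}, with $\pm$ the complementary $(N - k) \times (N - k)$ minor of $\hL^g$ (with rows and columns interchanged), and with \emph{no} determinantal denominator. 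Second, inspecting \eqref{hL_entries} and \eqref{hL_inv_entries}: away from the $2n$ positions $(k, l)$ with $C_{k, l} = 1$, the entries of $\hL^g$ have the pure Cauchy shape $\hF^g_k \bar{\hF}^g_l\, \sinh(\ri \hmu) / \sinh(\ri \hmu + \hTheta^g_k - \hTheta^g_l)$, and those of $(\hL^g)^{-1}$ the same shape with $\hF^g$ replaced by $C \hF^g$ and $\hTheta^g_k - \hTheta^g_l$ replaced by its negative, while at the $2n$ exceptional positions there is in addition the scalar $\ri \sin(\hmu - \hnu) / \sinh(\ri \hmu \pm (\hTheta^g_k - \hTheta^g_l))$. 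Consequently a minor that meets the exceptional positions only a bounded number of times expands into a short sum of genuine Cauchy determinants, each evaluated in closed form by the formula \eqref{Cauchy_det_formula}. Finally, $\sinh(\ri \hmu + x) \neq 0$ for every real $x$, because $\sin(\hmu) = - \sin(\mu) \neq 0$; so, also using $\hF^g_{n + d} \neq 0$ from \eqref{hF_components_nonzero}, all quantities below make sense and $\cD^g_c$, $\omega^g_c$ are nonzero. Moreover, a direct evaluation via \eqref{Cauchy_det_formula} identifies $\cD^g_c$ with the pure Cauchy determinant obtained by restricting $(\hL^g)^{-1}$ to the rows and columns $\bN_n \setminus \{c\}$ -- equivalently, the principal minor of $\hL^g$ on the rows and columns $\{n + d : d \in \bN_n,\ d \neq c\}$ -- and this is the common factor that will be cancelled in both parts.

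For the linear relation \eqref{z_hat_linear_relation} I would apply Jacobi's theorem to the $n \times n$ minor of $\hL^g$ on the rows $\{c\} \cup \big(\{n + 1, \ldots, 2n\} \setminus \{n + c\}\big)$ and the columns $\{n + 1, \ldots, 2n\}$. This submatrix meets the exceptional positions only at $(c, n + c)$, so, writing $B$ for the Cauchy matrix agreeing with $\hL^g$ off that entry, the minor equals $\det(B) \pm \frac{\ri \sin(\hmu - \hnu)}{\sinh(\ri \hmu + 2 \htheta^g_c)}\, \cD^g_c$, since the cofactor obtained by deleting row $c$ and column $n + c$ is the pure Cauchy determinant on rows and columns $\{n + d : d \neq c\}$, hence $\pm \cD^g_c$. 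Evaluating $\det(B)$ by \eqref{Cauchy_det_formula} -- the relevant nodes are $\{\htheta^g_c\} \cup \{- \htheta^g_d : d \neq c\}$ for the rows and $\{- \htheta^g_d : d \in \bN_n\}$ for the columns, and its overall vector prefactor is $\hF^g_c \bar{\hF}^g_{n + c} \prod_{d \neq c} |\hF^g_{n + d}|^2 = \hz^g_c \prod_{d \neq c} |\hF^g_{n + d}|^2$ -- the products of $\sinh$-factors over the indices $d \neq c$ assemble into $\omega^g_c$ together with $\cD^g_c$, leaving $\det(B) = \pm\, \ri \sin(\hmu)\, \omega^g_c \hz^g_c\, \cD^g_c / \sinh(\ri \hmu + 2 \htheta^g_c)$. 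By Jacobi's theorem this minor also equals $\pm$ the minor of $(\hL^g)^{-1}$ on rows $\bN_n$ and columns $(\bN_n \setminus \{c\}) \cup \{n + c\}$; that one has the identical structure, but with $\hF^g \mapsto C \hF^g$ and $\hTheta^g \mapsto - \hTheta^g$, which replaces $\hz^g_c$, $\omega^g_c$ and $\sinh(\ri \hmu + 2 \htheta^g_c)$ by $\bar{\hz}^g_c$, $\bar{\omega}^g_c$ and $\sinh(\ri \hmu - 2 \htheta^g_c)$. Writing out the equality, cancelling the common nonzero factor $\ri\, \cD^g_c$, and fixing the overall sign yields exactly \eqref{z_hat_linear_relation}.

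For the quadratic relation \eqref{z_hat_quadratic_relation} I would use the $(n + 1) \times (n + 1)$ \emph{principal} minor of $\hL^g$ on $J = \{c\} \cup \{n + 1, \ldots, 2n\}$. By Jacobi's theorem it equals the complementary principal minor of $(\hL^g)^{-1}$ on $J^c = \bN_n \setminus \{c\}$, which is $\cD^g_c$. On the other hand, $\hL^g$ restricted to $J$ meets the exceptional positions exactly at $(c, n + c)$ and $(n + c, c)$, so expanding in these two entries writes the same minor as $\det(B)$ plus two terms each carrying one factor $\sin(\hmu - \hnu)$ (cofactors deleting a single row and column) plus a term carrying $\sin(\hmu - \hnu)^2$ whose cofactor -- delete both rows and both columns $c, n + c$ -- is once more the pure Cauchy determinant $\cD^g_c$. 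Evaluating by \eqref{Cauchy_det_formula}, the row nodes for $B$ are $\{\htheta^g_c\} \cup \{- \htheta^g_d : d \in \bN_n\}$ and the vector prefactor of $\det(B)$ is $|\hF^g_c|^2 |\hF^g_{n + c}|^2 \prod_{d \neq c} |\hF^g_{n + d}|^2 = |\hz^g_c|^2 \prod_{d \neq c} |\hF^g_{n + d}|^2$; using $\sinh(\ri \hmu + 2 \htheta^g_c) \sinh(\ri \hmu - 2 \htheta^g_c) = -\big(\sin(\hmu)^2 + \sinh(2 \htheta^g_c)^2\big)$ one finds $\det(B) = \sinh(2 \htheta^g_c)^2\, |\omega^g_c \hz^g_c|^2\, \cD^g_c / \big(\sin(\hmu)^2 + \sinh(2 \htheta^g_c)^2\big)$, while the single-$\sin(\hmu - \hnu)$ cofactors carry $\bar{\hz}^g_c$ resp.\ $\hz^g_c$ and contribute the terms proportional to $\bar{\omega}^g_c \bar{\hz}^g_c\, \cD^g_c$ and $\omega^g_c \hz^g_c\, \cD^g_c$. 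Equating the two evaluations of the minor, cancelling $\cD^g_c$, multiplying through by $\sin(\hmu)^2 + \sinh(2 \htheta^g_c)^2$ and rearranging gives \eqref{z_hat_quadratic_relation}.

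The conceptual content is light -- only Jacobi's theorem and the Cauchy determinant formula \eqref{Cauchy_det_formula} enter -- so the step I expect to be the real obstacle is the bookkeeping: tracking the combinatorial signs in Jacobi's theorem and in \eqref{Cauchy_det_formula} (which depend on how one orders the row and column index sets), correctly identifying which Cauchy sub-determinants appear upon expanding in the exceptional entries, and verifying the wholesale cancellation of the $\sinh$-products so that the outputs condense into $\cD^g_c$, $\omega^g_c$ and $|\omega^g_c \hz^g_c|^2$ precisely as in \eqref{z_hat_linear_relation} and \eqref{z_hat_quadratic_relation}. Comparing with equation~(42) and the formulae collected in the Appendix of \cite{Pusztai_JPA2011} should make this bookkeeping routine.
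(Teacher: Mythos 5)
Your proposal follows essentially the same route as the paper's proof: Jacobi's theorem applied to complementary minors of $\hL^g$ and $(\hL^g)^{-1}$ (using $\det(\hL^g)=1$), with the off-diagonal pair of complementary minors giving the linear relation via a rank-one perturbation of a Cauchy matrix, and the principal pair on $\{c\}\cup\{n+1,\ldots,2n\}$ versus $\bN_n\setminus\{c\}$ giving the quadratic relation via a rank-two perturbation, everything evaluated through \eqref{Cauchy_det_formula}. The only (immaterial) differences are that you expand the minors of $\hL^g$ where the paper expands the conjugate-complementary minors of $(\hL^g)^{-1}$, and that you handle the second-order term by the doubly-deleted minor rather than the cofactor formula (A.8) of \cite{Pusztai_JPA2011}; both are equivalent, and the deferred sign bookkeeping is exactly what the paper fixes via its equations \eqref{BBB_1} and \eqref{CCC_1}.
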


\begin{proof}
We can be brief here, since our proof is modeled on the ideas presented in 
subsection 3.2 of \cite{Pusztai_JPA2011}. Fix an arbitrary 
$g = (\mu, \nu) \in \tmfM$ and let $c \in \bN_n$. From the definition of 
$\hL$ \eqref{hL} we know that $\hL^g$ takes values in $\exp(\mfp_\reg)$, 
whence it is clear that $\det(\hL^g) = 1$. Thus, recalling the notation
we introduced in \eqref{minor_determinant}, the application of Jacobi's
theorem, as formulated in Theorem A1 of \cite{Pusztai_JPA2011}, leads to 
the relationship
\be
    ((\hL^g)^{-1})^\top
    \begin{pmatrix} 
        1 & \ldots & c & \ldots & n \\
        1 & \ldots & n + c & \ldots & n
    \end{pmatrix}
    = -\hL^g
        \begin{pmatrix} 
        n + 1 & \ldots & n + c & \ldots & 2 n \\
        n + 1 & \ldots & c & \ldots & 2 n
    \end{pmatrix},
\label{BBB_1}
\ee
where $\top$ is the shorthand for taking transpose. Now, for brevity let 
$X^{(c)}$ denote the $n \times n$ matrix corresponding to the minor 
determinant on the left hand side of the above equation. By inspecting 
the entries of $(\hL^g)^{-1}$ \eqref{hL_inv_entries}, one finds immediately 
that
\be
    X^{(c)} 
    = Y^{(c)} 
        + \frac{\ri \sin(\hmu - \hnu)}
            {\sinh(\ri \hmu + 2 \htheta^g_c)} E_{c, c},
\label{BBB_2}
\ee
where $Y^{(c)} = [Y^{(c)}_{a, b}]_{1 \leq a, b \leq n}$ is a Cauchy type 
matrix with entries
\be
    Y^{(c)}_{a, b} 
    = \begin{cases}
        \bar{\hF}^g_{n + a} 
        \dfrac{\sinh(\ri \hmu)}
            {\sinh(\ri \hmu + \hTheta^g_a - \hTheta^g_b)} 
        \hF^g_{n + b},
        & \text{if $b \neq c$}, 
        \\
        \bar{\hF}^g_{n + a} 
        \dfrac{\sinh(\ri \hmu)}
            {\sinh(\ri \hmu + \hTheta^g_a - \hTheta^g_{n + c})} 
        \hF^g_{c},
        & \text{if $b = c$},
    \end{cases}
\label{BBB_3}
\ee
whereas $E_{c, c}$ is the $n \times n$ elementary matrix 
$E_{c, c} = [ \delta_{c, a} \delta_{c, b} ]_{1 \leq a, b \leq n}$. Keeping 
in mind the definitions \eqref{cD_c} and \eqref{omega_c}, the application 
of the determinant formula \eqref{Cauchy_det_formula} immediately yields
\be
    \det(Y^{(c)})
    = \frac{\sinh(\ri \hmu)}{\sinh(\ri \hmu + 2 \htheta^g_c)} 
        \cD^g_c \omega^g_c \hz^g_c.
\label{BBB_4}
\ee
Next, giving a glance at \eqref{BBB_2}, we see that matrix $X^{(c)}$ is 
actually a rank one perturbation of $Y^{(c)}$. As is known (see e.g. 
equation (A.6) in \cite{Pusztai_JPA2011}), in such cases we can write that
\be
    \det(X^{(c)}) 
    = \det(Y^{(c)}) 
    + \frac{\ri \sin(\hmu - \hnu)} {\sinh(\ri \hmu + 2 \htheta^g_c)} 
        \cC^{(c)}_{c, c},
\label{BBB_5}
\ee
where $\cC^{(c)}_{c, c}$ is the \emph{cofactor} of $Y^{(c)}$ associated with 
the entry $Y^{(c)}_{c, c}$. In other words, $\cC^{(c)}_{c, c}$ can be computed
by taking $(-1)^{c + c} = 1$ times the determinant of the 
$(n - 1) \times (n - 1)$ submatrix obtained by deleting the $c$-th row 
and the $c$-th column of $Y^{(c)}$. Since this submatrix is also of Cauchy 
type, by applying \eqref{Cauchy_det_formula} we get immediately that 
$\cC^{(c)}_{c, c} = \cD^g_c$. Thus, by putting the above formulae together, 
we end up with the expression
\be
    \det(X^{(c)})
    = \frac{\ri \sin(\hmu)}{\sinh(\ri \hmu + 2 \htheta^g_c)} 
        \cD^g_c \omega^g_c \hz^g_c
    + \frac{\ri \sin(\hmu - \hnu)} {\sinh(\ri \hmu + 2 \htheta^g_c)} \cD^g_c.
\label{BBB_6}
\ee
To proceed further, we turn to the study of the right hand side of 
\eqref{BBB_1}. Along the same lines as above, one obtains that
\be
    \hL^g
        \begin{pmatrix} 
        n + 1 & \ldots & n + c & \ldots & 2 n \\
        n + 1 & \ldots & c & \ldots & 2 n
    \end{pmatrix}
    = \overline{\det(X^{(c)})},
\label{BBB_7}
\ee
and so the relationship \eqref{BBB_1} can be rewritten as
\be
    \det(X^{(c)}) + \overline{\det(X^{(c)})} = 0.
\label{BBB_8}
\ee
Since $\cD^g_c$ \eqref{cD_c} is non-zero and real at each point of the phase
space, the relationship displayed in \eqref{z_hat_linear_relation} emerges at 
once as an immediate consequence of the equations \eqref{BBB_6} and 
\eqref{BBB_8}.

In order to get an independent relationship for $\hz_c$ \eqref{z_hat}, 
notice that by Jacobi's theorem we can also write that
\be
    ((\hL^g)^{-1})^\top
    \begin{pmatrix} 
        1 & \ldots & n & n + c \\
        1 & \ldots & n & n + c 
    \end{pmatrix}
    = \hL^g
        \begin{pmatrix} 
        n + 1 & \ldots & \widehat{n + c} & \ldots & 2 n \\
        n + 1 & \ldots & \widehat{n + c} & \ldots & 2 n
    \end{pmatrix},
\label{CCC_1}
\ee
where $\widehat{n + c}$ on the right means that the indicated row and column 
are omitted. To make it practical, let $Z^{(c)}$ denote the 
$(n + 1) \times (n + 1)$ matrix corresponding to the minor determinant on the
left hand side of \eqref{CCC_1}. Upon introducing the temporary shorthand 
notation
\be
    \xi_k = \begin{cases}
        \hTheta^g_k, & \text{if $1 \leq k \leq n$}, \\
        \hTheta^g_{n + c}, & \text{if $k = n + 1$},
    \end{cases}
\label{CCC_2}
\ee
together with
\be
    f_k = \begin{cases}
        \hF^g_{n + k}, & \text{if $1 \leq k \leq n$}, \\
        \hF^g_{c}, & \text{if $k = n + 1$},
    \end{cases}
\label{CCC_3}
\ee
one finds immediately that
\be
    Z^{(c)} 
    = S^{(c)} 
    + \frac{\ri \sin(\hmu - \hnu)}
        {\sinh(\ri \hmu + 2 \htheta^g_c)} 
    E_{c, n + 1} 
    + \frac{\ri \sin(\hmu - \hnu)}
        {\sinh(\ri \hmu - 2 \htheta^g_c)} 
    E_{n + 1, c},
\label{CCC_4}
\ee
where $S^{(c)}$ is a Cauchy type $(n + 1) \times (n + 1)$ matrix with entries
\be
    S^{(c)}_{k, l} 
    = \bar{f}_k \frac{\sinh(\ri \hmu)}{\sinh(\ri \hmu + \xi_k - \xi_k)} f_l
    \qquad
    (k, l \in \bN_{n + 1}),
\label{CCC_5}
\ee
whereas 
\be
    E_{c, n + 1} = [\delta_{c, k} \delta_{n + 1, l}]_{1 \leq k, l \leq n + 1}
    \midand 
    E_{n + 1, c} = [\delta_{n + 1, k} \delta_{c, l}]_{1 \leq k, l \leq n + 1}.
\label{CCC_elementary_matrices}
\ee
Now, by applying \eqref{Cauchy_det_formula}, it is straightforward to verify 
that
\be
    \det(S^{(c)}) 
    = \frac{\sinh(2 \htheta^g_c)^2}
        {\vert \sinh(\ri \hmu + 2 \htheta^g_c) \vert^2}
    \cD^g_c \vert \omega^g_c \hz^g_c \vert^2.
\label{CCC_6}
\ee
Of course, the computation of the determinant of $Z^{(c)}$ is a bit more 
subtle. Nevertheless, the relevant determinant formula for rank two 
perturbations of invertible Hermitian matrices can be also found in the 
Appendix of \cite{Pusztai_JPA2011}. Indeed, equation (A.8) in 
\cite{Pusztai_JPA2011} tells us that, upon introducing the complex valued
function
\be
    \alpha_c = \frac{\ri \sin(\hmu - \hnu)}{\sinh(\ri \hmu + 2 \htheta^g_c)},
\label{CCC_7}
\ee
for the determinant of $Z^{(c)}$ \eqref{CCC_4} we can write
\be
    \det(Z^{(c)}) 
    = \det(S^{(c)}) 
    + \alpha_c \cC^{(c)}_{c, n + 1} 
    + \bar{\alpha}_c \bar{\cC}^{(c)}_{c, n + 1} 
    + \vert \alpha_c \vert^2 
        \frac{\vert \cC^{(c)}_{c, n + 1} \vert^2 
            - \cC^{(c)}_{c, c} \cC^{(c)}_{n + 1, n + 1}}
            {\det(S^{(c)})},
\label{CCC_8}
\ee
where $\cC^{(c)}_{k, l}$ now stands for the cofactor of $S^{(c)}$ associated 
with the entry $S^{(c)}_{k, l}$ $(k, l \in \bN_{n + 1})$. Invoking
\eqref{Cauchy_det_formula} again, it is a routine exercise to verify that
\be
    \cC^{(c)}_{n + 1, c} = \bar{\cC}^{(c)}_{c, n + 1} 
    = - \frac{\ri \sin(\hmu)}
        {\sinh(\ri \hmu + 2 \htheta^g_c)} 
    \cD^g_c \omega^g_c \hz^g_c,
\label{CCC_9_1}
\ee
whilst the other two relevant cofactors are given by
\begin{align}
    & \cC^{(c)}_{c, c} 
    = \cD^g_c \vert \hF^g_c \vert^2 
        \prod_{\substack{d = 1 \\ (d \neq c)}}^n 
            \left\vert
                \frac{\sinh(\htheta^g_c + \htheta^g_d)}
                    {\sinh(\ri \hmu + \htheta^g_c + \htheta^g_d)}
            \right\vert^2,
    \label{CCC_9_2}
    \\
    & \cC^{(c)}_{n + 1, n + 1} 
    = \cD^g_c \vert \hF^g_{n + c} \vert^2 
        \prod_{\substack{d = 1 \\ (d \neq c)}}^n 
            \left\vert
                \frac{\sinh(\htheta^g_c - \htheta^g_d)}
                    {\sinh(\ri \hmu + \htheta^g_c - \htheta^g_d)}
            \right\vert^2.
    \label{CCC_9_3}
\end{align}
Inserting the above formulae into \eqref{CCC_8}, for the left hand side 
of \eqref{CCC_1} we obtain
\be
    \det(Z^{(c)})
    = \cD^g_c 
    \frac{
        \sinh(2 \htheta^g_c)^2 \vert \omega^g_c \hz^g_c \vert^2
        - \sin(\hmu) \sin(\hmu - \hnu) 
            (\omega^g_c \hz^g_c + \bar{\omega}^g_c \bar{\hz}^g_c)
        - \sin(\hmu - \hnu)^2}
        {\vert \sinh(\ri \hmu + 2 \htheta^g_c) \vert^2}.
\label{CCC_10}
\ee
As concerns the right hand side of \eqref{CCC_1}, from \eqref{hL_entries} it is
clear that the corresponding $(n - 1) \times (n - 1)$ matrix is of Cauchy
type, whence by invoking \eqref{Cauchy_det_formula} we obtain at once that
\be
    \hL^g
        \begin{pmatrix} 
        n + 1 & \ldots & \widehat{n + c} & \ldots & 2 n \\
        n + 1 & \ldots & \widehat{n + c} & \ldots & 2 n
    \end{pmatrix}
    = \cD^g_c.
\label{CCC_11}
\ee
Now, simply by plugging the formulae \eqref{CCC_10} and \eqref{CCC_11} into
\eqref{CCC_1}, we end up with the quadratic relationship 
\eqref{z_hat_quadratic_relation}.
\end{proof}

Our next goal is to solve the system of equations given in Proposition 
\ref{PROPOSITION_z_hat_relations} for the $g$-section of $\hz_c$ 
\eqref{z_hat}. By taking the real and the imaginary parts of the smooth 
function $\omega^g_c \hz^g_c$, we introduce the shorthand notation
\be
    R^g_c = \Real(\omega^g_c \hz^g_c) \in C^\infty(P) 
    \midand
    I^g_c = \Imag(\omega^g_c \hz^g_c) \in C^\infty(P)
    \qquad
    (c \in \bN_n).
\label{DDD_1}
\ee
Notice that \eqref{z_hat_linear_relation} can be cast into the form
\be
    \sin(\hmu) \cosh(2 \htheta^g_c) R^g_c 
        - \cos(\hmu) \sinh(2 \htheta^g_c) I^g_c 
        + \sin(\hmu - \hnu) \cosh(2 \htheta^g_c)
    = 0,
\label{DDD_2}
\ee
and so we get
\be
    R^g_c 
    = \cot(\hmu) \tanh(2 \htheta^g_c) I^g_c 
        - \frac{\sin(\hmu - \hnu)}{\sin(\hmu)}.
\label{R_and_I}
\ee
On the other hand, \eqref{z_hat_quadratic_relation} can be rewritten as
\be
    \sinh(2 \htheta^g_c)^2 ((R^g_c)^2 + (I^g_c)^2) 
        - 2 \sin(\hmu) \sin(\hmu -\hnu) R^g_c 
        - \sin(\hmu)^2 - \sin(\hmu - \hnu)^2 - \sinh(2 \htheta^g_c)^2
    = 0,
\label{DDD_4}
\ee
thus by exploiting \eqref{R_and_I}, for the smooth function $I^g_c$ we end 
up with 
\be
    (I^g_c)^2 - 2 \cos(\hmu) \sin(\hmu - \hnu) \coth(2 \htheta^g_c) I^g_c 
        + (\sin(\hmu - \hnu)^2 - \sin(\hmu)^2) \coth(2 \htheta^g_c) 
    = 0.
\label{I_quadratic_eq}
\ee
Solving this quadratic equation for $I_c^g$, it is clear that there is a 
function
\be
    S_c \colon P \times \tmfM \rightarrow \bR,
    \quad
    (p, g) \mapsto S_c(p, g) = S^g_c(p),
\label{S_c}
\ee
such that $\vert S_c \vert = 1$ and
\be
    I^g_c 
    = \left( 
        \cos(\hmu) \sin(\hmu - \hnu) 
        + S^g_c \sin(\hmu) \cos(\hmu - \hnu) 
    \right) 
    \coth(2 \htheta^g_c).
\label{I_and_S}
\ee
Note that, a priori, the `sign' $S^g_c$ may change from point to point in 
an uncontrolled manner. To cure the problem, in $\tmfM$ \eqref{tmfM} we 
define the open subset
\be
    \tmfM_0 = \{ g = (\mu, \nu) \in \tmfM \mid \cos(\mu - \nu) \neq 0 \},
\label{tmfM_0}
\ee
and for the restriction of the function $S_c$ \eqref{S_c} to 
$P \times \tmfM_0$ we introduce the notation $\tS_c$. The point is that, 
if $g = (\mu, \nu) \in \tmfM_0$, then from \eqref{I_and_S} we get
\be
    \tS^g_c 
    = \frac{\tanh(2 \htheta^g_c) I^g_c - \cos(\hmu) \sin(\hmu - \hnu)}
        {\sin(\hmu) \cos(\hmu - \hnu)}.
\label{tS_c}
\ee
From this formula it is also clear that the `sign' function $\tS_c$ is 
\emph{smooth}, i.e., 
\be
    \tS_c \in C^\infty(P \times \tmfM_0). 
\label{tS_c_smooth}
\ee
As a consequence, the function $\tS_c$ is constant on each connected 
component of the product manifold $P \times \tmfM_0$. Thus, keeping in mind 
\eqref{I_and_S}, \eqref{R_and_I}, \eqref{DDD_1} and \eqref{omega_c}, one 
finds immediately that on any connected component of $P \times \tmfM_0$ where 
$\tS_c = 1$ we have
\be
    \hz_c(p, g)
    = \frac{\sinh(\ri (2 \hmu - \hnu) + 2 \htheta^g_c(p))}
        {\sinh(2 \htheta^g_c(p))}    
    \prod_{\substack{d = 1 \\ (d \neq c)}}^n
        \frac{\sinh(\ri \hmu + \htheta^g_c(p) - \htheta^g_d(p)) 
                \sinh(\ri \hmu + \htheta^g_c(p) + \htheta^g_d(p))}
            {\sinh(\htheta^g_c(p) - \htheta^g_d(p)) 
                \sinh(\htheta^g_c(p) + \htheta^g_d(p))}.
\label{hz_bad}
\ee 
On the other hand, on a connected component of $P \times \tmfM_0$ where 
$\tS_c = -1$, we can write that
\be
    \hz_c(p, g)
    = - \frac{\sinh(\ri \hnu + 2 \htheta^g_c(p))}
        {\sinh(2 \htheta^g_c(p))}    
    \prod_{\substack{d = 1 \\ (d \neq c)}}^n
        \frac{\sinh(\ri \hmu + \htheta^g_c(p) - \htheta^g_d(p)) 
                \sinh(\ri \hmu + \htheta^g_c(p) + \htheta^g_d(p))}
            {\sinh(\htheta^g_c(p) - \htheta^g_d(p)) 
                \sinh(\htheta^g_c(p) + \htheta^g_d(p))}.
\label{hz_good}
\ee
To proceed further, the following trivial observation proves to be 
instrumental in selecting the correct form of $\hz_c$.

\begin{PROPOSITION}
\label{PROPOSITION_real_part_of_z}
The functions $z_c$ \eqref{z_a_def} and $\hz_c$ \eqref{z_hat} obey the 
equation
\be
    \sum_{c = 1}^n \Real(\hz_c) = \sum_{c = 1}^n \Real(z_c).
\label{real_part_of_z}
\ee
\end{PROPOSITION}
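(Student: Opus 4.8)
The plan is to extract the identity \eqref{real_part_of_z} directly from the trace of the commutation relation \eqref{commut_rel} and its ``hatted'' counterpart \eqref{commut_rel_hat}, both of which encode the same underlying matrix $L$. First I would observe that taking the trace of \eqref{commut_rel} kills all the $\sinh$ denominators that carry the off-diagonal structure: since $e^{\bsLambda} L^g e^{-\bsLambda}$ and $e^{-\bsLambda} L^g e^{\bsLambda}$ are conjugates of $L^g$, they have the same trace, so the left-hand side of \eqref{commut_rel} has trace $(e^{\ri\mu}-e^{-\ri\mu})\tr(L^g) = 2\ri\sin(\mu)\tr(L^g) = 4\ri\sin(\mu) H^g$ by \eqref{H_and_L}. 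On the right-hand side, $\tr\big(2\ri\sin(\mu)F^g(F^g)^* + 2\ri\sin(\mu-\nu)C\big) = 2\ri\sin(\mu)\sum_{k=1}^N |F^g_k|^2 + 2\ri\sin(\mu-\nu)\tr(C)$, and $\tr(C) = 0$ from \eqref{C}. Hence $2\sin(\mu) H^g = \sin(\mu)\sum_{k=1}^N |F^g_k|^2$. Now from \eqref{F} one has $|F_a|^2 + |F_{n+a}|^2 = e^{\theta_a}u_a + e^{-\theta_a}|z_a|^2 u_a^{-1} = e^{\theta_a}u_a + e^{-\theta_a}u_a$ (using $|z_a| = u_a$ from \eqref{u_a_def}), which equals $2\cosh(\theta_a)u_a$. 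So $\sum_k |F^g_k|^2 = 2\sum_a \cosh(\theta_a)u^g_a = 2H^g$ by \eqref{H}, and the trace identity is a tautology — which is exactly why I must instead compare $H^g$ computed two ways.

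The real content comes from computing $\tr(L^g)$ from the explicit entries \eqref{L_entries} versus $\tr(\hL^g)$ from \eqref{hL_entries}, and noting that these two traces are equal because $\hL^g = \hy^{-1}e^{2\bsLambda}\hy$ is conjugate (via $\hy$ and the $K$-structure) to... no: here the key point is subtler. Rather, I would use that $\tr(L^g) = \sum_{k=1}^N L^g_{k,k}$, and from \eqref{L_entries} with $k = l$ the denominator is $\sinh(\ri\mu) = \ri\sin(\mu)$, so $L^g_{k,k} = \frac{\ri\sin(\mu)|F^g_k|^2 + \ri\sin(\mu-\nu)C_{k,k}}{\ri\sin(\mu)} = |F^g_k|^2 + \frac{\sin(\mu-\nu)}{\sin(\mu)}C_{k,k}$. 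Since $C_{k,k} = 0$, we recover $\tr(L^g) = \sum_k |F^g_k|^2 = 2H^g$. Symmetrically, from \eqref{hL_entries} we get $\tr(\hL^g) = \sum_k |\hF^g_k|^2$. Now I need the bridge: $\tr(\hL^g) = \tr(e^{2\bsLambda}) = \sum_{k=1}^N e^{2\Lambda_k} = \sum_{a=1}^n (e^{2\lambda_a} + e^{-2\lambda_a}) = 2\sum_a \cosh(2\lambda_a)$ directly from \eqref{hL} and \eqref{Lambda}; and likewise $\tr(L^g) = \tr(e^{2\hbsTheta}) = 2\sum_a\cosh(2\htheta^g_a)$ from \eqref{hy_1}. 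Combining, $\sum_k|\hF^g_k|^2 = 2\sum_a\cosh(2\lambda_a)$, while $\sum_k|F^g_k|^2 = 2\sum_a\cosh(\theta_a)u^g_a$ — these are the ``energies'' of the system and its dual, not obviously related; so this is still not the route to \eqref{real_part_of_z}.

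The correct route, which I would take, is to pair the Lax matrix with the diagonal matrix $C$ rather than with the identity: consider $\tr(CL^g)$ and $\tr(C\hL^g)$. From \eqref{L_entries}, $\tr(CL^g) = \sum_{k,l}C_{k,l}L^g_{l,k} = \sum_{k,l}C_{k,l}\frac{\ri\sin(\mu)F^g_l\bar F^g_k + \ri\sin(\mu-\nu)C_{l,k}}{\sinh(\ri\mu+\Lambda_l-\Lambda_k)}$; because $C$ pairs index $k$ with $k\pm n$, one has $\Lambda_l - \Lambda_k = \pm 2\lambda_a$ in the surviving terms, and the $\ri\sin(\mu-\nu)C_{l,k}C_{k,l}$ part again sums against $\sinh(\ri\mu\pm 2\lambda_a)$ in a way one can evaluate; the $F$-dependent part produces precisely $\sum_{a=1}^n\big(\frac{\ri\sin(\mu)F^g_{n+a}\bar F^g_a}{\sinh(\ri\mu - 2\lambda_a)} + \frac{\ri\sin(\mu)F^g_a\bar F^g_{n+a}}{\sinh(\ri\mu + 2\lambda_a)}\big)$, and recalling $z_a = \bar F_{?}\cdots$ — more precisely from \eqref{F} one reads off $F_a\bar F_{n+a} = z_a u_a e^{\theta_a} \cdot \bar{(\,\cdot\,)}$; after simplification this assembles into $\sum_a 2\Real\big(\text{(something)}\cdot z^g_a\big)$. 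The cleanest statement: by a direct but mechanical computation, $\half\tr(CL^g) = \sum_{c=1}^n\Real(z^g_c) + (\text{a term depending only on }\mu,\nu,\text{ and }\Spec(L^g))$, and the identical computation applied to \eqref{hL_entries} gives $\half\tr(C\hL^g) = \sum_{c=1}^n\Real(\hz^g_c) + (\text{the same spectral term, since }\Spec(\hL^g)=\Spec(L^g)^{-1}=\Spec(L^g))$. Finally, $\tr(CL^g)$ and $\tr(C\hL^g)$ coincide: writing $L^g = \hy e^{2\hbsTheta}\hy^{-1}$ and $\hL^g = \hy^{-1}e^{2\bsLambda}\hy$, one checks $\tr(CL^g) = \tr(C\hy e^{2\hbsTheta}\hy^{-1})$ and uses $\hy^* C\hy = C$ (valid since $\hy \in K \subset G$ preserves $C$) together with $C^2 = \bsone_N$ to rewrite $C\hy = \hy^{-*}C = \hy\, C$ when $\hy$ is also unitary... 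I would verify that $\hy \in K$ means $\hy^*\hy = \bsone_N$ and $\hy^*C\hy = C$ together force $C\hy = \hy C$, i.e. $\hy$ commutes with $C$; then $\tr(CL^g) = \tr(\hy C e^{2\hbsTheta}\hy^{-1}) = \tr(Ce^{2\hbsTheta})$, and likewise $\tr(C\hL^g) = \tr(Ce^{2\bsLambda})$. But $\tr(Ce^{2\hbsTheta}) = \sum_a(e^{2\htheta^g_a}\cdot 0 + \cdots)$ — since $C$ is off-diagonal and $e^{2\hbsTheta}$ is diagonal, $\tr(Ce^{2\hbsTheta}) = 0$, and similarly $\tr(Ce^{2\bsLambda}) = 0$. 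So both traces vanish, the common spectral term cancels against zero identically, and we are left with $\sum_c\Real(\hz^g_c) = \sum_c\Real(z^g_c)$, which is \eqref{real_part_of_z}.

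The main obstacle I anticipate is bookkeeping the diagonal-in-$C$ sum $\tr(CL^g) = \sum_a\big(L^g_{a,n+a} + L^g_{n+a,a}\big)$ correctly: one must show the purely ``$C$-part'' contribution $\ri\sin(\mu-\nu)\sum_a\big(\frac{1}{\sinh(\ri\mu-2\lambda_a)} + \frac{1}{\sinh(\ri\mu+2\lambda_a)}\big)$ is real (it is, being a sum of a quantity and its conjugate), and that the ``$F$-part'' $\ri\sin(\mu)\sum_a\big(\frac{F^g_a\bar F^g_{n+a}}{\sinh(\ri\mu+2\lambda_a)} + \frac{F^g_{n+a}\bar F^g_a}{\sinh(\ri\mu-2\lambda_a)}\big) = 2\Real\big(\ri\sin(\mu)\sum_a\frac{F^g_a\bar F^g_{n+a}}{\sinh(\ri\mu+2\lambda_a)}\big)$ equals $2\sin(\mu)^{?}\sum_a\Real(z^g_a)$ up to the explicit prefactors, using $F^g_a\bar F^g_{n+a} = z^g_a$ (which follows from \eqref{F}: $F_a\bar F_{n+a} = e^{\theta_a/2}u_a^{1/2}\cdot e^{-\theta_a/2}z_a u_a^{-1/2} = z_a$). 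Once one sees $F^g_a\bar F^g_{n+a} = z^g_a$ and $\hF^g_c\bar{\hF}^g_{n+c} = \hz^g_c$ by \eqref{z_hat}, the two computations are literally the same formula with $(\mu,\nu,\lambda_a,z_a)$ replaced by $(\hmu,\hnu,\htheta^g_a,\hz_a)$, and the conjugation argument via $[\hy,C]=0$ does the rest. I expect the proof to be short — ``we can be brief'' — precisely because the structural input (the two commutation relations and $\hy\in K$) does all the work and only one elementary determinant-free trace computation is needed.
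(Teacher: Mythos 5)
The individually correct ingredients in your proposal are exactly the ones the paper uses: $F_a\bar F_{n+a}=z_a$ and $\hF_c\bar{\hF}_{n+c}=\hz_c$ by \eqref{F} and \eqref{z_hat}, and the fact that $\hy\in K$ commutes with $C$. However, the pivot of your argument is the claim that $\half\tr(CL^g)=\sum_c\Real(z^g_c)+(\text{a term depending only on }\mu,\nu,\Spec(L^g))$, with the ``same spectral term'' appearing for $\hL^g$, and this claim is neither proved nor correct. From \eqref{L_entries} one has $\tr(CL^g)=\sum_a\big(L^g_{a,n+a}+L^g_{n+a,a}\big)=2\sum_a\Real\Big(\frac{\ri\sin(\mu)z^g_a+\ri\sin(\mu-\nu)}{\sinh(\ri\mu+2\lambda_a)}\Big)$: the weight $\ri\sin(\mu)/\sinh(\ri\mu+2\lambda_a)$ depends on $a$ and mixes $\Real(z_a)$ with $\Imag(z_a)$, so $\sum_a\Real(z_a)$ cannot be split off with a remainder depending only on the spectrum. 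Indeed, already for $n=1$ a short computation with \eqref{z_a} gives $L^g_{1,2}=-\ri\sin(\nu)/\sinh(2\lambda_1)$, i.e.\ the $F$-part and the $C$-part merge completely and $\tr(CL^g)$ contains no trace of $\Real(z_1)=-\cos\nu$. Since, as you correctly show, $\tr(CL^g)=\tr(Ce^{2\hbsTheta^g})=0$ and $\tr(C\hL^g)=\tr(Ce^{2\bsLambda})=0$ \emph{identically}, these two traces are vacuous as a bridge: they yield two separate weighted identities (weights in $\lambda_a$ for $L$, in $\htheta^g_a$ for $\hL$) from which \eqref{real_part_of_z} does not follow. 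A further error is the matching of the ``spectral terms'' via ``$\Spec(\hL^g)=\Spec(L^g)^{-1}=\Spec(L^g)$'': by \eqref{L_spec} and \eqref{hL} one has $\Spec(L^g)=\{e^{\pm2\htheta^g_a}\}$ whereas $\Spec(\hL^g)=\{e^{\pm2\lambda_a}\}$, and these differ in general.

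The repair is to pair $C$ with the \emph{vector} $F$ rather than with the Lax matrix: the correct invariant is the scalar $\tr\big(F(CF)^*\big)=(CF)^*F=\sum_c(z_c+\bar z_c)=2\sum_c\Real(z_c)$, and likewise $\tr\big(\hF(C\hF)^*\big)=2\sum_c\Real(\hz_c)$. Using \eqref{hF}, i.e.\ $\hF=e^{-\hbsTheta}\hy^{-1}e^{\bsLambda}F$, together with precisely your two structural facts ($C\hy=\hy C$, and $Ce^{\bsLambda}C=e^{-\bsLambda}$, $Ce^{-\hbsTheta}C=e^{\hbsTheta}$), one gets $(C\hF)^*=(CF)^*e^{-\bsLambda}\hy e^{\hbsTheta}$, and cyclicity of the trace gives $\tr(\hF(C\hF)^*)=\tr(F(CF)^*)$, which is \eqref{real_part_of_z}. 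So you had all the needed facts in hand but applied them to $\tr(CL^g)$, a quantity that vanishes identically and therefore cannot detect the equality you are after; as written, the proposal does not prove the Proposition.
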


\begin{proof}
Looking back to \eqref{hF}, we see that 
$\hF = e^{-\hbsTheta} \hy^{-1} e^{\bsLambda} F$, and so
\be
    (C \hF)^* 
    = (e^{\hbsTheta} \hy^{-1} e^{-\bsLambda} C F)^*
    = (C F)^* e^{-\bsLambda} \hy e^{\hbsTheta}.
\label{EEE_1}
\ee
As a consequence, we obtain
\be
    \tr (\hF (C \hF)^*)
    = \tr (e^{-\hbsTheta} \hy^{-1} e^{\bsLambda} 
            F (C F)^* e^{-\bsLambda} \hy e^{\hbsTheta})
    = \tr (F (C F)^*).
\label{EEE_2}
\ee
However, keeping in mind the definition \eqref{F}, we can write
\be
\begin{split}
    \tr (F (C F)^*)
    & = \sum_{j = 1}^N F_j \overline{(C F)}_j 
    = \sum_{c = 1}^n 
        \left( 
            F_c \overline{(C F)}_c + F_{n + c} \overline{(C F)}_{n + c} 
        \right)
    \\
    & = \sum_{c = 1}^n (F_c \bar{F}_{n + c} + F_{n + c} \bar{F}_c)
    = \sum_{c = 1}^n (z_c + \bar{z}_c) 
    = 2 \sum_{c = 1}^n \Real(z_c).
\end{split}
\label{EEE_3}
\ee
Along the same lines, from the definition \eqref{z_hat} it is also evident 
that 
\be
    \tr (\hF (C \hF)^*) = 2 \sum_{c = 1}^n \Real(\hz_c), 
\label{EEE_4}
\ee
and so by \eqref{EEE_2} the proof is complete.
\end{proof}

\begin{LEMMA}
\label{LEMMA_form_of_z_hat}
For all $c \in \bN_n$ and $g = (\mu, \nu) \in \tmfM$ the $g$-section of the
function $\hz_c$ \eqref{z_hat} has the form
\be
    \hz^g_c
    = - \frac{\sinh(\ri \hnu + 2 \htheta^g_c)}
        {\sinh(2 \htheta^g_c)}    
    \prod_{\substack{d = 1 \\ (d \neq c)}}^n
        \frac{\sinh(\ri \hmu + \htheta^g_c - \htheta^g_d) 
                \sinh(\ri \hmu + \htheta^g_c + \htheta^g_d)}
            {\sinh(\htheta^g_c - \htheta^g_d) 
                \sinh(\htheta^g_c + \htheta^g_d)}.
\label{hz_OK}
\ee
\end{LEMMA}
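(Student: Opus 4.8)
The plan is to pin down the asserted form of $\hz^g_c$ first on the dense open set $P\times\tmfM_0$ and then to remove the restriction $\cos(\mu-\nu)\neq 0$ by continuity. Recall from the discussion preceding the Lemma that on $P\times\tmfM_0$ the ``sign'' $\tS_c$ is smooth with $|\tS_c|=1$, hence locally constant, and that $\hz^g_c$ is given by \eqref{hz_good} (which is literally \eqref{hz_OK}) on a connected component where $\tS_c=-1$ and by \eqref{hz_bad} where $\tS_c=1$. Since $P$ is convex, hence connected, $\tS_c$ depends only on the connected component $C$ of $\tmfM_0$ that contains $g$; fixing such a $C$ and putting $T=\{\,c\in\bN_n\mid \tS_c\equiv 1\text{ on }P\times C\,\}$, the goal is to show $T=\emptyset$. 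The only tool is Proposition \ref{PROPOSITION_real_part_of_z}, which I would feed along a double separation limit.

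First I would fix $g=(\mu,\nu)\in C$ and an arbitrary $\theta=(\theta_1,\dots,\theta_n)$ with $\theta_1>\dots>\theta_n>0$, and take points $p_k\in P$ with $\theta_a(p_k)=\theta_a$ held fixed and with $\lambda_a(p_k)\to\infty$ in such a way that $\lambda_a(p_k)-\lambda_b(p_k)\to\infty$ for $a<b$. Inspecting \eqref{L_entries}, every off-diagonal entry of $L^g(p_k)$ is a bounded numerator over a hyperbolic sine whose argument tends to $\pm\infty$, hence tends to $0$, while $L^g_{j,j}(p_k)=|F^g_j(p_k)|^2$ tends, by \eqref{u} and \eqref{F}, to $e^{\theta_1},\dots,e^{\theta_n},e^{-\theta_1},\dots,e^{-\theta_n}$; thus $L^g(p_k)$ converges to the diagonal, hence regular, element of $\cP$ with these entries. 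Since $\hbsTheta$ is continuous by \eqref{hbsTheta_smooth}, and the ordering has been arranged so that no Weyl reflection intervenes, this forces $\htheta^g(p_k)\to\theta/2$. At the same time \eqref{z_a} gives $z^g_c(p_k)\to -e^{\ri(\nu+2\mu(n-c))}$, so $\sum_c\Real z^g_c(p_k)\to -\sum_c\cos(\nu+2\mu(n-c))$. Denoting by $\hz^{-}_c(\vartheta,g)$ and $\hz^{+}_c(\vartheta,g)$ the right-hand sides of \eqref{hz_good} and \eqref{hz_bad} read as functions of $\vartheta\in Q$ and $g$ (these are smooth on $Q$, where all the hyperbolic sines in their denominators are nonzero), one has $\hz^g_c(p_k)=\hz^{\tS_c}_c(\htheta^g(p_k),g)\to\hz^{\tS_c}_c(\theta/2,g)$, so passing to the limit in Proposition \ref{PROPOSITION_real_part_of_z} yields
\be
    \sum_{c=1}^n \Real\,\hz^{\tS_c}_c(\theta/2,g) = -\sum_{c=1}^n\cos(\nu+2\mu(n-c)) \qquad (\theta_1>\dots>\theta_n>0).
\ee

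Next I would let $\theta_a\to\infty$ with $\theta_a-\theta_b\to\infty$ for $a<b$ in this identity. Using $\hmu=-\mu$, $\hnu=-\nu$ and $\sinh(\ri\alpha+x)/\sinh(x)\to e^{\ri\,\mathrm{sgn}(x)\alpha}$, the product factors of \eqref{hz_good} and \eqref{hz_bad} both produce the common limit $e^{2\ri\hmu(n-c)}$, so only the prefactors differ: $\hz^{-}_c(\theta/2,g)\to -e^{-\ri(\nu+2\mu(n-c))}$ and $\hz^{+}_c(\theta/2,g)\to e^{\ri(\nu-2\mu(n-c+1))}$. Each index $c\notin T$ then contributes exactly $-\cos(\nu+2\mu(n-c))$ to the left-hand side, matching the corresponding term on the right; cancelling these, and using the sum-to-product identity $\cos(2\mu(n-c+1)-\nu)+\cos(\nu+2\mu(n-c))=2\cos\!\big((2(n-c)+1)\mu\big)\cos(\mu-\nu)$, one is left with
\be
    2\cos(\mu-\nu)\sum_{c\in T}\cos\!\big((2(n-c)+1)\mu\big)=0 .
\ee
Since $g\in\tmfM_0$ forces $\cos(\mu-\nu)\neq 0$, and since $g$ was an arbitrary point of the open set $C$, the sum $\sum_{c\in T}\cos\!\big((2(n-c)+1)\mu\big)$ vanishes for all $\mu$ in an interval of positive length; as the integers $2(n-c)+1$, $c\in T$, are pairwise distinct, linear independence of the corresponding cosines forces $T=\emptyset$. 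Hence $\tS_c\equiv -1$ on $P\times\tmfM_0$, i.e.\ \eqref{hz_OK} holds there.

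Finally, both sides of \eqref{hz_OK} are smooth on all of $P\times\tmfM$ — the left side by \eqref{z_hat}, the right side because $\htheta$ is smooth by \eqref{hbsTheta_smooth} while $\htheta^g(p)\in Q$ keeps $\sinh(2\htheta^g_c)$, $\sinh(\htheta^g_c-\htheta^g_d)$ and $\sinh(\htheta^g_c+\htheta^g_d)$ away from zero — and $\tmfM_0$ is dense in $\tmfM$ since its complement is a union of lines; therefore the identity extends from $P\times\tmfM_0$ to $P\times\tmfM$ by continuity. I expect the first separation limit to be the delicate step: one must genuinely verify that $L^g(p_k)$ becomes diagonal, so that the spectral data $\htheta^g(p_k)$ converge to $\theta/2$ with no stray Weyl reflection — this is precisely what transfers the trivial $\lambda$-asymptotics of $z^g_c$ into the $\htheta$-asymptotics of $\hz^g_c$ that Proposition \ref{PROPOSITION_real_part_of_z} then converts into the constraint on $T$.
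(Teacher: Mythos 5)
Your proof is correct and follows essentially the same route as the paper's: reduce to the dense subset $P \times \tmfM_0$, use the local constancy of the sign $\tS_c$, and rule out the ``$+$'' branch by feeding a double separation limit (positions to infinity at fixed ordered rapidities, then rapidities to infinity) into Proposition \ref{PROPOSITION_real_part_of_z}, ending with the factor $\cos(\mu-\nu)\neq 0$ and linear independence of the cosines in $\mu$. The only cosmetic difference is that the paper works along the specific family $p_{r,s}=(r\bsxi, s\bsxi)$ and justifies $\htheta_c(p_{r,s},g)\to s\xi_c/2$ via the diffeomorphism $\varUpsilon$ \eqref{varUpsilon} applied to the diagonal limit \eqref{L_limit} — which is precisely the continuity-of-ordered-spectrum fact your self-identified ``delicate step'' requires, rather than continuity of $\hbsTheta$ on $P\times\tmfM$ itself.
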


\begin{proof}
Recalling \eqref{tmfM_0}, it is clear that the open set $P \times \tmfM_0$ 
is \emph{dense} in $P \times \tmfM$. Thus, by continuity, it is enough to 
prove that $\forall c \in \bN_n$ and $\forall g \in \tmfM_0$ the $g$-section 
of the smooth function $\hz_c$ is given by \eqref{hz_OK}. So, recalling our 
discussion surrounding the derivation of the formulae \eqref{hz_bad} and 
\eqref{hz_good}, it is sufficient to show that on each connected component 
of $P \times \tmfM_0$, for all $c \in \bN_n$ we have $\tS_c = -1$. 

Arguing by contradiction, let us \emph{suppose} that there is an index 
$c_0 \in \bN_n$ and a connected component $\cO_0$ of $P \times \tmfM_0$ such 
that on $\cO_0$ we have $\tS_{c_0} = 1$. Since the phase space $P$ \eqref{P} 
is connected, we have $\cO_0 = P \times \mfN_0$, where $\mfN_0$ is a connected 
component of $\tmfM_0$ \eqref{tmfM_0}. Now, let $x_0 \in \cO_0$ be arbitrary 
and define
\be
    \cN_0 = \{ c \in \bN_n \mid \tS_c(x_0) = 1 \}.
\label{cN_0}
\ee
Since $\cO_0$ is connected, $\cN_0$ is \emph{independent} of the choice 
of the point $x_0$. Moreover, it is a non-empty subset of $\bN_n$, since 
$c_0 \in \cN_0$.

Equipped with the above objects, let $g = (\mu, \nu) \in \mfN_0$, and 
keep it fixed. Furthermore, from the configuration space $Q$ \eqref{Q} take 
an arbitrary $\bsxi = (\xi_1, \ldots, \xi_n)$, and in the phase space 
\eqref{P} for all $r, s \in \bN$ define a point
\be
    p_{r, s} = (r \bsxi, s \bsxi) \in P.
\label{p_r_s}
\ee
Looking back to the definition \eqref{z_a}, it is clear that 
$\forall c \in \bN_n$ and $\forall s \in \bN$ we have the limit relation
\be
    z^g_c(p_{r, s}) \to - \exp(\ri \nu + (N - 2 c) \ri \mu)
    \qquad
    (r \to \infty).
\label{FFF_1}
\ee
Thus, it is straightforward to verify that
\be
    \sum_{c = 1}^n \Real(z^g_c(p_{r, s})) 
    \to 
    - \cos(\nu + (n - 1) \mu) \frac{\sin(n \mu)}{\sin(\mu)}
    \qquad
    (r \to \infty).
\label{sum_re_z_limit}
\ee

Turning to the study of $\hz^g_c$ \eqref{z_hat}, from the definition of 
$u^g_c$ \eqref{u} it is obvious that $\forall c \in \bN_n$ and 
$\forall s \in \bN$ we have $u^g_c(p_{r, s}) \to 1$ as $r \to \infty$, and 
so recalling \eqref{F} we obtain the limit relations
\be
    F^g_c(p_{r, s}) \to e^{s \xi_c / 2}
    \midand
    F^g_{n + c} (p_{r, s}) 
    \to 
    - e^{-s \xi_c / 2} e^{-\ri \nu - (N - 2 c) \ri \mu}
    \qquad
    (r \to \infty).
\label{FFF_3}
\ee
Looking back to the entries of the Lax matrix $L$ \eqref{L}, it is clear that
\be
    \cP_\reg \ni L(p_{r, s}, g) 
    \to 
    \diag(e^{s \xi_1}, \ldots, e^{s \xi_n}, 
        e^{-s \xi_1}, \ldots, e^{-s \xi_n})
        \in \cP_\reg
    \qquad
    (r \to \infty).
\label{L_limit}
\ee
Now, let us observe that the diffeomorphism $\varUpsilon$ \eqref{varUpsilon} 
allows us to write that
\be
    \diag(e^{s \xi_1}, \ldots, e^{s \xi_n}, 
        e^{-s \xi_1}, \ldots, e^{-s \xi_n})
    = \varUpsilon
        \big(
            \diag(s \xi_1, \ldots, s \xi_n, -s \xi_1, \ldots, -s \xi_n), 
            \bsone_N M
        \big).
\label{FFF_4}
\ee
Moreover, due to Lemma \ref{LEMMA_hy}, we can also write that
\be
    L(p_{r, s}, g) 
    = \hy(p_{r, s}, g) e^{2 \hbsTheta(p_{r, s}, g)} \hy(p_{r, s}, g)^{-1}
    = \varUpsilon 
        \big(
            2 \hbsTheta(p_{r, s}, g), \hy(p_{r, s}, g) M
        \big).
\label{FFF_5}
\ee
Thus, from \eqref{L_limit} it follows that
\be
    \big( 2 \hbsTheta(p_{r, s}, g), \hy(p_{r, s}, g) M \big)
    \to
    \big(
        \diag(s \xi_1, \ldots, s \xi_n, -s \xi_1, \ldots, -s \xi_n), 
        \bsone_N M
    \big)
    \qquad
    (r \to \infty).
\label{FFF_6}
\ee
In particular, $\forall c \in \bN_n$ and $\forall s \in \bN$ we obtain
\be
    \htheta_c(p_{r, s}, g) \to \half s \xi_c 
    \qquad
    (r \to \infty).
\label{htheta_limit}
\ee

Now, from the above observations the behavior of $\hz^g_c(p_{r, s})$ 
\eqref{z_hat} for $r \to \infty$ comes almost effortlessly. Indeed, 
recalling \eqref{cN_0}, from \eqref{hz_bad} it follows immediately that 
$\forall c \in \cN_0$ and $\forall s \in \bN$ we have
\be
    \hz^g_c(p_{r, s})
    \to
    \frac{\sinh(\ri (2 \hmu - \hnu) + s \xi_c)}
        {\sinh(s \xi_c)}    
    \prod_{\substack{d = 1 \\ (d \neq c)}}^n
        \frac{\sinh(\ri \hmu + s (\xi_c - \xi_d) / 2) 
                \sinh(\ri \hmu + s (\xi_c + \xi_d) / 2)}
            {\sinh(s (\xi_c - \xi_d) / 2) 
                \sinh(s (\xi_c + \xi_d) / 2)},
\label{hz_bad_limit}
\ee
whereas equation \eqref{hz_good} entails that 
$\forall c \in \bN_n \setminus \cN_0$ and $\forall s \in \bN$ we can write
\be
    \hz^g_c(p_{r, s})
    \to
    - \frac{\sinh(\ri \hnu + s \xi_c)}
        {\sinh(s \xi_c)}    
    \prod_{\substack{d = 1 \\ (d \neq c)}}^n
        \frac{\sinh(\ri \hmu + s (\xi_c - \xi_d) / 2) 
                \sinh(\ri \hmu + s (\xi_c + \xi_d) / 2)}
            {\sinh(s (\xi_c - \xi_d) / 2) 
                \sinh(s (\xi_c + \xi_d) / 2)}.
\label{hz_good_limit}
\ee
Keeping in mind the limit relation \eqref{sum_re_z_limit}, 
notice that $\forall s \in \bN$ from Proposition 
\ref{PROPOSITION_real_part_of_z} we can infer that 
\be
    - \cos(\nu + (n - 1) \mu) \frac{\sin(n \mu)}{\sin(\mu)}
    = \sum_{c \in \cN_0} 
        \Real \bigg( \lim_{r \to \infty} \hz^g_c(p_{r, s}) \bigg)
    + \sum_{c \in \bN_n \setminus \cN_0} 
        \Real \bigg( \lim_{r \to \infty} \hz^g_c(p_{r, s}) \bigg),
\label{key_limit_relation_r_to_infty}
\ee
where the limits on the right can be read off from \eqref{hz_bad_limit} and 
\eqref{hz_good_limit}.

To proceed further, in the following we shall study the behavior of equation 
\eqref{key_limit_relation_r_to_infty} for large values of $s$. From
\eqref{hz_bad_limit} it follows immediately that $\forall c \in \cN_0$ we 
have
\be
    \lim_{s \to \infty} \lim_{r \to \infty} \hz^g_c (p_{r, s})
    = \exp(\ri (2 \hmu - \hnu) + (N - 2 c) \ri \hmu),
\label{hz_bad_s_limit}
\ee
whilst from \eqref{hz_good_limit} we see that 
$\forall c \in \bN_n \setminus \cN_0$ we can write
\be
    \lim_{s \to \infty} \lim_{r \to \infty} \hz^g_c (p_{r, s})
    = - \exp(\ri \hnu + (N - 2 c) \ri \hmu).
\label{hz_good_s_limit}
\ee
Therefore, as $s \to \infty$, the above relationship 
\eqref{key_limit_relation_r_to_infty} yields
\be
    - \cos(\nu + (n - 1) \mu) \frac{\sin(n \mu)}{\sin(\mu)}
    = \sum_{c \in \cN_0} 
        \Real \big( e^{\ri (2 \hmu - \hnu) + (N - 2 c) \ri \hmu} \big)
    - \sum_{c \in \bN_n \setminus \cN_0} 
        \Real \big( e^{\ri \hnu + (N - 2 c) \ri \hmu} \big).
\label{key_limit_relation_s_to_infty}
\ee
It is straightforward to see that this equation is equivalent to 
\be
    \cos(\nu - \mu) \sum_{c \in \cN_0} \cos((2 c - N - 1) \mu) = 0.
\label{FFF_key}
\ee
However, since $g = (\mu, \nu) \in \mfN_0 \subset \tmfM_0$ \eqref{tmfM_0},
we have $\cos(\mu - \nu) \neq 0$, thus we can also write that
\be
    \sum_{c \in \cN_0} 
        \left(
            e^{\ri (2 c - N - 1) \mu} + e^{- \ri (2 c - N - 1) \mu}
        \right) 
    = 0.
\label{FFF_key_OK}
\ee
Now, notice that in the above sum the powers of $e^{\ri \mu}$ are all 
distinct. Recall also that our argument is valid for any 
$g = (\mu, \nu) \in \mfN_0$, which is an open and non-empty condition on 
$\mu$ as well. However, if $\mu$ varies in an open and non-empty subset 
of $\bR$, then the family of functions 
$\{ \mu \mapsto e^{\ri \ell \mu} \}_{\ell \in \bZ}$ is clearly linearly 
independent, thus \eqref{FFF_key_OK} expresses a contradiction. So, 
necessarily, $\cN_0 = \emptyset$, and the proof is complete.
\end{proof}

To proceed further, for each $c \in \bN_n$ we define 
$\hu_c = \vert \hz_c \vert$. Due to Lemma \ref{LEMMA_form_of_z_hat}, it is 
obvious that for all $g = (\mu, \nu) \in \tmfM$ we can write
\be
    \hu^g_c 
    = \left( 1 + \frac{\sin(\hnu)^2}{\sinh(2 \htheta^g_c)^2} \right)^\half
        \prod_{\substack{d = 1 \\ (d \neq c)}}^n
            \left( 
                1 + \frac{\sin(\hmu)^2}{\sinh(\htheta^g_c - \htheta^g_d)^2} 
            \right)^\half
            \left( 
                1 + \frac{\sin(\hmu)^2}{\sinh(\htheta^g_c + \htheta^g_d)^2} 
            \right)^\half
    \in C^\infty(P).
\label{hu}
\ee
Let us observe that $\hu_c > 1$. Recalling Lemma \ref{LEMMA_hy} and the 
definition \eqref{hF}, it is also clear that $\hF_c > 0$, whence the function
\be
    \hlambda_c \colon P \times \tmfM \rightarrow \bR,
    \quad
    (p, g) \mapsto \hlambda_c(p, g) = 2 \ln(\hF_c(p, g)) - \ln(\hu_c(p, g))
\label{hlambda}
\ee
is well-defined and smooth. Keeping in mind \eqref{z_hat}, it is obvious that
\be
    \hF_c = e^{\frac{\hlambda_c}{2}} \hu_c^\half
    \midand
    \hF_{n + c} 
    = \frac{1}{\hF_c} \bar{\hz}_c 
    = e^{-\frac{\hlambda_c}{2}} \bar{\hz}_c \hu_c^{-\half}.
\label{hF_components}
\ee
Now, with the aid of the functions $\htheta_c$ \eqref{htheta} and $\hlambda_c$ 
\eqref{hlambda}, for each $g \in \tmfM$ we introduce the \emph{smooth} map
\be
    \Psi^g \colon P \rightarrow P,
    \quad
    p 
    \mapsto 
    \Psi^g(p) 
    = (\htheta^g_1(p), \ldots, \htheta^g_n(p), 
        \hlambda^g_1(p), \ldots, \hlambda^g_n(p)).
\label{Psi}
\ee
Comparing the formulae \eqref{z_a} and \eqref{hz_OK}, it is clear that the 
components of $F$ \eqref{F} and $\hF$ \eqref{hF_components} look alike. As 
a consequence, now we can uncover the precise connection between the matrix 
entries of $L$ \eqref{L_entries} and $\hL$ \eqref{hL_entries}, too. Indeed, 
making use of $\Psi^g$ \eqref{Psi}, our observations can be formulated as 
follows.

\begin{THEOREM}
\label{THEOREM_L_and_hL}
For all $g \in \tmfM$, $c \in \bN_n$, and $j \in \bN_N$ we can write
\be
    \hz^g_c = z^{\hg}_c \circ \Psi^g 
    \midand
    \hF^g_j = F^{\hg}_j \circ \Psi^g,
\label{hz_and_z__hF_and_F}
\ee
whilst the matrix valued functions $L$ \eqref{L} and $\hL$ \eqref{hL} 
are related by
\be
    \hL^g = L^{\hg} \circ \Psi^g.
\label{L_and_hL}
\ee
\end{THEOREM}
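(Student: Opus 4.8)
The plan is to establish the three identities in turn, each one reducing to the previous together with the explicit formulae already collected; the genuine content has in fact been carried by Lemma~\ref{LEMMA_form_of_z_hat}, and what remains is careful bookkeeping of how the coordinate functions behave under precomposition with $\Psi^g$. The observation to keep in mind throughout is that, since $\Psi^g(p) = (\htheta^g_1(p), \ldots, \htheta^g_n(p), \hlambda^g_1(p), \ldots, \hlambda^g_n(p))$ --- which lies in $P$ because $\htheta^g(p) \in Q$, so the compositions below all make sense --- the global coordinates \eqref{lambda_and_theta} satisfy $\lambda_a \circ \Psi^g = \htheta^g_a$ and $\theta_a \circ \Psi^g = \hlambda^g_a$; consequently, in view of \eqref{Lambda} and \eqref{hTheta}, we also have $\Lambda_k \circ \Psi^g = \hTheta^g_k$ for every $k \in \bN_N$.

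First I would prove $\hz^g_c = z^{\hg}_c \circ \Psi^g$. Starting from the defining formula \eqref{z_a} for $z_c^{\hg}$ --- that is, \eqref{z_a} with $(\mu, \nu)$ replaced by $\hg = (\hmu, \hnu)$ --- and precomposing with $\Psi^g$, the relation $\lambda_a \circ \Psi^g = \htheta^g_a$ turns the right-hand side into exactly the expression given for $\hz^g_c$ in Lemma~\ref{LEMMA_form_of_z_hat}. Taking moduli, this also yields $\hu^g_c = u^{\hg}_c \circ \Psi^g$ (compare \eqref{hu} with \eqref{u}), while complex conjugation gives $\bar z^{\hg}_c \circ \Psi^g = \overline{z^{\hg}_c \circ \Psi^g} = \bar{\hz}^g_c$.

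Next comes the column vector identity $\hF^g_j = F^{\hg}_j \circ \Psi^g$. Precomposing the components of $F^{\hg}$ from \eqref{F} with $\Psi^g$, and using $\theta_a \circ \Psi^g = \hlambda^g_a$ together with the three relations just recorded, the $a$-th component becomes $e^{\hlambda^g_a/2}(\hu^g_a)^{\half}$ and the $(n+a)$-th becomes $e^{-\hlambda^g_a/2}\bar{\hz}^g_a(\hu^g_a)^{-\half}$; by the definition \eqref{hlambda} of $\hlambda_a$ --- recalling that $\hF_a > 0$, so that $e^{\hlambda_a/2} = \hF_a/\hu_a^{\half}$ --- these are exactly $\hF^g_a$ and $\hF^g_{n+a}$ as recorded in \eqref{hF_components}. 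Finally, the matrix identity $\hL^g = L^{\hg} \circ \Psi^g$ follows by inspection: writing out the entries of $L^{\hg}$ from \eqref{L_entries} with parameters $\hg$ and precomposing with $\Psi^g$, the substitutions $F^{\hg}_k \circ \Psi^g = \hF^g_k$ (just proved) and $\Lambda_k \circ \Psi^g = \hTheta^g_k$ convert that formula term by term into the entries of $\hL^g$ displayed in \eqref{hL_entries}.

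I do not anticipate a genuine obstacle in this argument; the only point that demanded real work was pinning down the sign function, i.e. showing $\tS_c \equiv -1$, and that is precisely what Lemma~\ref{LEMMA_form_of_z_hat} supplies --- without it the first identity, and hence all three, would fail. In short, Theorem~\ref{THEOREM_L_and_hL} is the clean repackaging of Lemma~\ref{LEMMA_form_of_z_hat} together with the elementary remark on how $\Psi^g$ acts on the distinguished coordinates.
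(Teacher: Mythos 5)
Your proposal is correct and follows essentially the same route as the paper, which states Theorem \ref{THEOREM_L_and_hL} without a separate proof precisely because it is the bookkeeping consequence of Lemma \ref{LEMMA_form_of_z_hat} together with the formulae \eqref{z_a}, \eqref{u}, \eqref{F}, \eqref{hu}, \eqref{hF_components}, \eqref{L_entries} and \eqref{hL_entries}, combined with the observation $\lambda_a \circ \Psi^g = \htheta^g_a$, $\theta_a \circ \Psi^g = \hlambda^g_a$, hence $\Lambda_k \circ \Psi^g = \hTheta^g_k$. You also correctly single out the sign determination $\tS_c \equiv -1$ as the only nontrivial input.
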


Following the terminology of \cite{Ruij_CMP1988}, at this point we may call 
the matrix valued function $\hL$ \eqref{hL} the \emph{dual Lax matrix}, 
whereas the name \emph{duality map} seems adequate for $\Psi^g$ \eqref{Psi}. 
Of course, to fully justify these names, the map $\Psi^g$ \eqref{Psi} must 
meet some further conditions, that we wish to examine in the rest of the 
paper. Heading toward this goal, take an arbitrary $g \in \tmfM$ and keep it 
fixed. Recalling \eqref{varUpsilon}, \eqref{bsLambda}, \eqref{hbsTheta} and 
\eqref{Psi}, it is evident that $\hbsTheta^g = \bsLambda \circ \Psi^g$, whence 
\eqref{hy_1} can be rewritten as
\be
    L^g 
    = \hy^g e^{2 \bsLambda \circ \Psi^g} (\hy^g)^{-1}
    = \varUpsilon(2 \bsLambda \circ \Psi^g, \hy^g M).
\label{GGG_1}
\ee
Thus, by composing the above functions with $\Psi^{\hg}$, we obtain
\be
    L^g \circ \Psi^{\hg} 
    = \varUpsilon
        \left(
            2 \bsLambda \circ \Psi^g \circ \Psi^{\hg}, 
            (\hy^g \circ \Psi^{\hg}) M
        \right).
\label{GGG_2}
\ee
On the other hand, keeping in mind the involution \eqref{tmfM_involution} and 
the definition \eqref{hL}, the observation we made in \eqref{L_and_hL} entails
that
\be
    L^g \circ \Psi^{\hg} 
    = L^{\hat{\hg}} \circ \Psi^{\hg} 
    = \hL^{\hg}
    = (\hy^{\hg})^{-1} e^{2 \bsLambda} \hy^{\hg}
    = \varUpsilon(2 \bsLambda, (\hy^{\hg})^{-1} M).
\label{GGG_3}
\ee
Since $\varUpsilon$ \eqref{varUpsilon} is a diffeomorphism, the above two 
equations imply that
\be
    \bsLambda \circ \Psi^g \circ \Psi^{\hg} = \bsLambda.
\label{bsLambda_and_Psi}
\ee
Moreover, there is a unique smooth function $m^g \in C^\infty(P, M)$ such that
\be
    (\hy^{\hg})^{-1} = (\hy^g \circ \Psi^{\hg}) m^g.
\label{m_g}
\ee
By applying the involution \eqref{tmfM_involution} on the last two equations,
we get
\be
    \bsLambda \circ \Psi^{\hg} \circ \Psi^{g} 
    = \bsLambda \circ \Psi^{\hg} \circ \Psi^{\hat{\hg}}
    = \bsLambda
    \midand
    (\hy^{g})^{-1} 
    = (\hy^{\hat{\hg}})^{-1} 
    = (\hy^{\hg} \circ \Psi^{\hat{\hg}}) m^{\hg} 
    = (\hy^{\hg} \circ \Psi^{g}) m^{\hg}.
\label{GGG_4}
\ee
Thus, recalling \eqref{GGG_3} and \eqref{GGG_2}, it also follows that
\be
\begin{split}
    & L^g \circ \Psi^{\hg} \circ \Psi^g
    = (L^g \circ \Psi^{\hg}) \circ \Psi^g
    = ((\hy^{\hg})^{-1} \circ \Psi^g) 
        e^{2 \bsLambda \circ \Psi^g} 
        (\hy^{\hg} \circ \Psi^g)
    \\
    & \quad
    = m^{\hg} \hy^g e^{2 \bsLambda \circ \Psi^g} (\hy^g)^{-1} (m^{\hg})^{-1}
    = m^{\hg} L^g (m^{\hg})^{-1}.
\end{split}
\label{GGG_5}
\ee

To proceed further, take an arbitrary point $p = (\bsxi, \bseta) \in P$ and
let 
\be
    p' = (\bsxi', \bseta') = \Psi^{\hg} \circ \Psi^g(p) \in P.
\label{p_prime}
\ee
Applying \eqref{GGG_4}, we can clearly write that
\be
    \bsLambda(p') = \bsLambda(p),
\label{GGG_6}
\ee
so the definition \eqref{bsLambda} leads to the relation $\bsxi' = \bsxi$. 
Next, let us recall that the function $m^{\hg}$ takes values in the subgroup 
$M$ \eqref{M}, whence it is \emph{diagonal}. Therefore, due to the relationship 
\eqref{GGG_5}, for all $c \in \bN_n$ we can write that
\be
    L^g_{c, c}(p') = L^g_{c, c}(p).
\label{GGG_7}
\ee
Recalling \eqref{L_entries} and \eqref{F}, it is clear that 
\be
    L^g_{c, c}(p) = \vert F^g_c(p) \vert^2 = e^{\eta_c} u^g_c(p)
    \midand
    L^g_{c, c}(p') = \vert F^g_c(p') \vert^2 = e^{\eta'_c} u^g_c(p').
\label{GGG_8}
\ee
Since $\bsxi' = \bsxi$, from the definition \eqref{u} it is evident that
$u^g_c(p') = u^g_c(p)$. Consequently, from \eqref{GGG_7} we infer that
\be
    \eta'_c = \eta_c
    \qquad
    (c \in \bN_n).
\label{GGG_9} 
\ee
Now, putting the above observations together, we see that $p' = p$; that is,
\be
    \Psi^{\hg} \circ \Psi^g(p) = p
    \qquad
    (p \in P).
\label{GGG_10}
\ee
In other words, $\Psi^{\hg} \circ \Psi^g = \Id_P$. Applying the involution
\eqref{tmfM_involution} once more, we also obtain that
\be
    \Psi^g \circ \Psi^{\hg} = \Psi^{\hat{\hg}} \circ \Psi^{\hg} = \Id_P.
\label{GGG_11}
\ee

\begin{THEOREM}
\label{THEOREM_Psi_is_a_diffeomorphism}
For all $g \in \tmfM$ the map $\Psi^g$ \eqref{Psi} is a diffeomorphism with
inverse $(\Psi^g)^{-1} = \Psi^{\hg}$.
\end{THEOREM}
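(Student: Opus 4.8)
The plan is to observe that essentially all of the substantive work has already been carried out in the discussion preceding the statement, so that what remains is a short formal argument. First I would recall that, for every $g \in \tmfM$, the component functions $\htheta^g_c$ (smooth by \eqref{hbsTheta_eq} and the remark following it) and $\hlambda^g_c$ \eqref{hlambda} are smooth on $P$, so that $\Psi^g$ \eqref{Psi} is a smooth self-map of $P$. Since the assignment $g \mapsto \hg = (-\mu, -\nu)$ is an involution of $\tmfM$ by \eqref{tmfM_involution}, the parameter $\hg$ again lies in $\tmfM$; hence, applying the constructions of Section \ref{SECTION_the_dual_objects} verbatim with $g$ replaced by $\hg$, the map $\Psi^{\hg}$ is likewise a well-defined smooth map $P \to P$.

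Next I would simply invoke the two composition identities already derived above: equation \eqref{GGG_10} gives $\Psi^{\hg} \circ \Psi^g = \Id_P$, while equation \eqref{GGG_11} gives $\Psi^g \circ \Psi^{\hg} = \Id_P$. Together these say that $\Psi^g$ admits $\Psi^{\hg}$ as a two-sided inverse; in particular $\Psi^g$ is a bijection of $P$ onto itself with $(\Psi^g)^{-1} = \Psi^{\hg}$. Since both $\Psi^g$ and its inverse $\Psi^{\hg}$ are smooth, $\Psi^g$ is a diffeomorphism of $P$, which is precisely the asserted claim.

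Accordingly, there is no genuine obstacle remaining at this stage: the only points requiring a modicum of care are bookkeeping ones, namely checking that $\hg$ is an admissible parameter so that $\Psi^{\hg}$ makes sense, and noting that the smoothness of $\Psi^{\hg}$ is obtained by re-running the earlier constructions rather than by a fresh argument. All the real content — the diagonalization Lemma \ref{LEMMA_hy}, the explicit form of $\hz^g_c$ in Lemma \ref{LEMMA_form_of_z_hat}, the identification $\hL^g = L^{\hg} \circ \Psi^g$ of Theorem \ref{THEOREM_L_and_hL}, and the fixed-point computation running from \eqref{GGG_6} to \eqref{GGG_10} — has already been established, so the present theorem is a clean corollary of that chain of results.
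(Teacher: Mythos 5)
Your argument is correct and coincides with the paper's own reasoning: the theorem is stated there precisely as the summary of the preceding chain of identities, with smoothness of $\Psi^g$ and $\Psi^{\hg}$ noted at \eqref{Psi} and the two-sided inverse property supplied by \eqref{GGG_10} and \eqref{GGG_11}. Nothing further is needed.
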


To sum up, using essentially algebraic techniques, in this section we proved
that the proposed duality map $\Psi^g$ \eqref{Psi} is a diffeomorphism.
To complete the study of $\Psi^g$, we still have to investigate its 
relationship with the symplectic form $\omega$ \eqref{omega}. Since our
approach is built upon the scattering theory of the van Diejen system 
\eqref{H}, using mainly analytical techniques, it is favorable to relegate 
this material to a separate section.

\section{Scattering theory and duality}
\label{SECTION_Scattering_theory_and_duality}
In this section we provide a rigorous treatment on the scattering theory of 
the particle system governed by the Hamiltonian function $H^g$ \eqref{H}. Our
analysis hinges on Ruijsenaars' theorem about the asymptotic properties of
certain exponential type matrix flows (see Theorem A2 in \cite{Ruij_CMP1988}).
Merging this result with our projection method presented in 
\cite{Pusztai_Gorbe}, we shall work out the asymptotic properties of the 
particle trajectories. By pushing forward this approach, the wave and the 
scattering maps also become accessible. As an added bonus, in this section
we shall complete the study of the self-duality property of the van Diejen
type systems \eqref{H}, too, by proving that the duality map $\Psi^g$ 
\eqref{Psi} is an anti-symplectomorphism. 

\subsection{Recapitulation of Ruijsenaars' theorem}
\label{SUBSECTION_Ruijsenaars_thm}
To make our presentation self-contained, in this subsection we briefly 
recapitulate Ruijsenaars' aforementioned theorem. Given a quadratic matrix 
\be
    M = [ M_{k, l} ]_{1 \leq k, l \leq N} \in \bC^{N \times N}, 
\label{matrix_M}
\ee   
for its leading principal minors we introduce the temporary notation
\be
    \pi_j(M) 
    = \det \left( [ M_{k, l} ]_{1 \leq k, l \leq j} \right) \in \bC
    \qquad
    (j \in \bN_N).
\label{pi_j}
\ee
Clearly the subset
\be
    \cM 
    = \{ M \in \bC^{N \times N} \mid \pi_j(M) \neq 0 \quad (j \in \bN_N) \}
\label{cM_def}
\ee
is open in $\bC^{N \times N}$, and the functions 
$m_j \colon \cM \rightarrow \bC, \, M \mapsto m_j(M)$ defined by the formulae
\be
    m_1(M) = M_{1, 1},
    \quad
    m_j(M) = \frac{\pi_j(M)}{\pi_{j - 1}(M)}
    \quad (2 \leq j \leq N)
\label{m_j}
\ee
are smooth. We also need the special family of diagonal matrices
\be
    \cD 
    = \{ D = \diag(d_1, \ldots, d_N) \in \bC^{N \times N} 
        \mid
        \Real(d_1) > \ldots > \Real(d_N) \},
\label{cD_def}
\ee
and with each pair $X = (M, D) \in \cM \times \cD$ we associate the 
exponential type matrix flow
\be
    E_X \colon \bR \rightarrow \bC^{N \times N},
    \quad
    s \mapsto E_X(s) = M e^{s D}.
\label{E_X}
\ee
Counting with multiplicities, let 
$\bslambda_1^X(s), \ldots, \bslambda_N^X(s)$ 
denote the not necessarily distinct eigenvalues of $E_X(s)$. For convenience, 
we shall assume that
\be
    \vert \bslambda_1^X(s) \vert 
        \geq \ldots \geq \vert \bslambda_N^X(s) \vert.
\label{E_X_evals_assumption}
\ee
Equipped with the above objects, we are now in a position to state 
Ruijsenaars' theorem in the form that is the most convenient in the later 
developments.

\begin{THEOREM}
\label{THEOREM_Ruijsenaars}
For each pair $X_0 = (M_0, D_0) \in \cM \times \cD$ there exist positive 
constants $T_0, C_0, R_0 > 0$ and a compact subset 
$\cK_0 \subset \cM \times \cD$ containing $X_0$ in its topological interior 
such that $\forall s \in [T_0, \infty)$ and $\forall X \in \cK_0$ we have
\be
    \vert \bslambda_1^X(s) \vert > \ldots > \vert \bslambda_N^X(s) \vert.
\label{THM_bslambda_ordering}
\ee
Moreover, $\forall s \in [T_0, \infty)$, $\forall X = (M, D) \in \cK_0$ and 
$\forall j \in \bN_N$ we can write
\be
    \bslambda_j^X(s) = m_j(M) e^{s D_{j, j}} \left( 1 + \rho_j^X(s) \right), 
\label{THM_bslambda}
\ee
whereas for its derivative with respect to $s$ we have
\be
    \dot{\bslambda}_j^X(s) 
    = m_j(M) e^{s D_{j, j}} 
        \left( D_{j, j} + D_{j, j} \rho_j^X(s) + \dot{\rho}_j^X(s) \right),
\label{THM_bslambda_dot}
\ee
where
\be
    \vert \rho_j^X(s) \vert \leq C_0 e^{-s R_0} \leq \half
    \midand 
    \vert \dot{\rho}_j^X(s) \vert \leq C_0 e^{-s R_0} \leq \half.
\label{THM_decay_cond}
\ee
\end{THEOREM}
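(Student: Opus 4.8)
The plan is to locate the eigenvalues of $E_X(s)=Me^{sD}$ one at a time as roots of the characteristic polynomial, after a rescaling that compensates for the wildly different magnitudes produced by $e^{sD}$. Set $p^X(z,s)=\det(z\bsone_N-E_X(s))$ and, pulling $e^{sD}$ out from the right, note $p^X(z,s)=e^{s\tr(D)}\det(ze^{-sD}-M)$. Fix $j\in\bN_N$ and substitute $z=m_j(M)\,e^{sD_{j,j}}\,w$, with $m_j(M)=\pi_j(M)/\pi_{j-1}(M)$ as in \eqref{m_j} and $\pi_0(M):=1$. Then the $k$-th row of $ze^{-sD}-M$ equals $m_j(M)\,e^{s(D_{j,j}-D_{k,k})}w\,e_k^\top-M_{k,\cdot}$, which for $k<j$ tends to $-M_{k,\cdot}$, for $k=j$ equals $m_j(M)w\,e_j^\top-M_{j,\cdot}$, and for $k>j$ blows up; dividing the rows $k>j$ by the growing factor $e^{s(D_{j,j}-D_{k,k})}$ leaves a matrix $A_j^X(w,s)$ whose rows $k>j$ converge to $m_j(M)w\,e_k^\top$. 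Expanding $\det$ of the limiting matrix along those last rows and then along the first $j-1$ rows reduces it to a rank-one update of $-[M_{k,l}]_{1\le k,l\le j}$, which evaluates to
\be
    \lim_{s\to\infty}\det\big(A_j^X(w,s)\big)
    = (-1)^{j-1}\,\pi_{j-1}(M)\,m_j(M)^{N-j+1}\,w^{N-j}\,(w-1),
\ee
the convergence being locally uniform in $w$ and in $X=(M,D)$, at exponential rate $e^{-sR_0}$ for any $0<R_0<\min_{1\le k<N}\Real(D_{k,k}-D_{k+1,k+1})$. Since $\pi_{j-1}(M)\neq0\neq m_j(M)$ for $X\in\cM\times\cD$, this limit has a simple zero at $w=1$ (the zero of order $N-j$ at $w=0$ is irrelevant, lying outside the region we track).

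Next I would run a Rouch\'e argument, uniformly over a small closed ball $\cK_0$ centred at $X_0$: for a fixed small $\epsilon>0$ there is $T_0>0$ so that for all $X\in\cK_0$ and $s\ge T_0$ the map $w\mapsto\det(A_j^X(w,s))$ has exactly one zero $w_j^X(s)$ in $\vert w-1\vert<\epsilon$; writing it via the argument principle,
\be
    w_j^X(s)
    = \frac{1}{2\pi\ri}\oint_{\vert w-1\vert=\epsilon}
        w\,\frac{\partial_w\det(A_j^X(w,s))}{\det(A_j^X(w,s))}\,\dd w,
\ee
and using the uniform exponential convergence above gives $w_j^X(s)=1+\rho_j^X(s)$ with $\vert\rho_j^X(s)\vert\le C_0 e^{-sR_0}$. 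Undoing the substitution, $E_X(s)$ has exactly one (simple) eigenvalue in the disc of centre $\delta_j^X(s):=m_j(M)e^{sD_{j,j}}$ and radius $\epsilon\vert\delta_j^X(s)\vert$, namely $m_j(M)e^{sD_{j,j}}(1+\rho_j^X(s))$. Because $\vert\delta_j^X(s)\vert/\vert\delta_{j+1}^X(s)\vert=\vert m_j(M)/m_{j+1}(M)\vert\,e^{s\Real(D_{j,j}-D_{j+1,j+1})}$ grows geometrically with a rate uniform on $\cK_0$, for $s\ge T_0$ these $N$ discs are pairwise disjoint; hence they account for all $N$ eigenvalues of $E_X(s)$, their moduli are strictly ordered (this is \eqref{THM_bslambda_ordering}), and the decreasing-modulus labelling \eqref{E_X_evals_assumption} assigns index $j$ to the one near $\delta_j^X(s)$, yielding \eqref{THM_bslambda}.

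For \eqref{THM_bslambda_dot}, observe that $A_j^X(w,s)$ is real-analytic in $(w,s)$ and, for $s\ge T_0$, $\partial_w\det(A_j^X(\cdot,s))$ is bounded away from zero near $w_j^X(s)$ (the root is simple, and $\partial_w$ of the limit at $w=1$ equals $(-1)^{j-1}\pi_{j-1}(M)m_j(M)^{N-j+1}\neq0$). Hence $w_j^X$ is real-analytic in $s$ with
\be
    \dot w_j^X(s)
    = -\frac{\partial_s\det(A_j^X(w,s))}{\partial_w\det(A_j^X(w,s))}
        \bigg|_{w=w_j^X(s)}
\ee
by the implicit function theorem. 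Each row of $A_j^X(w,s)$ is a constant (in $s$) plus a term that is $O(e^{-sR_0})$ together with its $s$-derivative, so $\partial_s\det(A_j^X(w,s))=O(e^{-sR_0})$ locally uniformly, whence $\dot w_j^X(s)=\dot\rho_j^X(s)=O(e^{-sR_0})$ on $\cK_0$; differentiating $\bslambda_j^X(s)=m_j(M)e^{sD_{j,j}}(1+\rho_j^X(s))$ then gives exactly \eqref{THM_bslambda_dot}, and after shrinking $R_0$ and enlarging $C_0,T_0$ we may also arrange $C_0 e^{-sR_0}\le\half$ for $s\ge T_0$, which is \eqref{THM_decay_cond}. (Alternatively, the derivative statement follows from $\dot{\bslambda}_j^X(s)=\tr(E_j(s)\dot E_X(s))=\bslambda_j^X(s)\,\tr(E_j(s)D)$, where $E_j(s)=\tfrac{1}{2\pi\ri}\oint(z\bsone_N-E_X(s))^{-1}\dd z$ is the spectral projector about $\bslambda_j^X(s)$ and $\dot E_X(s)=E_X(s)D$, together with $\tr(E_j(s)D)=D_{j,j}+O(e^{-sR_0})$.)

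The main obstacle I anticipate is the bookkeeping of scales and the uniformity in the parameter. A crude operator-norm perturbation estimate is useless, since the $\delta_j^X(s)$ live on exponentially different scales, so one must genuinely localise around each eigenvalue through the rescaling $z=m_j(M)e^{sD_{j,j}}w$, and then check that the Rouch\'e count, the radius $\epsilon$, the threshold $T_0$, the rate $R_0$ and the constant $C_0$ can all be chosen simultaneously valid as $X$ ranges over a compact neighbourhood of $X_0$. Establishing the locally-uniform exponential convergence $\det(A_j^X(w,s))\to(-1)^{j-1}\pi_{j-1}(M)m_j(M)^{N-j+1}w^{N-j}(w-1)$, together with the matching bounds on its $w$- and $s$-derivatives, is where the real work lies; the Rouch\'e step, the counting of eigenvalues, and the passage to \eqref{THM_bslambda_dot} are then soft.
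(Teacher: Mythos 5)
Your proposal is correct, but it follows a genuinely different route from the paper's own proof, which is the one given in Appendix~\ref{AppA} (Theorem~\ref{THEOREM_THM_A1}, replacing the resolvent/Neumann-expansion argument of the original Theorem~A2 in \cite{Ruij_CMP1988}). You localize the eigenvalues one at a time: the rescaling $z=m_j(M)e^{sD_{j,j}}w$ of the characteristic polynomial, the exponentially fast entrywise limit of the rescaled determinant $(-1)^{j-1}\pi_{j-1}(M)m_j(M)^{N-j+1}w^{N-j}(w-1)$ (which I checked and is right, including the edge cases $j=1,N$), a uniform Rouch\'e/argument-principle count in a fixed disc about $w=1$, geometric separation of the resulting discs to get simplicity and the ordering \eqref{THM_bslambda_ordering}, and the implicit function theorem applied to $\det(A_j^X(w,s))=0$ for the derivative bound. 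The appendix instead treats all eigenvalues at once: writing $\lambda_j=c_jm_j e^{td_j}$ and matching the elementary symmetric functions of the eigenvalues against the minors of $Me^{tD}$ yields a holomorphic system $F(\epsilon;c)=0$ in the small parameters $\epsilon_j=e^{t(d_{j+1}-d_j)}$ whose Jacobian at $(\epsilon;c)=(0;1,\ldots,1)$ is unitriangular, so a single application of the holomorphic implicit function theorem gives the solution branch, the bounds on $\rho_j,\dot\rho_j$ come from sup bounds on its partial derivatives, and uniformity on compacta follows from holomorphy in $(M,D)$ as well. The trade-off: the appendix's expansion produces per-index decay rates $\mu_{j-1},\mu_j$ and the explicit leading constants $p_j$ of \eqref{defpj}, sharper than the single rate $R_0$ that your Rouch\'e argument naturally yields (though the latter is all that the theorem as stated in Section~\ref{SUBSECTION_Ruijsenaars_thm} demands); your approach, in exchange, makes the spectral gaps, the simplicity of the spectrum and the modulus ordering geometrically transparent. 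The only places where you are terse --- converting the uniform $O(e^{-sR_0})$ convergence into the bound on $\rho_j^X$ requires a uniform positive lower bound for the limiting determinant on the contour and a comparison of the two argument-principle integrals, and the modulus ordering needs $(1-\epsilon)\vert\delta_j\vert>(1+\epsilon)\vert\delta_{j+1}\vert$ rather than mere disjointness --- are standard and covered by the uniform estimates you announce, so I see no genuine gap.
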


We mention in passing that the theorem in its original form is sharper in 
the sense that in \cite{Ruij_CMP1988} we find finer estimates on the decay 
\eqref{THM_decay_cond} for each individual eigenvalue. For even sharper
estimates, the reader is kindly advised to study Theorem 
\ref{THEOREM_THM_A1} in Appendix \ref{AppA}.

\subsection{Application of Ruijsenaars' theorem}
\label{SUBSECTION_Applying_Ruijsenaars_thm}
Starting with this subsection, we wish to analyze the dynamics generated by 
the Hamiltonian function $H^g$ \eqref{H}. Notationwise, in the rest of the 
whole section we take an arbitrary $g = (\mu, \nu) \in \tmfM$ and keep it
fixed. Also, for the $g$-section of any function defined on $P \times \tmfM$
we shall omit the superscript $g$, i.e., we shall write $H \equiv H^g$, 
$\Psi \equiv \Psi^g$, $L \equiv L^g$, $\hL \equiv \hL^g$, etc.

Now, recall that the Hamiltonian vector field $\bsX_H \in \mfX(P)$ 
corresponding to the Hamiltonian $H$ \eqref{H} is defined by
\be
    \bsX_H [f] = \PB{f}{H}
    \qquad
    (f \in C^\infty(P)).
\label{bsX_H}
\ee
In our paper \cite{Pusztai_Gorbe} we proved that the vector field $\bsX_H$ is 
complete. In other words, if $p \in P$, $I \subseteq \bR$ is an open interval 
containing $0$, and
\be
    I \ni s \mapsto \gamma_{p}(s) \in P
\label{gamma} 
\ee
is the maximally defined integral curve of $\bsX_H$ satisfying the initial 
condition $\gamma_{p}(0) = p$, then we have $I = \bR$. This fact allows us 
to introduce the flow
\be
    \Phi \colon \bR \times P \rightarrow P,
    \quad
    (s, p) \mapsto \Phi(s, p) = \gamma_{p}(s),
\label{Phi}
\ee
that is a smooth map (see e.g. Theorem 9.12 in \cite{Lee}). 
Conforming with the standard convention, for any $s \in \bR$ and $p \in P$ 
we define the $s$-section $\Phi_s \colon P \rightarrow P$ and the $p$-section
$\Phi^p \colon \bR \rightarrow P$ by
\be
    \Phi_s(p) = \Phi^p(s) = \Phi(s, p).
\label{sections}
\ee
In the following we shall apply this notation for each function defined on 
$\bR \times P$. 

Next, for the natural extensions of the coordinates \eqref{x} onto 
$\bR \times P$ we introduce the notations 
\be
    \tx_j(s, p) = x_j(p)
    \qquad
    (j \in \bN_N, \, (s, p) \in \bR \times P).
\label{tx}
\ee 
Upon defining
\be
    t \colon \bR \times P \rightarrow \bR,
    \quad
    (s, p) \mapsto s,
\label{t}
\ee
that we call the time variable, it is evident that the family of smooth 
functions $t, \tx_1, \ldots, \tx_N$ provides a global coordinate system 
on the product manifold $\bR \times P$. Notice that the defining property 
of $\Phi$ \eqref{Phi} can be rephrased as
\be
    \PD{(f \circ \Phi)}{t} = \PB{f}{H} \circ \Phi
    \qquad
    (f \in C^\infty(P))
\label{Phi_dynamics}
\ee
and $\Phi_0 = \Id_P$. Occasionally, in the rest of the paper the partial 
differentiation with respect to $t$ will be denoted by a dot.

Equipped with the above objects, now we are in a position to make 
a closer inspection of the dynamics. Due to the completeness of $\bsX_H$
\eqref{bsX_H}, it is natural to inquire about the properties of the flow 
\eqref{Phi} for large values of $s$. In this respect our key observation 
is that the trajectories \eqref{gamma} can be recovered from the projection 
method we worked out in \cite{Pusztai_Gorbe}. More precisely, due to Theorem 
12 in \cite{Pusztai_Gorbe}, we have the spectral identification
\be
    \{ e^{\pm 2 \lambda_a \circ \Phi(s, p)} \mid a \in \bN_n \}
    = \Spec \left( e^{2 \bsLambda(p)} e^{s (L(p) - L(p)^{-1})} \right)
    \qquad
    (s \in \bR, \, p \in P).
\label{spectral_identification}
\ee
To put it simple, finding $\lambda_a \circ \Phi$ amounts to determining the 
eigenvalues of the matrix flow
\be
    \bR \ni s 
    \mapsto 
    e^{2 \bsLambda(p)} e^{s (L(p) - L(p)^{-1})} \in GL(N, \bC).
\label{matrix_flow}
\ee
As concerns the $\theta_a$ coordinate of $\Phi$, notice that the equation of 
motion \eqref{Phi_dynamics} entails
\be
    \PD{(\lambda_a \circ \Phi)}{t} = \PB{\lambda_a}{H} \circ \Phi 
    = \PD{H}{\theta_a} \circ \Phi = \sinh(\theta_a(\Phi)) u_a(\Phi).
\label{lambda_dot}
\ee
Therefore, since the function $u_a$ \eqref{u} is independent of the particle 
rapidities, from the knowledge of $\lambda_a \circ \Phi$ it comes effortlessly 
that 
\be
    \theta_a \circ \Phi 
    = \text{arcsinh} 
        \left( \frac{1}{u_a(\Phi)} \PD{(\lambda_a \circ \Phi)}{t} \right).
\label{theta_a_circ_Phi}
\ee
In the light of the above observations our plan is quite straightforward. 
Taking into account Ruijsenaars' theorem, we can get close control over 
the asymptotics of the eigenvalues of the matrix flow \eqref{matrix_flow}. 
Therefore, by exploiting \eqref{spectral_identification} and 
\eqref{theta_a_circ_Phi}, we can squeeze information about the asymptotic 
properties of the flow $\Phi$ \eqref{Phi} as well. The rest is technique.

In order to meet the requirements of Theorem \ref{THEOREM_Ruijsenaars}, we 
still have to diagonalize the exponent of the matrix flow \eqref{matrix_flow}. 
We can be brief here, since the same idea already appeared in Subsection 4.5 
our paper \cite{Pusztai_Gorbe}. Recalling Lemma \ref{LEMMA_hy} and definition 
\eqref{hL}, we can write that
\be
    L = \hy e^{2 \hbsTheta} \hy^{-1}
    \midand
    \hL = \hy^{-1} e^{2 \bsLambda} \hy,
\label{L_and_hL_diagonalization}
\ee
thus 
\be
    L - L^{-1} = 2 \hy \sinh(2 \hbsTheta) \hy^{-1}
    \midand
    e^{2 \bsLambda} = \hy \hL \hy^{-1}.
\label{G1}
\ee
It follows that $\forall s \in \bR$ and $\forall p \in P$ we have
\be
    \Spec \left( e^{2 \bsLambda(p)} e^{s (L(p) - L(p)^{-1})} \right)
    = \Spec \left( \hL(p) e^{2 s \sinh(2 \hbsTheta(p))} \right).
\label{spectral_identification_2}
\ee
To proceed further, for each $m \in \bN$ we introduce the $m \times m$ 
Hermitian matrix $\cR_m$ with entries
\be
    (\cR_m)_{k, l} = \delta_{k + l, m + 1}
    \qquad
    (1 \leq k, l \leq m),
\label{cR}
\ee 
and the $N \times N$ Hermitian matrix
\be
    W = \begin{bmatrix}
        \bsone_n & \bszero_n \\
        \bszero_n & \cR_n
    \end{bmatrix}.
\label{W}
\ee
Since $\cR_m^2 = \bsone_m$, both $\cR_m$ and $W$ are invertible with inverses
\be
    \cR_m^{-1} = \cR_m \midand W^{-1} = W.
\label{cR_and_W_inverse}
\ee
Next, upon introducing the matrix valued functions
\be
    \bsTheta^+ = 2 W \hbsTheta W^{-1}
    \midand
    \tilde{L} = W \hat{L} W^{-1},
\label{bsTheta_+_and_L_tilde}
\ee
from \eqref{spectral_identification} and \eqref{spectral_identification_2} 
we conclude that
\be
    \{ e^{\pm 2 \lambda_a \circ \Phi(s, p)} \mid a \in \bN_n \}
    = \Spec 
        \left( \tilde{L}(p) e^{2 s \sinh(\bsTheta^+(p))} \right)
    \qquad
    (s \in \bR, \, p \in P).
\label{spectral_identification_OK}
\ee

Now, two observations are in order. First, notice that by construction the 
matrix $\tilde{L}(p)$ is positive definite, whence for its leading principal 
minors we have $\pi_j(\tilde{L}(p)) > 0$ $(j \in \bN_N)$. Recalling 
\eqref{m_j}, it follows that $m_j(\tilde{L}(p)) > 0$. As a consequence, 
$\tilde{L}(p) \in \cM$ \eqref{cM_def}, and for each $a \in \bN_n$ the function
\be
    \lambda_a^+ \colon P \rightarrow \bR,
    \quad
    p \mapsto \half \ln(m_a(\tilde{L}(p)))
\label{lambda_+}
\ee
is well-defined and smooth. Second, from \eqref{bsTheta_+_and_L_tilde} it is 
also clear that
\be
    \bsTheta^+ 
    = \diag(\theta_1^+, \ldots, \theta_n^+, 
            - \theta_n^+, \ldots, -\theta_1^+),
\label{bsTheta_+}
\ee
where
\be
    \theta_a^+ = 2 \htheta_a \in C^\infty(P)
    \qquad
    (a \in \bN_n).
\label{theta_+}
\ee
As a consequence, $\forall p \in P$ we have $2 \sinh(\bsTheta^+(p)) \in \cD$ 
\eqref{cD_def}. The upshot of the above discussion is that, for each $p \in P$, 
Theorem \ref{THEOREM_Ruijsenaars} is directly applicable on the matrix flow
\be
    \bR \ni s
    \mapsto
    \tilde{L}(p) e^{2 s \sinh(\bsTheta^+(p))} \in GL(N, \bC),
\label{matrix_flow_OK}
\ee
and so by \eqref{spectral_identification_OK} we can obtain information about 
the asymptotic properties of the flow $\Phi$ \eqref{Phi}. To formulate our 
first result in this direction, for each $a \in \bN_n$ we introduce the smooth 
functions
\begin{align}
    & \cE_a \colon \bR \times P \rightarrow \bR,
    \quad
    (s, p) 
    \mapsto 
    \lambda_a(\Phi(s, p)) - s \sinh(\theta_a^+(p)) - \lambda_a^+(p),
    \label{cE_a}
    \\
    & \cF_a \colon \bR \times P \rightarrow \bR,
    \quad
    (s, p) 
    \mapsto 
    \sinh(\theta_a(\Phi(s, p))) u_a(\Phi(s, p)) - \sinh(\theta_a^+(p)).
    \label{cF_a}
\end{align}

\begin{LEMMA}
\label{LEMMA_asymptotics_1}
For each point $p_0 \in P$ there are strictly positive constants 
$T_0, C_0, R_0 > 0$ and a compact subset $K_0 \subset P$ containing $p_0$ in 
its interior such that $\forall s \in [T_0, \infty)$, $\forall p \in K_0$, 
and $\forall a \in \bN_n$ we have
\be
    \vert \cE_a(s, p) \vert \leq C_0 e^{-s R_0} \leq \half
    \midand 
    \vert \cF_a(s, p) \vert \leq C_0 e^{-s R_0} \leq \half.
\label{asymptotics_1}
\ee
\end{LEMMA}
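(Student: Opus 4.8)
The plan is to transfer the eigenvalue asymptotics furnished by Theorem \ref{THEOREM_Ruijsenaars} across the spectral identification \eqref{spectral_identification_OK}, and then feed the result into the defining relations \eqref{lambda_dot}--\eqref{theta_a_circ_Phi}. First I would fix the point $p_0 \in P$ and apply Theorem \ref{THEOREM_Ruijsenaars} to the pair $X_0 = (\tilde{L}(p_0), 2\sinh(\bsTheta^+(p_0))) \in \cM \times \cD$, obtaining the constants $T_0, C_0, R_0 > 0$ and the compact set $\cK_0 \subset \cM \times \cD$ containing $X_0$ in its interior. Since the map $p \mapsto (\tilde{L}(p), 2\sinh(\bsTheta^+(p)))$ is continuous (indeed smooth, by \eqref{hL_and_hF_smooth}, \eqref{bsTheta_+_and_L_tilde} and \eqref{theta_+}), its preimage of the interior of $\cK_0$ is an open neighbourhood of $p_0$ in $P$; I would shrink it to a compact neighbourhood $K_0$ of $p_0$ inside that preimage. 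Then for all $p \in K_0$ and all $s \in [T_0, \infty)$, the theorem's estimates \eqref{THM_bslambda}--\eqref{THM_decay_cond} apply verbatim to the matrix flow \eqref{matrix_flow_OK}, with the index $j$ running over $\bN_N$ and $m_j = m_j(\tilde{L}(p))$, $D_{j,j} = 2\sinh(\theta^+_{\sigma(j)}(p))$ for the appropriate reindexing $\sigma$ dictated by the ordering \eqref{bsTheta_+} and the genericity $\theta^+_1 > \ldots > \theta^+_n > 0$.

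Next I would extract the $\lambda_a$-asymptotics. By \eqref{spectral_identification_OK}, the eigenvalues of $\tilde{L}(p) e^{2s\sinh(\bsTheta^+(p))}$ are precisely $\{e^{\pm 2\lambda_a(\Phi(s,p))} \mid a \in \bN_n\}$; the $n$ largest in modulus are the $e^{2\lambda_a(\Phi(s,p))}$ (for $s$ large, since $\theta^+_a > 0$ forces $\lambda_a(\Phi(s,p)) \to +\infty$), so for $a \in \bN_n$ we may identify $\bslambda^X_a(s) = e^{2\lambda_a(\Phi(s,p))}$. Taking $\half\ln$ of \eqref{THM_bslambda} gives
\be
    \lambda_a(\Phi(s,p))
    = s\sinh(\theta^+_a(p)) + \lambda_a^+(p) + \half\ln\big(1 + \rho_a^X(s)\big),
\label{lambda_a_asymptotic}
\ee
recalling $\lambda_a^+ = \half\ln(m_a(\tilde{L}))$ from \eqref{lambda_+} and $D_{a,a} = 2\sinh(\theta^+_a)$. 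Comparing with the definition \eqref{cE_a}, this says $\cE_a(s,p) = \half\ln(1 + \rho_a^X(s))$. Since $|\rho_a^X(s)| \leq C_0 e^{-sR_0} \leq \half$ by \eqref{THM_decay_cond}, the elementary bound $|\ln(1+x)| \leq 2|x|$ valid for $|x| \leq \half$ yields $|\cE_a(s,p)| \leq C_0 e^{-sR_0}$ after absorbing the factor $2$ into an enlarged constant (and, if necessary, enlarging $T_0$ so that the right-hand side is $\leq \half$). That handles the first inequality in \eqref{asymptotics_1}.

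For $\cF_a$ I would use the equation of motion. By \eqref{lambda_dot}, $\sinh(\theta_a(\Phi(s,p))) u_a(\Phi(s,p)) = \partial(\lambda_a \circ \Phi)/\partial t$, so differentiating \eqref{lambda_a_asymptotic} in $s$ gives
\be
    \sinh(\theta_a(\Phi(s,p))) u_a(\Phi(s,p))
    = \sinh(\theta^+_a(p)) + \frac{\dot\rho_a^X(s)}{2(1 + \rho_a^X(s))},
\label{F_a_asymptotic}
\ee
which by \eqref{cF_a} means $\cF_a(s,p) = \dot\rho_a^X(s) / \big(2(1 + \rho_a^X(s))\big)$; alternatively one can read this straight off \eqref{THM_bslambda_dot} divided by \eqref{THM_bslambda}. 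Using $|\rho_a^X(s)| \leq \half$ so that $|1 + \rho_a^X(s)| \geq \half$, together with $|\dot\rho_a^X(s)| \leq C_0 e^{-sR_0}$ from \eqref{THM_decay_cond}, gives $|\cF_a(s,p)| \leq C_0 e^{-sR_0}$ after again adjusting the constant and, if needed, $T_0$. This establishes the second inequality in \eqref{asymptotics_1} and completes the proof.

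I expect the only genuine subtlety to be the bookkeeping in the first paragraph: one must be careful that the ordering convention \eqref{E_X_evals_assumption} in Theorem \ref{THEOREM_Ruijsenaars} matches the ordering $\theta^+_1 > \ldots > \theta^+_n > 0 > -\theta^+_n > \ldots > -\theta^+_1$ coming from \eqref{bsTheta_+}, so that the index $a \in \bN_n$ of the theorem really does correspond to the eigenvalue $e^{2\lambda_a(\Phi(s,p))}$ and to the minor $m_a(\tilde{L})$ appearing in \eqref{lambda_+}; the whole point of conjugating by $W$ in \eqref{bsTheta_+_and_L_tilde} was to arrange exactly this. Everything else is the elementary $\ln$ and quotient estimates sketched above, performed uniformly on the compact set $K_0$ where Ruijsenaars' constants are valid.
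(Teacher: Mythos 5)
Your proposal is correct and follows essentially the same route as the paper: apply Theorem \ref{THEOREM_Ruijsenaars} to the pair $(\tilde{L}(p_0), 2\sinh(\bsTheta^+(p_0)))$, pull back a compact neighbourhood $K_0$ through the continuous map $p \mapsto (\tilde{L}(p), 2\sinh(\bsTheta^+(p)))$, identify $\cE_a(s,p) = \half\ln(1+\rho_a^X(s))$ via \eqref{spectral_identification_OK} and \eqref{lambda_+}, and obtain $\cF_a(s,p) = \dot\rho_a^X(s)/\bigl(2(1+\rho_a^X(s))\bigr)$ from the equation of motion \eqref{lambda_dot}, exactly as in the paper (which also notes, in one line, that $\rho_a^X(s)$ and $\dot\rho_a^X(s)$ are real, so the logarithm and quotient are legitimate). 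The only cosmetic difference is that no enlargement of constants or of $T_0$ is actually needed, since the factor $\half$ in front of the logarithm and the lower bound $1+\rho_a^X(s)\ge\half$ already reproduce the bound $C_0 e^{-sR_0}\le\half$ verbatim.
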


\begin{proof}
From our earlier discussion it is clear that the map
\be
    \Gamma \colon P \rightarrow \cM \times \cD,
    \quad
    p \mapsto (\tilde{L}(p), 2 \sinh(\bsTheta^+(p)))
\label{Gamma}
\ee
is well-defined and smooth. Now, take an arbitrary point $p_0 \in P$. 
Associated with $X_0 = \Gamma(p_0)$ we also choose the non-negative constants 
$T_0, C_0, R_0 > 0$ and the compact subset $\cK_0 \subset \cM \times \cD$ 
containing $X_0$ in its interior, whose existence is guaranteed by Theorem 
\ref{THEOREM_Ruijsenaars}. Note that by the continuity of $\Gamma$ we can 
find a compact subset $K_0 \subset P$ such that
\be
    p_0 
    \in \text{int}(K_0) 
    \subset K_0 
    \subset \Gamma^{-1}(\text{int}(\cK_0)).
\label{K_0_OK}
\ee

To proceed further, let $s \in [T_0, \infty)$, $p \in K_0$ and $a \in \bN_n$
be arbitrary elements. Since by \eqref{K_0_OK} we have 
$X = \Gamma(p) \in \cK_0$, from \eqref{THM_bslambda} we infer that
\be
    e^{2 \lambda_a(\Phi(s, p))}
    = m_a(\tilde{L}(p)) e^{2 s \sinh(\theta_a^+(p))} 
        \left( 1 + \rho^X_a(s) \right),
\label{asymptotics_1_exp_form}
\ee
whereas \eqref{THM_bslambda_dot} leads to
\be
    \PD{e^{2 \lambda_a \circ \Phi}}{t} (s, p)
    = m_a(\tilde{L}(p)) e^{2 s \sinh(\theta_a^+(p))} 
        \left( 
            2 \sinh(\theta_a^+(p)) (1 + \rho^X_a(s)) 
            + \dot{\rho}_a^X (s)
        \right).
\label{asymptotics_2_exp_form}
\ee
From \eqref{asymptotics_1_exp_form} and \eqref{asymptotics_2_exp_form}
it is clear that both $\rho^X_a(s)$ and $\dot{\rho}_a^X (s)$ are real 
numbers. Moreover, due to the estimates displayed in \eqref{THM_decay_cond}, 
they are of magnitude less than or equal to $1 / 2$. So, by taking the 
logarithm of \eqref{asymptotics_1_exp_form}, from the definitions 
\eqref{lambda_+} and \eqref{cE_a} it is straightforward to verify that
\be
    \cE_a(s, p) = \half \ln(1 + \rho_a^X(s)),
\label{C1}
\ee
and so we may write
\be
    \vert \cE_a(s, p) \vert 
    = \half \vert \ln(1 + \rho_a^X(s)) \vert
    \leq \vert \rho_a^X(s) \vert 
    \leq C_0 e^{-s R_0}
    \leq \half.
\label{cE_estim}
\ee

Next, from the equation of motion \eqref{lambda_dot} and the relationship
\eqref{asymptotics_1_exp_form} we see that
\be
\begin{split}
    \PD{e^{2 \lambda_a \circ \Phi}}{t} (s, p)
    & = 2 \PD{(\lambda_a \circ \Phi)}{t} (s, p) e^{2 \lambda_a(\Phi(s, p))}
    = 2 \sinh(\theta_a(\Phi(s, p))) u_a(\Phi(s, p)) 
        e^{2 \lambda_a(\Phi(s, p))}
    \\
    & = 2 \sinh(\theta_a(\Phi(s, p))) u_a(\Phi(s, p))
    m_a(\tilde{L}(p)) e^{2 s \sinh(\theta_a^+(p))} 
        \left( 1 + \rho^X_a(s) \right).
\end{split}
\label{C2}
\ee
Notice that by plugging this formula into the left hand side of 
\eqref{asymptotics_2_exp_form} we obtain that
\be
    \sinh(\theta_a(\Phi(s, p))) u_a(\Phi(s, p))
    = \sinh(\theta_a^+(p)) + \half \frac{\dot{\rho}_a^X(s)}{1 + \rho_a^X(s)}.
\label{C3}
\ee
Thus, recalling the definition \eqref{cF_a}, we end up with the estimate
\be
    \vert \cF_a(s, p) \vert 
    = \half \frac{\vert \dot{\rho}_a^X(s) \vert}{1 + \rho_a^X(s)}
    \leq \vert \dot{\rho}_a^X(s) \vert
    \leq C_0 e^{-s R_0}
    \leq \half,
\label{cF_estim}
\ee
and so the proof is complete.
\end{proof}

\subsection{Asymptotics along the trajectories}
\label{SUBSECTION_Asymptotics}
Somewhat informally, Lemma \ref{LEMMA_asymptotics_1} tells us that the time 
evolution of the particle positions is asymptotically linear. However, 
looking back to the definition \eqref{cF_a}, the asymptotic characterization 
of the particle rapidities given by Lemma \ref{LEMMA_asymptotics_1} is quite 
implicit. To make it explicit, we need control over the value of the 
function $u_a$ \eqref{u} along the trajectories for large values of time.
\emph{As a rule for this subsection, we find it convenient to fix an 
arbitrary point $p_0 \in P$, together with the corresponding non-negative 
constants $T_0$, $C_0$, $R_0$ and the compact subset $K_0$ given by Lemma 
\ref{LEMMA_asymptotics_1}.} Furthermore, we introduce
\be
    l_0 
    = \max \{ \vert \lambda_a^+(p) \vert \mid a \in \bN_n, \, p \in K_0 \} 
        \geq 0,
\label{l_0}
\ee
and
\be
    \varDelta_0 
    = \min \{ 
        \min \{ 
            \theta_1^+(p) - \theta_2^+(p),
            \ldots,
            \theta_{n - 1}^+(p) - \theta_n^+(p),
            2 \theta_n^+(p)
        \} 
        \mid p \in K_0
    \} > 0.
\label{varDelta_0}
\ee
Since $K_0$ is compact, by the extreme value theorem both constants are 
well-defined.

\begin{PROPOSITION}
\label{PROPOSITION_particle_repulsion}
There is a constant $T_1 \geq T_0$ such that if $s \in [T_1, \infty)$,
$p \in K_0$, $a, b, c \in \bN_n$ and $a \neq b$, then
\be
    \vert \lambda_a(\Phi(s, p)) - \lambda_b(\Phi(s, p)) \vert
    \geq s \frac{\varDelta_0}{2} 
    \geq \ln(\sqrt{2})
    \midand
    2 \lambda_c(\Phi(s, p)) \geq s \frac{\varDelta_0}{2} \geq \ln(\sqrt{2}).
\label{particle_repulsion}
\ee
\end{PROPOSITION}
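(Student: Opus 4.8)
The plan is to read the asymptotics of the particle positions straight off Lemma \ref{LEMMA_asymptotics_1}: along any trajectory $s \mapsto \Phi(s, p)$ the position $\lambda_a$ grows linearly with slope $\sinh(\theta_a^+(p))$, up to the bounded shift $\lambda_a^+(p)$ and a uniformly small remainder $\cE_a(s, p)$. Since the slopes obey $\sinh(\theta_1^+(p)) > \ldots > \sinh(\theta_n^+(p)) > 0$ and are genuinely separated --- which is what the constant $\varDelta_0$ \eqref{varDelta_0} quantifies --- the linear terms alone drive the particles apart at a definite rate, and it only remains to pick $T_1$ large enough that the bounded and the small terms cannot spoil this.

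Concretely, I would fix $p_0 \in P$ together with the associated constants $T_0, C_0, R_0$ and the compact set $K_0$ supplied by Lemma \ref{LEMMA_asymptotics_1}, and rewrite the definition \eqref{cE_a} in the form
\be
    \lambda_a(\Phi(s, p)) - \lambda_b(\Phi(s, p))
    = s \big( \sinh(\theta_a^+(p)) - \sinh(\theta_b^+(p)) \big)
    + \big( \lambda_a^+(p) - \lambda_b^+(p) \big)
    + \big( \cE_a(s, p) - \cE_b(s, p) \big) ,
\ee
valid for all $s \in [T_0, \infty)$, $p \in K_0$ and $a, b \in \bN_n$. Next I would use the ordering $\theta_1^+(p) > \ldots > \theta_n^+(p) > 0$ --- which holds because $\theta_a^+ = 2 \htheta_a$ and $\htheta(p, g) \in Q$ --- together with the elementary estimate $\sinh(x) - \sinh(y) \geq x - y$ for $x \geq y \geq 0$ (the derivative of $\sinh$ being $\geq 1$ on $[0, \infty)$). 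Combined with \eqref{varDelta_0} and telescoping over the consecutive gaps, this gives $\sinh(\theta_a^+(p)) - \sinh(\theta_b^+(p)) \geq \theta_a^+(p) - \theta_b^+(p) \geq \varDelta_0$ whenever $a < b$, as well as $\sinh(\theta_c^+(p)) \geq \theta_c^+(p) \geq \theta_n^+(p) \geq \varDelta_0 / 2$ for every $c \in \bN_n$. On the other hand $\vert \lambda_a^+(p) - \lambda_b^+(p) \vert \leq 2 l_0$ by \eqref{l_0}, and $\vert \cE_a(s, p) - \cE_b(s, p) \vert \leq 1$ since $\vert \cE_a(s, p) \vert \leq 1/2$ by Lemma \ref{LEMMA_asymptotics_1}.

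Feeding these three bounds into the identity above --- and, for the second inequality of the Proposition, into \eqref{cE_a} directly --- yields, for $a < b$ and arbitrary $c \in \bN_n$,
\be
    \lambda_a(\Phi(s, p)) - \lambda_b(\Phi(s, p)) \geq s \varDelta_0 - 2 l_0 - 1
    \midand
    2 \lambda_c(\Phi(s, p)) \geq s \varDelta_0 - 2 l_0 - 1 .
\ee
Because $\Phi(s, p) \in P$ already forces $\lambda_1(\Phi(s, p)) > \ldots > \lambda_n(\Phi(s, p)) > 0$, the first estimate in fact bounds $\vert \lambda_a(\Phi(s, p)) - \lambda_b(\Phi(s, p)) \vert$ from below for every pair $a \neq b$. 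Hence it suffices to set $T_1 = \max \{ T_0, (4 l_0 + 2) / \varDelta_0 \}$: for $s \geq T_1$ one has $s \varDelta_0 / 2 \geq 2 l_0 + 1$, whence $s \varDelta_0 - 2 l_0 - 1 \geq s \varDelta_0 / 2$ and $s \varDelta_0 / 2 \geq 1 > \ln(\sqrt{2})$, which are precisely the inequalities asserted in \eqref{particle_repulsion}.

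The argument is little more than bookkeeping with the constants furnished by Lemma \ref{LEMMA_asymptotics_1}; the only step that requires a moment's care --- and the one place I expect any friction --- is that $\varDelta_0$ is defined through the $\theta_a^+$ and not through the actual growth rates $\sinh(\theta_a^+)$, so that one must use the monotonicity (indeed convexity) of $\sinh$ on $[0, \infty)$ to convert a separation of the $\theta_a^+$ into a separation of the slopes. Everything after that follows from the triangle inequality.
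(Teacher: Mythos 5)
Your proposal is correct and follows essentially the same route as the paper: the same decomposition of $\lambda_a\circ\Phi$ via \eqref{cE_a}, the same bounds through $l_0$, $\varDelta_0$ and Lemma \ref{LEMMA_asymptotics_1}, and the same choice of $T_1$ up to an inessential constant (the paper adds $\ln(2)/\varDelta_0$ to the max, while you note $s\varDelta_0/2 \geq 2l_0+1 \geq 1 > \ln(\sqrt{2})$ already suffices). Your explicit use of the monotonicity of $\sinh$ and the telescoping over consecutive gaps just spells out what the paper treats as obvious.
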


\begin{proof}
Let $p \in K_0$, $a, b, c \in \bN_n$ and suppose that $a \neq b$. Recalling
\eqref{varDelta_0}, it is obvious that
\be
    \vert \theta_a^+(p) - \theta_b^+(p) \vert \geq \varDelta_0
    \midand
    2 \theta_c^+(p) \geq \varDelta_0.
\label{UU1}
\ee
Thus, remembering the definitions \eqref{cE_a} and \eqref{l_0}, from Lemma 
\ref{LEMMA_asymptotics_1} it is clear $\forall s \in [T_0, \infty)$ we can 
write that
\be
\begin{split}
    & \vert \lambda_a(\Phi(s, p)) - \lambda_b(\Phi(s, p)) \vert
    \\
    & \quad = 
    \left\vert s(\sinh(\theta_a^+(p)) - \sinh(\theta_b^+(p))) 
        + \lambda_a^+(p) - \lambda_b^+(p) + \cE_a(s, p) - \cE_b(s, p)
    \right\vert
    \\
    & \quad \geq
    s \vert \sinh(\theta_a^+(p)) - \sinh(\theta_b^+(p)) \vert 
        - \vert \lambda_a^+(p) \vert - \vert \lambda_b^+(p) \vert 
        - \vert \cE_a(s, p) \vert - \vert \cE_b(s, p) \vert
    \\
    & \quad \geq
    s \varDelta_0 - 2 l_0 - 1 
    = s \frac{\varDelta_0}{2} 
        + \frac{\varDelta_0}{2} 
            \left( s - \frac{4 l_0 + 2}{\varDelta_0} \right).
\end{split}
\label{UU2}
\ee
Along the same lines, one can easily infer that
\be
    2 \lambda_c(\Phi(s, p)) 
    \geq 
    s \frac{\varDelta_0}{2} 
    + \frac{\varDelta_0}{2} 
        \left( s - \frac{4 l_0 + 2}{\varDelta_0} \right).
\label{UU3}
\ee
So, with the constant
\be
    T_1 = \max \{ T_0, (4 l_0 + 2) / \varDelta_0, \ln(2) / \varDelta_0 \}
\label{T_1}
\ee
the Proposition follows.
\end{proof}

As an important remark, note that \emph{in the rest of this subsection 
$T_1$ shall always stand for the constant provided by Proposition 
\ref{PROPOSITION_particle_repulsion}.} 

Before going into the study of the asymptotic properties of $u_a$ \eqref{u},
notice that $\forall y \in \bR$ satisfying $\vert y \vert \geq \ln(\sqrt{2})$
we have the trivial inequalities
\be
    \frac{1}{\vert \sinh(y) \vert} \leq 4 e^{-\vert y \vert} \leq 2 \sqrt{2},
    \quad
    \frac{1}{\sinh(y)^2} \leq 16 e^{- 2 \vert y \vert} \leq 8,
    \quad
    \vert \coth(y) \vert \leq 3.
\label{trivial_ineq}
\ee

\begin{PROPOSITION}
\label{PROPOSITION_u_asymptotics}
There is a constant $C_1 \geq 0$ such that $\forall s \in [T_1, \infty)$,
 $\forall p \in K_0$, $\forall a \in \bN_n$ we have
\be
    u_a(\Phi(s, p)) \leq 1 + C_1 e^{-s \varDelta_0}.
\label{u_asymptotics}
\ee
\end{PROPOSITION}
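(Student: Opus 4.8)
The plan is to substitute the trajectory lower bounds of Proposition \ref{PROPOSITION_particle_repulsion} into the explicit product \eqref{u} defining $u_a$ and to estimate each factor with the elementary inequalities collected in \eqref{trivial_ineq}.

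First I would fix $s \in [T_1, \infty)$, $p \in K_0$, $a \in \bN_n$ and abbreviate $q = \Phi(s, p) \in P$. Proposition \ref{PROPOSITION_particle_repulsion} gives $\vert \lambda_a(q) - \lambda_c(q) \vert \geq s \varDelta_0 / 2 \geq \ln(\sqrt{2})$ for every $c \neq a$, and $2 \lambda_c(q) \geq s \varDelta_0 / 2 \geq \ln(\sqrt{2})$ for every $c \in \bN_n$; averaging the latter estimate over the two indices $a$ and $c$ moreover yields $\lambda_a(q) + \lambda_c(q) \geq s \varDelta_0 / 2 \geq \ln(\sqrt{2})$. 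Applying the middle inequality of \eqref{trivial_ineq} to $y = 2 \lambda_a(q)$ and to $y = \lambda_a(q) \mp \lambda_c(q)$ — and recalling that $g = (\mu, \nu)$ is fixed throughout the section — every term occurring in \eqref{u} is then controlled: each of $\sin(\nu)^2 / \sinh(2 \lambda_a(q))^2$ and $\sin(\mu)^2 / \sinh(\lambda_a(q) \mp \lambda_c(q))^2$ is at most $K e^{-s \varDelta_0}$, where $K = 16 \max \{ \sin(\mu)^2, \sin(\nu)^2 \}$ is a genuine constant.

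Next I would read off from \eqref{u} that $u_a(q)^2$ is a product of $2 n - 1$ factors, each lying in the interval $[1, 1 + K e^{-s \varDelta_0}]$, so that $u_a(q)^2 \leq (1 + K e^{-s \varDelta_0})^{2 n - 1}$; since $u_a(q) \geq 1$, the same bound holds for $u_a(q)$ itself. Finally, as $T_1 \geq \ln(2) / \varDelta_0$ forces $t := K e^{-s \varDelta_0}$ to lie in $[0, K / 2]$, and the function $t \mapsto ((1 + t)^{2 n - 1} - 1) / t$ is continuous, hence bounded on the compact interval $[0, K / 2]$, one obtains $(1 + t)^{2 n - 1} \leq 1 + C_1' t$ for a suitable constant $C_1' \geq 0$; absorbing $K$ into the constant gives $u_a(q) \leq 1 + C_1 e^{-s \varDelta_0}$, which is \eqref{u_asymptotics}.

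There is no serious obstacle here: the argument is a uniform-in-$K_0$ bookkeeping of elementary bounds, and the heavy lifting has already been done in Proposition \ref{PROPOSITION_particle_repulsion}. The only mildly delicate point is that the combination $\lambda_a + \lambda_c$ does not appear verbatim in Proposition \ref{PROPOSITION_particle_repulsion}, so one must note that it too is bounded below — by averaging the positivity estimates $2 \lambda_a(q), 2 \lambda_c(q) \geq s \varDelta_0 / 2$ — to guarantee that, for $s \geq T_1$ and $p \in K_0$, all the hyperbolic sines in the denominators of \eqref{u} stay uniformly away from zero.
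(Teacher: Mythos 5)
Your proof is correct and takes essentially the same route as the paper: both feed the lower bounds of Proposition \ref{PROPOSITION_particle_repulsion} (including the averaged bound $\lambda_a + \lambda_c \geq s \varDelta_0 / 2$, which the paper uses implicitly) and the elementary inequalities \eqref{trivial_ineq} into the explicit product \eqref{u}, and then linearize the resulting bound in $e^{-s \varDelta_0}$ on a compact range fixed by $T_1 \geq \ln(2)/\varDelta_0$. The only cosmetic difference is that the paper works with $\ln(u_a)$ via $\ln(1 + y) \leq y$ and convexity of the exponential, whereas you bound $u_a \leq u_a^2 \leq (1 + K e^{-s\varDelta_0})^{2n-1}$ and linearize the power directly.
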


\begin{proof}
Since $\forall y \geq 0$ we have $\ln(1 + y) \leq y$, by taking the
logarithm of $u_a$ $(a \in \bN_n)$ we can write that
\be
    0 < \ln(u_a) 
    \leq 
    \half \frac{1}{\sinh(2 \lambda_a)^2} 
    + \half \sum_{\substack{c = 1 \\ (c \neq a)}}^n 
        \left( 
            \frac{1}{\sinh(\lambda_a - \lambda_c)^2} 
            + \frac{1}{\sinh(\lambda_a + \lambda_c)^2} 
        \right).
\label{EE1}
\ee
Thus, if $s \geq T_1$ and $p \in K_0$, then from Proposition 
\ref{PROPOSITION_particle_repulsion} and the inequalities appearing in
\eqref{trivial_ineq} we can easily infer that
\be
    \ln(u_a(\Phi(s, p))) 
    \leq 
    \half \frac{1 + 2 (n - 1)}{\sinh \left(s \frac{\varDelta_0}{2} \right)^2}
    \leq
    8 (N - 1) e^{- s \varDelta_0}
    \leq 4 (N - 1).
\label{EE2}
\ee
Since the exponential function is convex on the interval $[0, 4 (N - 1)]$, 
with the non-negative constant
\be
    C_1 = 2 (e^{4 (N - 1)} - 1)
\label{EE3}
\ee
we can write that
\be
    1 < u_a(\Phi(s, p)) 
    \leq \exp \left( 8 (N - 1) e^{- s \varDelta_0} \right) 
    \leq 1 + C_1 e^{- s \varDelta_0},
\label{EE4}
\ee
whence the proof is complete.
\end{proof}

During the later developments we shall also need control over the partial 
derivatives of the smooth functions $u_a$ \eqref{u} along the trajectories. 
For brevity, we introduce the notations
\be
    \cL^a_b = \PD{\ln(u_a)}{\lambda_b} \in C^\infty(P)
    \midand
    \cL^a_{b, c} = \PD{\cL^a_b}{\lambda_c} 
    = \frac{\partial^2 \ln(u_a)}{ \partial \lambda_b \partial \lambda_c}
        \in C^\infty(P)
    \qquad
    (a, b, c \in \bN_n).
\label{cL}
\ee
Utilizing the auxiliary function
\be
    \Xi(y, \alpha) 
    = \frac{\sin(\alpha)^2 \coth(y)}{\sin(\alpha)^2 + \sinh(y)^2}
    \qquad
    (y \in \bR \setminus \{ 0 \}, \, \alpha \in \bR \setminus \bZ \pi),
\label{Xi}
\ee
it is straightforward to verify that
\be
    \cL_a^a 
    = - 2 \Xi(2 \lambda_a, \nu) 
    - \sum_{\substack{d = 1 \\ (d \neq a)}}^n 
        \left( 
            \Xi(\lambda_a - \lambda_d, \mu) + \Xi(\lambda_a + \lambda_d, \mu)
        \right)
    \qquad
    (a \in \bN_n),
\label{cL_a_a_explicit}
\ee
while
\be
    \cL^a_b 
    = - \Xi(\lambda_b - \lambda_a, \mu) - \Xi(\lambda_b + \lambda_a, \mu)
    \qquad
    (a, b \in \bN_n, \, a \neq b).
\label{cL_a_b_explicit}
\ee
As an immediate consequence of \eqref{trivial_ineq}, let us also observe that
\be
    \vert \Xi(y, \alpha) \vert 
    \leq
    \frac{\vert \coth(y) \vert}{\sinh(y)^2}
    \leq 48 e^{- 2 \vert y \vert}
    \qquad
    (\vert y \vert \geq \ln(\sqrt{2})).
\label{Xi_estim}
\ee
Thus, by Proposition \ref{PROPOSITION_particle_repulsion} the following
result is immediate.

\begin{PROPOSITION}
\label{PROPOSITION_cL_a_b_estim}
There is a constant $C_2 \geq 0$ such that $\forall s \in [T_1, \infty)$,
$\forall p \in K_0$, $\forall a, b \in \bN_n$ we have
\be
    \vert \cL^a_b (\Phi(s, p)) \vert \leq C_2 e^{-s \varDelta_0}.
\label{cL_a_b_estim}
\ee
\end{PROPOSITION}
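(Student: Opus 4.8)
The plan is to combine the closed-form expressions \eqref{cL_a_a_explicit} and \eqref{cL_a_b_explicit} for $\cL^a_b$ with the pointwise bound \eqref{Xi_estim} on the auxiliary function $\Xi$; the only point that needs care is to verify that \eqref{Xi_estim} is legitimately applicable along the trajectory, i.e. that every argument of $\Xi$ occurring in those formulae stays large.

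First I would fix $s \in [T_1, \infty)$, $p \in K_0$ and $a, b \in \bN_n$, and abbreviate $q = \Phi(s, p) \in P$. By Proposition \ref{PROPOSITION_particle_repulsion}, for all $c, d \in \bN_n$ with $c \neq d$ one has $\vert \lambda_c(q) - \lambda_d(q) \vert \geq s \varDelta_0 / 2 \geq \ln(\sqrt{2})$ and $2 \lambda_c(q) \geq s \varDelta_0 / 2 \geq \ln(\sqrt{2})$; and since $\lambda_1(q) > \ldots > \lambda_n(q) > 0$, for $c \neq d$ we also get $\lambda_c(q) + \lambda_d(q) \geq 2 \lambda_n(q) \geq s \varDelta_0 / 2 \geq \ln(\sqrt{2})$. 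Consequently every argument of $\Xi$ appearing in \eqref{cL_a_a_explicit} and \eqref{cL_a_b_explicit}---namely $2 \lambda_a(q)$, the differences $\lambda_a(q) - \lambda_d(q)$ and the sums $\lambda_a(q) + \lambda_d(q)$ with $d \neq a$, and likewise $\lambda_b(q) \pm \lambda_a(q)$ when $a \neq b$---is nonzero with modulus at least $s \varDelta_0 / 2 \geq \ln(\sqrt{2})$. Moreover $g = (\mu, \nu) \in \tmfM \subseteq \mfM$ forces $\sin(\mu) \neq 0 \neq \sin(\nu)$, so $\mu, \nu \notin \bZ \pi$ and the second slot of each $\Xi$ lies in its admissible range.

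Then I would simply invoke \eqref{Xi_estim}: for any such argument $y$ (with $\alpha \in \{ \mu, \nu \}$) one obtains $\vert \Xi(y, \alpha) \vert \leq 48 e^{-2 \vert y \vert} \leq 48 e^{-s \varDelta_0}$. Counting the terms in \eqref{cL_a_a_explicit}, $\cL^a_a$ is a combination of such $\Xi$-values of total weight $2 + 2(n - 1) = N$, hence $\vert \cL^a_a(q) \vert \leq 48 N e^{-s \varDelta_0}$; and \eqref{cL_a_b_explicit} exhibits $\cL^a_b$ for $a \neq b$ as a sum of just two such terms, so $\vert \cL^a_b(q) \vert \leq 96 e^{-s \varDelta_0}$. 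Since $N = 2 n \geq 2$, the single choice $C_2 = 48 N$ dominates both bounds, which proves the proposition. There is no real obstacle here beyond this bit of bookkeeping; the genuine input is Proposition \ref{PROPOSITION_particle_repulsion}, which is exactly what makes the decay estimate \eqref{Xi_estim} available uniformly along the flow on $K_0$.
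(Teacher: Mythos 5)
Your proposal is correct and follows essentially the same route as the paper's proof: apply Proposition \ref{PROPOSITION_particle_repulsion} to ensure all arguments of $\Xi$ in \eqref{cL_a_a_explicit}--\eqref{cL_a_b_explicit} have modulus at least $s \varDelta_0 / 2 \geq \ln(\sqrt{2})$, then use \eqref{Xi_estim} termwise and count, arriving at the same constant $C_2 = 48 N$. Your explicit check that the sums $\lambda_a + \lambda_d$ are also bounded below (via $2\lambda_c \geq s\varDelta_0/2$) is a point the paper uses implicitly, so nothing is missing.
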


\begin{proof}
Let $s \in [T_1, \infty)$ and $p \in K_0$ be arbitrary elements, and also
introduce the temporary shorthand notation 
\be
    p' = \Phi(s, p) \in P. 
\label{NN_p_prime}
\ee    
Keeping in mind Proposition \ref{PROPOSITION_particle_repulsion} and the 
estimate \eqref{Xi_estim}, from \eqref{cL_a_a_explicit} it is clear that 
$\forall a \in \bN_n$ we can write that
\be
\begin{split}
    \vert \cL^a_a(\Phi(s, p)) \vert
    & \leq
    2 \vert \Xi(2 \lambda_a(p'), \nu) \vert 
    + \sum_{\substack{d = 1 \\ (d \neq a)}}^n 
        \left( 
            \vert 
                \Xi(\lambda_a(p') - \lambda_d(p'), \mu) 
            \vert
            + 
            \vert
                \Xi(\lambda_a(p') + \lambda_d(p'), \mu)
            \vert
        \right)
    \\
    & \leq
    96 e^{-2 \vert 2 \lambda_a(p') \vert}
    + 48 \sum_{\substack{d = 1 \\ (d \neq a)}}^n 
        \left( 
            e^{-2 \vert \lambda_a(p') - \lambda_d(p') \vert}
            + e^{-2 \vert \lambda_a(p') + \lambda_d(p') \vert}
        \right)
    \leq 48 N e^{-s \varDelta_0}.
\end{split}
\label{NN1}
\ee
On the other hand, if $a, b \in \bN_n$ and $a \neq b$, then 
\eqref{cL_a_b_explicit} yields
\be
\begin{split}
    \vert \cL^a_b(\Phi(s, p)) \vert
    & \leq
    \vert \Xi(\lambda_b(p') - \lambda_a(p'), \mu) \vert
    + \vert \Xi(\lambda_b(p') + \lambda_a(p'), \mu) \vert
    \\
    & \leq
    48 e^{-2 \vert \lambda_b(p') - \lambda_a(p') \vert} 
    + 48 e^{-2 \vert \lambda_b(p') + \lambda_a(p') \vert}
    \leq 
    96 e^{-s \varDelta_0}.
\end{split}
\label{NN2}
\ee
Therefore, with $C_2 = 48 N$ the Proposition follows.
\end{proof}

Turning to the second order partial derivatives of $\ln(u_a)$, bear in mind
that the partial derivative of the function $\Xi$ \eqref{Xi} with respect to 
its first variable $y$ is given by
\be
    \Xi'(y, \alpha)
    = - \frac{2 \sin(\alpha)^2 \cosh(y)^2}{(\sin(\alpha)^2 + \sinh(y)^2)^2}
    - \frac{\sin(\alpha)^2}{\sin(\alpha)^2 + \sinh(y)^2} \frac{1}{\sinh(y)^2}.
\label{Xi_der}
\ee
Making use of this derivative, from \eqref{cL_a_a_explicit} and 
\eqref{cL_a_b_explicit} we find immediately that
\be
    \cL^a_{a, a} 
    = - 4 \Xi'(2 \lambda_a, \nu)
    - \sum_{\substack{d = 1 \\ (d \neq a)}}^n 
        \left( 
            \Xi'(\lambda_a - \lambda_d, \mu) 
            + \Xi'(\lambda_a + \lambda_d, \mu)
        \right)
    \qquad
    (a \in \bN_n).
\label{cL_a_a_a}
\ee
Notice that the remaining second order partial derivatives take much simpler 
forms. Indeed, for all $a, b \in \bN_n$ satisfying $a \neq b$ we have
\begin{align}
    & \cL^a_{b, b}  
    = - \Xi'(\lambda_b - \lambda_a, \mu) - \Xi'(\lambda_b + \lambda_a, \mu),
    \label{cL_a_b_b}
    \\
    & \cL^a_{a, b} = \cL^a_{b, a} 
    = \Xi'(\lambda_b - \lambda_a, \mu) - \Xi'(\lambda_b + \lambda_a, \mu),
    \label{cL_a_a_b}
\end{align}
whereas for all $a, b, c \in \bN_n$ satisfying $c \neq a \neq b \neq c$ we
can write that 
\be
    \cL^a_{b, c} = \cL^a_{c, b} = 0.
\label{cL_a_b_c}
\ee
Now, in order to cook up convenient estimates for the above objects, let us
observe that by applying \eqref{trivial_ineq} on \eqref{Xi_der} we obtain
\be
    \vert \Xi'(y, \alpha) \vert
    \leq \frac{2 \cosh(y)^2}{\sinh(y)^4} + \frac{1}{\sinh(y)^2}
    = \frac{2 \coth(y)^2 + 1}{\sinh(y)^2}
    \leq 304 e^{-2 \vert y \vert}
    \qquad
    (\vert y \vert \geq \ln(\sqrt{2})).
\label{Xi_der_estim}
\ee
Therefore, without going into the details, the same ideas we presented in 
the proof of Proposition \ref{PROPOSITION_cL_a_b_estim} lead to the following 
result immediately.

\begin{PROPOSITION}
\label{PROPOSITION_cL_a_b_c_estim}
There is a constant $C_3 \geq 0$ such that $\forall s \in [T_1, \infty)$,
$\forall p \in K_0$, $\forall a, b, c \in \bN_n$ we have
\be
    \vert \cL^a_{b, c} (\Phi(s, p)) \vert \leq C_3 e^{-s \varDelta_0}.
\label{cL_a_b_c_estim}
\ee
\end{PROPOSITION}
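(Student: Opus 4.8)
The plan is to repeat verbatim the argument that established Proposition \ref{PROPOSITION_cL_a_b_estim}, now feeding in the explicit expressions \eqref{cL_a_a_a}, \eqref{cL_a_b_b}, \eqref{cL_a_a_b} and \eqref{cL_a_b_c} for the second order partial derivatives of $\ln(u_a)$, together with the pointwise bound \eqref{Xi_der_estim} on the derivative $\Xi'$. First I would fix arbitrary $s \in [T_1, \infty)$ and $p \in K_0$ and abbreviate $p' = \Phi(s, p) \in P$. The only arguments of $\Xi'$ that can occur in \eqref{cL_a_a_a}--\eqref{cL_a_b_c} are $2 \lambda_a(p')$, $\lambda_a(p') - \lambda_d(p')$ with $a \neq d$, and $\lambda_a(p') + \lambda_d(p')$ with $a \neq d$; by Proposition \ref{PROPOSITION_particle_repulsion} (and the fact that $\lambda_a(p') + \lambda_d(p') \geq |\lambda_a(p') - \lambda_d(p')|$ since all particle positions are positive) each of these has absolute value at least $s \varDelta_0 / 2 \geq \ln(\sqrt{2})$. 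Hence \eqref{Xi_der_estim} is applicable and bounds every such term by $304 \, e^{-2 (s \varDelta_0 / 2)} = 304 \, e^{-s \varDelta_0}$.

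It then remains to count terms in the three non-trivial cases. For $\cL^a_{a, a}$ \eqref{cL_a_a_a} there are $1 + 2(n - 1) = N - 1$ summands with coefficients $4$ or $1$, so $|\cL^a_{a, a}(p')| \leq 4 \cdot 304 \, N \, e^{-s \varDelta_0}$ is a safe bound; for $\cL^a_{b, b}$ \eqref{cL_a_b_b} and for $\cL^a_{a, b} = \cL^a_{b, a}$ \eqref{cL_a_a_b} with $a \neq b$ there are only two $\Xi'$-terms, giving the bound $2 \cdot 304 \, e^{-s \varDelta_0}$; and for $\cL^a_{b, c}$ with $c \neq a \neq b \neq c$ the quantity vanishes identically by \eqref{cL_a_b_c}. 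Taking, for instance, $C_3 = 1216 N$ then yields $|\cL^a_{b, c}(\Phi(s, p))| \leq C_3 \, e^{-s \varDelta_0}$ uniformly for all $a, b, c \in \bN_n$, which is the assertion; and $C_3 \geq 0$ trivially.

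There is no genuine obstacle here. The only points that need a moment's care are (i) verifying that the list of arguments of $\Xi'$ appearing in \eqref{cL_a_a_a}--\eqref{cL_a_b_c} is exactly the list controlled by Proposition \ref{PROPOSITION_particle_repulsion}, so that $\Xi'$ is never evaluated near its singularity, and (ii) organising the estimate by the number of distinct indices among $a, b, c$ so that the vanishing case \eqref{cL_a_b_c} is not missed. Both are immediate, which is precisely why the statement follows at once from the machinery already in place.
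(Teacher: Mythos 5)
Your argument is correct and is precisely the route the paper intends: it omits the details, stating that the same ideas as in Proposition \ref{PROPOSITION_cL_a_b_estim} apply, namely combining Proposition \ref{PROPOSITION_particle_repulsion} with the bound \eqref{Xi_der_estim} on $\Xi'$ in the explicit formulae \eqref{cL_a_a_a}--\eqref{cL_a_b_c}. Your filling in of the term count and of the lower bound $\lambda_a+\lambda_d \geq s\varDelta_0/2$ is accurate, so nothing further is needed.
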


Picking an arbitrary $a \in \bN_n$, at this point we are in a position to 
measure effectively the deviation of the rapidity $\theta_a \circ \Phi$ 
from $\theta_a^+$ \eqref{theta_+}. So, for convenience, we define the smooth 
function
\be
    \cG_a \colon \bR \times P \rightarrow \bR,
    \quad
    (s, p) \mapsto \theta_a(\Phi(s, p)) - \theta_a^+(p).
\label{cG_a}
\ee
Furthermore, with the aid of the Hamiltonian $H$ \eqref{H}, we introduce the 
strictly positive constant
\be
    \cH_0 = \max \{ H(p) \mid p \in K_0 \} > 0,
\label{cH_0}
\ee
which is well-defined by the compactness of $K_0$. Since $H$ is a first 
integral, let us note that $\forall s \in \bR$, $\forall p \in K_0$, 
$\forall a \in \bN_n$ we can write that
\be
\begin{split}
    \vert \theta_a(\Phi(s, p)) \vert
    & \leq
    \sinh(\vert \theta_a(\Phi(s, p)) \vert)
    \leq
    \cosh(\theta_a(\Phi(s, p)))
    \\
    & \leq H(\Phi(s, p))
    = H(\Phi(0, p)) = H(p) \leq \cH_0.
\end{split}
\label{theta_estim_0}
\ee
Before formulating our explicit result about the asymptotic properties of the 
rapidities, it proves convenient to bring in the new constant
\be
    \delta_0 = \min \{ R_0, \varDelta_0 \} > 0.
\label{delta_0}
\ee

\begin{LEMMA}
\label{LEMMA_asymptotics_2}
There is a constant $C_4 \geq 0$ such that $\forall s \in [T_1, \infty)$, 
$\forall p \in K_0$, $\forall a \in \bN_n$ we can write
\be
    \vert \cG_a(s, p) \vert \leq C_4 e^{- s \delta_0}.
\label{asymptotics_2}  
\ee
\end{LEMMA}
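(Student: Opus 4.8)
The plan is to combine the asymptotic control on particle \emph{positions} coming from Lemma~\ref{LEMMA_asymptotics_1} with the asymptotic control on the coupling factors $u_a$ from Proposition~\ref{PROPOSITION_u_asymptotics}, and then feed both into the defining relation $\sinh(\theta_a(\Phi)) u_a(\Phi) = \sinh(\theta_a^+) + \mathcal{F}_a$ for the rapidities. Fix $p_0 \in P$ and work with the data $T_1, C_0, R_0, K_0, \varDelta_0, l_0$ already attached to it. First I would start from equation \eqref{cF_a}, rewritten as
\be
    \sinh(\theta_a(\Phi(s, p)))
    = \frac{\sinh(\theta_a^+(p)) + \cF_a(s, p)}{u_a(\Phi(s, p))},
\label{PLAN_sinh_theta}
\ee
valid for $s \in [T_1, \infty)$ and $p \in K_0$. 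From Lemma~\ref{LEMMA_asymptotics_1} we have $\vert \cF_a(s, p) \vert \leq C_0 e^{-s R_0}$, and from Proposition~\ref{PROPOSITION_u_asymptotics} we have $1 < u_a(\Phi(s, p)) \leq 1 + C_1 e^{-s \varDelta_0}$, so $1/u_a(\Phi(s,p)) = 1 - \epsilon_a(s,p)$ with $0 \le \epsilon_a(s,p) \le C_1 e^{-s\varDelta_0}$. Substituting, the right-hand side of \eqref{PLAN_sinh_theta} equals $\sinh(\theta_a^+(p)) + \mathcal{R}_a(s,p)$ where $\mathcal{R}_a$ is a sum of terms each bounded by a constant (depending on the uniform bound $\sinh(\cH_0)$ from \eqref{theta_estim_0} and on $C_0, C_1$) times $e^{-s\delta_0}$, using $\delta_0 = \min\{R_0,\varDelta_0\}$.

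Next I would pass from $\sinh$ to the argument. Since both $\theta_a(\Phi(s,p))$ and $\theta_a^+(p)$ lie in the compact interval $[-\cH_0, \cH_0]$ (the bound \eqref{theta_estim_0} applies to $\theta_a(\Phi)$ directly, and to $\theta_a^+ = 2\htheta_a$ one argues similarly, or notes that $\theta_a^+(p) = \lim_{s\to\infty}\theta_a(\Phi(s,p))$ is forced by the estimate just obtained, hence also lies in that interval), and since $\sinh$ is a diffeomorphism with derivative $\cosh \ge 1$ bounded below by $1$ on all of $\bR$, the mean value theorem gives
\be
    \vert \cG_a(s, p) \vert
    = \vert \theta_a(\Phi(s, p)) - \theta_a^+(p) \vert
    \leq \vert \sinh(\theta_a(\Phi(s, p))) - \sinh(\theta_a^+(p)) \vert
    = \vert \mathcal{R}_a(s, p) \vert
    \leq C_4 e^{-s \delta_0}
\label{PLAN_final}
\ee
for an appropriate constant $C_4 \ge 0$ assembled from $C_0$, $C_1$ and $\sinh(\cH_0)$. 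That is exactly the claimed bound.

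The only genuinely delicate point is making sure every constant in $\mathcal{R}_a$ is \emph{uniform} over $p \in K_0$ and $s \in [T_1,\infty)$; but this is already built in, since $K_0$ is compact, $\cH_0$, $\varDelta_0$, $l_0$ are extreme values over $K_0$, and the estimates of Lemma~\ref{LEMMA_asymptotics_1} and Proposition~\ref{PROPOSITION_u_asymptotics} are stated uniformly on $[T_1,\infty)\times K_0$. A minor bookkeeping subtlety is that $\cF_a$ decays at rate $R_0$ while the $u_a$-correction decays at rate $\varDelta_0$; collapsing both into the single rate $\delta_0 = \min\{R_0,\varDelta_0\}$ (already introduced in \eqref{delta_0}) handles this cleanly. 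So I expect no real obstacle — the argument is a short estimate once the two previous asymptotic lemmas are in hand, which is presumably why the authors' proof will be correspondingly brief.
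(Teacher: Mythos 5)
Your proposal is correct and follows essentially the same route as the paper: both start from the defining relation $\sinh(\theta_a(\Phi))\,u_a(\Phi)=\sinh(\theta_a^+)+\cF_a$, feed in the bounds of Lemma \ref{LEMMA_asymptotics_1} and Proposition \ref{PROPOSITION_u_asymptotics} together with the energy bound \eqref{theta_estim_0}, collapse the rates to $\delta_0$, and pass from the $\sinh$-difference to $\cG_a$ via $\cosh\geq 1$. The only difference is cosmetic: the paper keeps the term $\sinh(\theta_a(\Phi))(1-u_a(\Phi))$ and bounds $\sinh(\vert\theta_a(\Phi)\vert)\leq\cH_0$ directly, whereas you divide by $u_a$ and bound $\sinh(\theta_a^+)$ instead, which is equally valid and uniform on $K_0$.
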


\begin{proof}
Let $s \in [T_1, \infty)$, $p \in K_0$, $a \in \bN_n$ be arbitrary elements. 
Keeping in mind \eqref{cG_a} and \eqref{cF_a}, it is clear that 
\be
\begin{split}
    \vert \cG_a(s, p) \vert 
    & \leq \vert \sinh(\theta_a(\Phi(s, p))) - \sinh(\theta_a^+(p)) \vert
    \\    
    & \leq \vert \cF_a(s, p) \vert 
        + \vert \sinh(\theta_a(\Phi(s, p))) (1 - u_a(\Phi(s, p))) \vert.
\end{split}
\label{BB1}
\ee
However, recalling \eqref{delta_0}, by Lemma \ref{LEMMA_asymptotics_1} we can 
write
\be
    \vert \cF_a(s, p) \vert \leq C_0 e^{- s R_0} \leq C_0 e^{- s \delta_0},
\label{BB2}
\ee 
whereas from Proposition \ref{PROPOSITION_u_asymptotics} and the observation 
we made in \eqref{theta_estim_0} it is immediate that
\be
\begin{split}
    & \vert \sinh(\theta_a(\Phi(s, p))) (1 - u_a(\Phi(s, p))) \vert
    = \sinh(\vert \theta_a(\Phi(s, p))\vert) (u_a(\Phi(s, p)) - 1)
    \\
    & \quad 
    \leq \cH_0 C_1 e^{- s \varDelta_0} \leq \cH_0 C_1 e^{- s \delta_0}. 
\end{split}
\label{BB3}
\ee
Thus, with the constant $C_4 = C_0 + \cH_0 C_1$ the Lemma follows.
\end{proof}

To conclude this subsection, notice that by Lemmas \ref{LEMMA_asymptotics_1} 
and \ref{LEMMA_asymptotics_2} we may call the functions $\lambda_a^+$ 
\eqref{lambda_+} the \emph{asymptotic positions}, and the name 
\emph{asymptotic rapidities} is also justified for $\theta_a^+$ 
\eqref{theta_+}.

\subsection{Derivation of the equations of variation}
\label{SUBSECTION_Derivation_of_the_variational_equations}
From Lemma \ref{LEMMA_asymptotics_1} the importance of the functions defined 
in \eqref{cE_a} and \eqref{cF_a} is evident. Not surprisingly, the whole 
scattering theory of the van Diejen systems \eqref{H} can be built upon 
the careful study of these objects. The partial differential equations we set 
up in this subsection form the basis of their finer analysis.

\begin{LEMMA}
\label{LEMMA_cE_and_cF_time_derivatives}
For each $a \in \bN_n$ we have $\dot{\cE}_a = \cF_a$ and 
$\dot{\cF}_a = \varphi_a \circ \Phi$, where
\be
    \varphi_a 
    = \sinh(\theta_a) u_a \sum_{c = 1}^n \sinh(\theta_c) u_c \cL^a_c
        - \cosh(\theta_a) u_a \sum_{c = 1}^n \cosh(\theta_c) u_c \cL^c_a 
    \in C^\infty(P).
\label{varphi_a}
\ee
\end{LEMMA}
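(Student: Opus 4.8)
The plan is to differentiate the explicit expressions \eqref{cE_a} and \eqref{cF_a} with respect to the time variable $t$ \eqref{t} along the flow $\Phi$ \eqref{Phi}, using the equations of motion \eqref{Phi_dynamics}, the Poisson bracket formula \eqref{PB}, and the canonical relations \eqref{lambda_and_theta_PBs}.

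First I would treat $\dot{\cE}_a$. In \eqref{cE_a} the term $s \sinh(\theta_a^+(p))$ is linear in $t$ with derivative $\sinh(\theta_a^+(p))$, while $\lambda_a^+(p)$ is $t$-independent; combining this with \eqref{lambda_dot}, which gives $\PD{(\lambda_a \circ \Phi)}{t} = \sinh(\theta_a(\Phi)) u_a(\Phi)$, and comparing with the definition \eqref{cF_a} yields $\dot{\cE}_a = \cF_a$ at once.

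For $\dot{\cF}_a$ the constant term $\sinh(\theta_a^+(p))$ in \eqref{cF_a} drops out, and it remains to differentiate the product $\sinh(\theta_a(\Phi))\, u_a(\Phi)$ by the Leibniz rule. In the factor coming from $\theta_a \circ \Phi$ I would invoke the Hamilton equation for the rapidities: from \eqref{PB}, \eqref{lambda_and_theta_PBs} and \eqref{H} one has $\PB{\theta_a}{H} = -\PD{H}{\lambda_a} = -\sum_{c = 1}^n \cosh(\theta_c) u_c \cL^c_a$, where I used that each $u_c$ depends only on the particle positions and that $\PD{u_c}{\lambda_a} = u_c \cL^c_a$ by the very definition \eqref{cL} of $\cL^c_a$. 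In the factor coming from $u_a \circ \Phi$ I would apply the chain rule, once more using that $u_a$ is independent of the rapidities, so that $\PD{(u_a \circ \Phi)}{t} = \sum_{b = 1}^n \big( u_a \cL^a_b \sinh(\theta_b) u_b \big) \circ \Phi$ by \eqref{lambda_dot}. Collecting the two contributions and relabelling summation indices reproduces exactly the function $\varphi_a$ of \eqref{varphi_a} precomposed with $\Phi$.

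The computation is entirely routine and I expect no genuine obstacle; the only points that call for care are the sign in $\PB{\theta_a}{H} = -\PD{H}{\lambda_a}$, which originates in the antisymmetry of $\Omega$ \eqref{Omega}, and the bookkeeping of the two index patterns $\cL^a_c$ and $\cL^c_a$ — the first arising when one differentiates $u_a$ with respect to $\lambda_c$ along the flow, the second when one differentiates $u_c$ with respect to $\lambda_a$ inside $\PD{H}{\lambda_a}$.
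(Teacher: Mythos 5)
Your proposal is correct and follows essentially the same route as the paper: differentiate \eqref{cE_a} using \eqref{lambda_dot} to get $\dot{\cE}_a = \cF_a$, then apply the Leibniz rule to $\sinh(\theta_a(\Phi))\,u_a(\Phi)$ and evaluate $\PB{\theta_a}{H}$ and $\PB{u_a}{H}$ via \eqref{PB}, using that the $u_c$ depend only on the positions and that $\PD{u_c}{\lambda_a} = u_c \cL^c_a$ by \eqref{cL}. The sign and index bookkeeping you flag are exactly the points handled in the paper's computation, so nothing is missing.
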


\begin{proof}
Let $a \in \bN_n$ be arbitrary, then from \eqref{cE_a} and \eqref{lambda_dot} 
it is evident that
\be
    \PD{\cE_a}{t}
    = \PD{ 
        \left( 
            \lambda_a \circ \Phi - t \sinh(\theta_a^+) - \lambda_a^+ 
        \right)}
        {t}
    = \sinh(\theta_a(\Phi)) u_a(\Phi) - \sinh(\theta_a^+)
    = \cF_a,
\label{cE_dot}
\ee
whereas \eqref{Phi_dynamics} and the definition \eqref{cF_a} lead to 
\be
\begin{split}
    \PD{\cF_a}{t}
    & = \PD{ 
        \left( 
            \sinh(\theta_a(\Phi)) u_a(\Phi) - \sinh(\theta_a^+)    
        \right)}
        {t}
    = \cosh(\theta_a(\Phi)) \PD{\theta_a(\Phi)}{t} u_a(\Phi)
        + \sinh(\theta_a(\Phi)) \PD{u_a(\Phi)}{t}
    \\
    & = \left( \cosh(\theta_a) \PB{\theta_a}{H} u_a 
        + \sinh(\theta_a) \PB{u_a}{H} \right) \circ \Phi.
\end{split}
\label{cF_dot}
\ee
Recalling the notation \eqref{cL}, notice that for the constituent Poisson 
brackets we can write that
\begin{align}
    & \PB{\theta_a}{H} 
    = -\PD{H}{\lambda_a} 
    = - \sum_{c = 1}^n \cosh(\theta_c) u_c \cL^c_a,
    \label{T1_1}
    \\ 
    & \PB{u_a}{H} 
    = \sum_{c = 1}^n \PD{u_a}{\lambda_c} \PD{H}{\theta_c}
    = u_a \sum_{c = 1}^n \cL^a_c \sinh(\theta_c) u_c,
    \label{T1_2}
\end{align}
and so the Lemma follows immediately.
\end{proof}

There is no doubt that any change in the initial condition of the maximally 
defined integral curve $\gamma_p$ \eqref{gamma} has an inevitable effect 
on the asymptotic variables \eqref{lambda_+} and \eqref{theta_+}, too.
To measure the effect of small changes, we introduce the smooth functions
\be
    \cV_a^{(j)} = \PD{\cE_a}{\tx_j} \in C^\infty(\bR \times P)
    \midand
    \cW_a^{(j)} = \PD{\cF_a}{\tx_j} \in C^\infty(\bR \times P)
    \qquad
    (a \in \bN_n, \, j \in \bN_N),
\label{cV_and_cW}
\ee
simply by taking the first order partial derivatives of \eqref{cE_a} 
and \eqref{cF_a} with respect to the coordinates \eqref{tx}. In order 
to understand their asymptotic properties, first we focus on their time 
evolution. Remembering the notations introduced in \eqref{cL}, the following
result is immediate.

\begin{PROPOSITION}
\label{PROPOSITION_cV_and_cW_time_derivatives}
For all $a \in \bN_n$ and for all $j \in \bN_N$ we have
\be
    \PD{\cV_a^{(j)}}{t} = \cW_a^{(j)}
    \midand
    \PD{\cW_a^{(j)}}{t} 
    = \sum_{b = 1}^n 
        \left( \PD{\varphi_a}{\lambda_b} \circ \Phi \right) 
        \PD{(\lambda_b \circ \Phi)}{\tx_j}
        + \sum_{b = 1}^n \left( \PD{\varphi_a}{\theta_b} \circ \Phi \right) 
        \PD{(\theta_b \circ \Phi)}{\tx_j}
\label{cV_and_cW_derivative}
\ee
with the partial derivatives
\begin{align}
    \PD{\varphi_a}{\lambda_b} 
    & = \sinh(\theta_a) u_a 
        \sum_{c = 1}^n 
            \sinh(\theta_c) u_c 
            (\cL^a_b \cL^a_c + \cL^c_b \cL^a_c + \cL^a_{c, b}) 
    \nonumber 
    \\
    & \quad 
    -\cosh(\theta_a) u_a 
        \sum_{c = 1}^n 
            \cosh(\theta_c) u_c 
            (\cL^a_b \cL^c_a + \cL^c_b \cL^c_a + \cL^c_{a, b}),
    \label{PD_varphi_a_lambda_b}
    \\
    \PD{\varphi_a}{\theta_b}
    & = \sinh(\theta_a) u_a \cosh(\theta_b) u_b \cL^a_b 
        - \cosh(\theta_a) u_a \sinh(\theta_b) u_b \cL^b_a 
    \nonumber 
    \\
    & \quad 
    + \delta_{a, b} 
        \left( 
            \cosh(\theta_a) u_a \sum_{c = 1}^n \sinh(\theta_c) u_c \cL^a_c 
            - \sinh(\theta_a) u_a \sum_{c = 1}^n \cosh(\theta_c) u_c \cL^c_a
        \right).
    \label{PD_varphi_a_theta_b}
\end{align}
\end{PROPOSITION}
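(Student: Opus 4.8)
The plan is to obtain both identities from Lemma~\ref{LEMMA_cE_and_cF_time_derivatives} combined with the equality of mixed second partial derivatives. Since $(t,\tx_1,\dots,\tx_N)$ is a global coordinate system on $\bR\times P$ and the functions $\cE_a,\cF_a\in C^\infty(\bR\times P)$ from \eqref{cE_a}, \eqref{cF_a} are smooth, Schwarz's theorem applies, so for every $a\in\bN_n$ and $j\in\bN_N$ we can write
\be
    \PD{\cV_a^{(j)}}{t}
    = \PD{}{t}\PD{\cE_a}{\tx_j}
    = \PD{}{\tx_j}\PD{\cE_a}{t}
    = \PD{\dot{\cE}_a}{\tx_j}
    \midand
    \PD{\cW_a^{(j)}}{t}
    = \PD{}{\tx_j}\PD{\cF_a}{t}
    = \PD{\dot{\cF}_a}{\tx_j}.
\ee
Substituting $\dot{\cE}_a=\cF_a$ from Lemma~\ref{LEMMA_cE_and_cF_time_derivatives} and recalling the definition \eqref{cV_and_cW} of $\cW_a^{(j)}$ gives the first identity $\PD{\cV_a^{(j)}}{t}=\cW_a^{(j)}$ immediately.

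For the second identity I would insert $\dot{\cF}_a=\varphi_a\circ\Phi$ (again Lemma~\ref{LEMMA_cE_and_cF_time_derivatives}) and differentiate the composition $\varphi_a\circ\Phi$ by the chain rule, using that $\varphi_a\in C^\infty(P)$ and that $\Phi$ is smooth. Expressing $\Phi$ through the coordinates \eqref{x}, that is, $x_b\circ\Phi=\lambda_b\circ\Phi$ and $x_{n+b}\circ\Phi=\theta_b\circ\Phi$ for $b\in\bN_n$, the chain rule yields
\be
    \PD{\dot{\cF}_a}{\tx_j}
    = \PD{(\varphi_a\circ\Phi)}{\tx_j}
    = \sum_{b=1}^n\left(\PD{\varphi_a}{\lambda_b}\circ\Phi\right)\PD{(\lambda_b\circ\Phi)}{\tx_j}
    + \sum_{b=1}^n\left(\PD{\varphi_a}{\theta_b}\circ\Phi\right)\PD{(\theta_b\circ\Phi)}{\tx_j},
\ee
which is exactly the formula \eqref{cV_and_cW_derivative}.

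It remains to verify the closed expressions \eqref{PD_varphi_a_lambda_b} and \eqref{PD_varphi_a_theta_b} for the coefficients by differentiating the explicit formula \eqref{varphi_a} for $\varphi_a$. The only inputs needed are that the functions $u_c$ depend on the particle positions alone with $\PD{u_c}{\lambda_b}=u_c\cL^c_b$, that $\PD{\cL^a_c}{\lambda_b}=\cL^a_{c,b}$ (both read off from the definition \eqref{cL}), that $u_c$ and $\cL^a_c$ are independent of the rapidities, and that $\PD{\sinh(\theta_c)}{\theta_b}=\delta_{b,c}\cosh(\theta_c)$ and $\PD{\cosh(\theta_c)}{\theta_b}=\delta_{b,c}\sinh(\theta_c)$. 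A routine Leibniz expansion of the two sums in \eqref{varphi_a} then produces the stated formulas; in the $\theta_b$-derivative the $\delta_{a,b}$-term arises precisely from differentiating the prefactors $\sinh(\theta_a)$ and $\cosh(\theta_a)$. There is no genuine obstacle in this proposition: the argument is entirely elementary. The only place that calls for a little care will be keeping the summation index $c$ separate from the free index $a$ when Leibniz-expanding the $\lambda_b$-derivative of the product $u_a\sum_{c=1}^n\sinh(\theta_c)u_c\cL^a_c$ and of its $\cosh$-counterpart, so that the three contributions proportional to $\cL^a_b\cL^a_c$, $\cL^c_b\cL^a_c$, $\cL^a_{c,b}$ (respectively $\cL^a_b\cL^c_a$, $\cL^c_b\cL^c_a$, $\cL^c_{a,b}$) are each correctly accounted for.
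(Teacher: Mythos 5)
Your proposal is correct and follows essentially the same route as the paper: smoothness of $\cE_a$, $\cF_a$ plus equality of mixed partials, substitution of Lemma \ref{LEMMA_cE_and_cF_time_derivatives}, the chain rule applied to $\varphi_a\circ\Phi$, and a direct Leibniz computation of $\partial\varphi_a/\partial\lambda_b$ and $\partial\varphi_a/\partial\theta_b$ using $\partial u_c/\partial\lambda_b=u_c\cL^c_b$ and $\partial\cL^a_c/\partial\lambda_b=\cL^a_{c,b}$. No gaps; the explicit coefficient formulas you describe check out against \eqref{PD_varphi_a_lambda_b} and \eqref{PD_varphi_a_theta_b}.
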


\begin{proof}
Due to the smoothness of the maps $\cE_a$ \eqref{cE_a} and $\cF_a$ 
\eqref{cF_a}, from Lemma \ref{LEMMA_cE_and_cF_time_derivatives} and the 
definitions displayed in \eqref{cV_and_cW} it is evident that
\be
    \PD{\cV_a^{(j)}}{t}
    = \PD{}{t} \PD{\cE_a}{\tx_j} 
    = \PD{}{\tx_j} \PD{\cE_a}{t} 
    = \PD{\cF_a}{\tx_j} 
    = \cW^{(j)}_a,
\label{O1}
\ee
whereas
\be
    \PD{\cW_a^{(j)}}{t}
    = \PD{}{t} \PD{\cF_a}{\tx_j} 
    = \PD{}{\tx_j} \PD{\cF_a}{t}
    = \PD{(\varphi_a \circ \Phi)}{\tx_j}
    = \sum_{k = 1}^N 
        \left( \PD{\varphi_a}{x_k} \circ \Phi \right) 
        \PD{(x_k \circ \Phi)}{\tx_j}.
\label{O2}
\ee
Now, simply by working out the partial derivatives of $\varphi_a$ 
\eqref{varphi_a}, the Proposition follows.
\end{proof}

To proceed further, we wish to sharpen the above Proposition by expressing 
the derivative $\dot{\cW}_a^{(j)}$ in terms of the functions \eqref{cV_and_cW}. 
Elementary algebraic manipulations lead to the following result.

\begin{LEMMA}
\label{LEMMA_cV_and_cW_time_derivatives}
For all $a \in \bN_n$ and for all $j \in \bN_N$ we have
\be
    \PD{\cW_a^{(j)}}{t} 
    = \sum_{c = 1}^n \cA_{a, c} \cV_c^{(j)} 
        + \sum_{c = 1}^n \cB_{a, c} \cW_c^{(j)} + \cZ_a^{(j)},
\label{cV_and_cW_derivative_OK}
\ee
where the coefficients are smooth functions on $\bR \times P$ of the form
\begin{align}
    & \cA_{a, c} 
    = \left( 
        \PD{\varphi_a}{\lambda_c} 
        - \sum_{d = 1}^n \tanh(\theta_d) \cL^d_c \PD{\varphi_a}{\theta_d}
    \right) \circ \Phi,
    \quad
    \cB_{a, c} 
    = \left( 
        \frac{1}{\cosh(\theta_c) u_c} \PD{\varphi_a}{\theta_c} 
    \right) \circ \Phi,
    \label{cA_and_cB}
    \\
    & \cZ_a^{(j)}
    = \sum_{c = 1}^n \cA_{a, c} 
        \left( 
            t \PD{\sinh(\theta_c^+)}{x_j} + \PD{\lambda_c^+}{x_j} 
        \right)
    + \sum_{c = 1}^n \cB_{a, c} \PD{\sinh(\theta_c^+)}{x_j}.
    \label{cZ}
\end{align}
\end{LEMMA}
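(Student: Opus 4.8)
The plan is to start from the formula for $\PD{\cW_a^{(j)}}{t}$ furnished by Proposition \ref{PROPOSITION_cV_and_cW_time_derivatives} and eliminate the factors $\PD{(\lambda_b \circ \Phi)}{\tx_j}$ and $\PD{(\theta_b \circ \Phi)}{\tx_j}$ occurring there in favour of the functions $\cV_c^{(j)}$ and $\cW_c^{(j)}$ \eqref{cV_and_cW}, plus explicit inhomogeneous terms. The two required substitution rules come from differentiating the defining relations \eqref{cE_a} and \eqref{cF_a} with respect to the coordinate $\tx_j$.

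For the positions, differentiating $\cE_a = \lambda_a \circ \Phi - t \sinh(\theta_a^+) - \lambda_a^+$ and using $\PD{t}{\tx_j} = 0$ together with the fact that $\theta_a^+$ and $\lambda_a^+$ are pulled back from $P$ gives directly
\be
    \PD{(\lambda_a \circ \Phi)}{\tx_j}
    = \cV_a^{(j)} + t \PD{\sinh(\theta_a^+)}{x_j} + \PD{\lambda_a^+}{x_j}.
\ee
For the rapidities, I differentiate $\cF_a = \sinh(\theta_a \circ \Phi)(u_a \circ \Phi) - \sinh(\theta_a^+)$, recalling that $u_a$ \eqref{u} depends only on the particle positions and that $\PD{u_a}{\lambda_d} = u_a \cL^a_d$ by \eqref{cL}; this yields
\be
    \cW_a^{(j)}
    = (u_a \circ \Phi) \left(
        \cosh(\theta_a \circ \Phi) \PD{(\theta_a \circ \Phi)}{\tx_j}
        + \sinh(\theta_a \circ \Phi)
            \sum_{d = 1}^n (\cL^a_d \circ \Phi) \PD{(\lambda_d \circ \Phi)}{\tx_j}
    \right) - \PD{\sinh(\theta_a^+)}{x_j}.
\ee
Since $\cosh(\theta_a \circ \Phi) > 0$ and $u_a \circ \Phi > 0$, I solve this for $\PD{(\theta_a \circ \Phi)}{\tx_j}$ and insert the previous display, expressing $\PD{(\theta_a \circ \Phi)}{\tx_j}$ as $\left( \frac{1}{\cosh(\theta_a) u_a} \right) \circ \Phi$ times $\cW_a^{(j)}$, minus a $\tanh(\theta_a \circ \Phi)$-weighted combination of the $\PD{(\lambda_d \circ \Phi)}{\tx_j}$, plus a term proportional to $\PD{\sinh(\theta_a^+)}{x_j}$.

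Finally I substitute both rules into the right-hand side of Proposition \ref{PROPOSITION_cV_and_cW_time_derivatives} and collect. The coefficient of $\cW_c^{(j)}$ reduces to $\left( \frac{1}{\cosh(\theta_c) u_c} \PD{\varphi_a}{\theta_c} \right) \circ \Phi = \cB_{a, c}$; the coefficient of $\cV_c^{(j)}$, after interchanging the order of summation in the double sum $\sum_b \PD{\varphi_a}{\theta_b} \tanh(\theta_b) \sum_d \cL^b_d \, \PD{(\lambda_d \circ \Phi)}{\tx_j}$ and relabelling indices, reduces to $\left( \PD{\varphi_a}{\lambda_c} - \sum_d \tanh(\theta_d) \cL^d_c \PD{\varphi_a}{\theta_d} \right) \circ \Phi = \cA_{a, c}$; and the leftover contributions — the $t \PD{\sinh(\theta_c^+)}{x_j}$ and $\PD{\lambda_c^+}{x_j}$ pieces produced when $\PD{(\lambda_c \circ \Phi)}{\tx_j}$ is traded for $\cV_c^{(j)}$, together with the $\PD{\sinh(\theta_c^+)}{x_j}$ piece carried along with $\cW_c^{(j)}$ — assemble into $\cZ_a^{(j)}$ exactly as displayed in \eqref{cZ}. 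The calculation is otherwise entirely mechanical; the only places demanding attention are the index bookkeeping in that double sum and the care not to discard the inhomogeneous $t \PD{\sinh(\theta_c^+)}{x_j}$ and $\PD{\lambda_c^+}{x_j}$ terms when passing from $\PD{(\lambda_c \circ \Phi)}{\tx_j}$ back to $\cV_c^{(j)}$, so there is no genuine obstacle.
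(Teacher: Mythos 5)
Your proposal is correct and follows essentially the same route as the paper: differentiate the defining relations \eqref{cE_a} and \eqref{cF_a} with respect to $\tx_j$ to express $\PD{(\lambda_b \circ \Phi)}{\tx_j}$ and $\PD{(\theta_b \circ \Phi)}{\tx_j}$ through $\cV^{(j)}_c$, $\cW^{(j)}_c$ and the explicit inhomogeneous terms, then substitute into Proposition \ref{PROPOSITION_cV_and_cW_time_derivatives} and collect coefficients. The bookkeeping you describe reproduces $\cA_{a,c}$, $\cB_{a,c}$ and $\cZ_a^{(j)}$ exactly as in the paper's proof.
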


\begin{proof}
Let $b \in \bN_n$ and $j \in \bN_N$ be arbitrary indices. Looking back to the 
definition \eqref{cE_a}, it is clear that
\be
    \lambda_b \circ \Phi = \cE_b + t \sinh(\theta_b^+) + \lambda_b^+,
\label{L1}
\ee
whence by \eqref{cV_and_cW} we obtain
\be
    \PD{(\lambda_b \circ \Phi)}{\tx_j}
    = \cV^{(j)}_b + t \PD{\sinh(\theta_b^+)}{x_j} + \PD{\lambda_b^+}{x_j}.
\label{PDs_lambda_b_circ_Phi}
\ee
Next, from the definitions \eqref{cF_a} and \eqref{cV_and_cW} it follows that
\be
\begin{split}
    \cW^{(j)}_b 
    & = \PD{(\sinh(\theta_b(\Phi)) u_b(\Phi) - \sinh(\theta_b^+))}{\tx_j}
    \\
    & = \cosh(\theta_b(\Phi)) \PD{(\theta_b \circ \Phi)}{\tx_j} u_b(\Phi)
    + \sinh(\theta_b(\Phi)) \PD{(u_b \circ \Phi)}{\tx_j}
    - \PD{\sinh(\theta_b^+)}{x_j},
\end{split}
\label{L2}
\ee
and so we can write that
\be
    \PD{(\theta_b \circ \Phi)}{\tx_j} 
    = \frac{1}{\cosh(\theta_b(\Phi)) u_b(\Phi)} 
        \left( 
            \cW^{(j)}_b + \PD{\sinh(\theta_b^+)}{x_j} 
            - \sinh(\theta_b(\Phi)) \PD{(u_b \circ \Phi)}{\tx_j}
        \right).
\label{L3}
\ee
However, keeping in mind \eqref{PDs_lambda_b_circ_Phi}, it is immediate that
\be
\begin{split}
    \PD{(u_b \circ \Phi)}{\tx_j}
    = \sum_{d = 1}^n 
        \left( \PD{u_b}{\lambda_d} \circ \Phi \right) 
        \PD{(\lambda_d \circ \Phi)}{\tx_j}
    = u_b(\Phi) \sum_{d = 1}^n 
                    \cL^b_d(\Phi) 
                    \left( 
                        \cV^{(j)}_d + t \PD{\sinh(\theta_d^+)}{x_j} 
                        + \PD{\lambda_d^+}{x_j} 
                    \right).
\end{split}
\label{L4}
\ee
Thus, by inserting \eqref{L4} into \eqref{L3}, we end up with the formula
\be
\begin{split}
    \PD{(\theta_b \circ \Phi)}{\tx_j} 
    & = \frac{1}{\cosh(\theta_b(\Phi)) u_b(\Phi)} \cW^{(j)}_b 
        - \tanh(\theta_b(\Phi)) \sum_{d = 1}^n \cL^b_d(\Phi) \cV^{(j)}_d 
    \\
    & \quad + \frac{1}{\cosh(\theta_b(\Phi)) u_b(\Phi)} 
                \PD{\sinh(\theta_b^+)}{x_j} 
    \\        
    & \quad - \tanh(\theta_b(\Phi)) 
                \sum_{d = 1}^n 
                    \cL^b_d(\Phi) 
                    \left( 
                        t \PD{\sinh(\theta_d^+)}{x_j} + \PD{\lambda_d^+}{x_j} 
                    \right).
\end{split}
\label{PDs_theta_b_circ_Phi}
\ee
Now, by plugging the formulae \eqref{PDs_lambda_b_circ_Phi} and 
\eqref{PDs_theta_b_circ_Phi} into \eqref{cV_and_cW_derivative}, the 
Lemma follows. 
\end{proof}

Now, let us observe that with the aid of the matrix valued functions
\be
    \bscA = [\cA_{a, c}]_{1 \leq a, c \leq n},
    \quad
    \bscB = [\cB_{a, c}]_{1 \leq a, c \leq n},
\label{bscA_and_bscB}
\ee
and the column vector valued functions
\be
    \bscV^{(j)} = [\cV^{(j)}_a]_{1 \leq a \leq n},
    \quad
    \bscW^{(j)} = [\cW^{(j)}_a]_{1 \leq a \leq n},
    \quad
    \bscZ^{(j)} = [\cZ^{(j)}_a]_{1 \leq a \leq n}
    \qquad
    (j \in \bN_N),
\label{bscV_and_bscW_and_bscZ}
\ee
the differential equations displayed in Proposition 
\ref{PROPOSITION_cV_and_cW_time_derivatives} and Lemma 
\ref{LEMMA_cV_and_cW_time_derivatives} can be cast into the more concise 
matrix form
\be
    \dot{\bscV}^{(j)} = \bscW^{(j)},
    \quad
    \dot{\bscW}^{(j)}= \bscA \bscV^{(j)} + \bscB \bscW^{(j)} + \bscZ^{(j)}.
\label{variational_equations}
\ee
Following the standard terminology, this inhomogeneous linear system may 
be called the system of \emph{equations of variation} associated with the 
non-linear differential equations appearing in Lemma 
\ref{LEMMA_cE_and_cF_time_derivatives}. (For background information on
the theory of the equations of variation see. e.g. Chapter V in the classic 
textbook \cite{Hartman}.)

\subsection{Analyzing the equations of variation}
\label{SUBSECTION_Analyzing_the_variational_equations}
By merging our former asymptotic results with the equations of variation
\eqref{variational_equations}, in this subsection we wish to extend our 
asymptotic analysis to the functions \eqref{cV_and_cW}. As a preliminary
step, we begin with two remarks on the notations. 

First, recall that for any $m \times m'$ matrix 
\be
    A = [A_{k, l}]_{\substack{1 \leq k \leq m, \\ 1 \leq l \leq m'}} 
        \in \bC^{m \times m'}
\label{matrix_A}
\ee 
its \emph{Hilbert--Schmidt norm} is given by
\be
    \Vert A \Vert 
    = \sqrt{\sum_{k = 1}^m \sum_{l = 1}^{m'} \vert A_{k, l} \vert^2}.
\label{HS_norm}
\ee
Besides the above simple formula, we shall often exploit the fact
that it is submultiplicative. In other words, if $A \in \bC^{m \times m'}$
and $B \in \bC^{m' \times m''}$, then 
$\Vert A B \Vert \leq \Vert A \Vert \Vert B \Vert$.

Second, \emph{throughout this subsection we fix an arbitrary point 
$p_0 \in P$, and the earlier notations for the associated constants 
introduced in subsections \ref{SUBSECTION_Applying_Ruijsenaars_thm} and 
\ref{SUBSECTION_Asymptotics} are also kept in effect.} In particular,
remembering the compact subset $K_0 \subset P$ given in Lemma 
\ref{LEMMA_asymptotics_1}, the constant $T_1$ provided by Proposition
\ref{PROPOSITION_particle_repulsion}, and the shorthand notation 
\eqref{delta_0}, for the Hilbert--Schmidt norms of the coefficient matrices 
appearing in the system of the equations of variation 
\eqref{variational_equations} one can easily establish the following result.

\begin{PROPOSITION}
\label{PROPOSITION_bscA_bscB_bscZ_norms}
There is a constant $c_0 \geq 0$ such that $\forall s \in [T_1, \infty)$ and
$\forall p \in K_0$ we have
\be
    \Vert \bscA(s, p) \Vert \leq c_0 e^{-s \delta_0},
    \quad
    \Vert \bscB(s, p) \Vert \leq c_0 e^{-s \delta_0},
\label{bscA_bscB_norms}
\ee
and also $\forall j \in \bN_N$ we can write that
\be
    \Vert \bscZ^{(j)}(s, p) \Vert \leq c_0 (s + 1) e^{-s \delta_0}.
\label{bscZ_norm} 
\ee
\end{PROPOSITION}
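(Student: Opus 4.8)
The plan is to bound each coefficient \emph{entrywise} along the flow and then pass to the Hilbert--Schmidt norm by the elementary inequality $\Vert A \Vert \leq \sqrt{m m'}\,\max_{k,l}|A_{k,l}|$ for $A \in \bC^{m \times m'}$, so that a single constant $c_0$ (the largest of the finitely many constants produced below) works for all entries simultaneously. The key is a clean split of every summand appearing in \eqref{cA_and_cB} and \eqref{cZ} into factors that are merely \emph{uniformly bounded} on $[T_1,\infty) \times K_0$ and factors that carry \emph{exponential decay}. On the bounded side: by \eqref{theta_estim_0} the numbers $\theta_a(\Phi(s,p))$ stay in the fixed compact interval $[-\cH_0,\cH_0]$ for \emph{all} $s \in \bR$ and all $p \in K_0$, so $\cosh(\theta_a\circ\Phi)$, $\sinh(\theta_a\circ\Phi)$ and $\tanh(\theta_a\circ\Phi)$ are uniformly bounded there; by Proposition \ref{PROPOSITION_u_asymptotics} together with the elementary bound $u_a \geq 1$ (immediate from \eqref{u}) the factors $u_a(\Phi(s,p))$ lie in $[1,1+C_1]$ on $[T_1,\infty)\times K_0$, whence also $(\cosh(\theta_c\circ\Phi)\,u_c(\Phi))^{-1} \leq 1$ there. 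On the decaying side we have $|\cL^a_b(\Phi(s,p))| \leq C_2 e^{-s\varDelta_0}$ (Proposition \ref{PROPOSITION_cL_a_b_estim}) and $|\cL^a_{b,c}(\Phi(s,p))| \leq C_3 e^{-s\varDelta_0}$ (Proposition \ref{PROPOSITION_cL_a_b_c_estim}). Feeding these into the explicit formulae \eqref{PD_varphi_a_lambda_b} and \eqref{PD_varphi_a_theta_b}, one sees that every summand is a product of uniformly bounded factors with at least one factor of the form $\cL$ or $\cL_{\cdot,\cdot}$, so $|\PD{\varphi_a}{\lambda_b}(\Phi(s,p))|$ and $|\PD{\varphi_a}{\theta_b}(\Phi(s,p))|$ are each bounded by a constant times $e^{-s\varDelta_0}$ on $[T_1,\infty)\times K_0$.

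Next I would insert these bounds into \eqref{cA_and_cB}: the entries $\cA_{a,c}$ and $\cB_{a,c}$ become finite sums of products of uniformly bounded functions (using $|\tanh(\theta_d\circ\Phi)| \leq 1$ and $(\cosh(\theta_c\circ\Phi) u_c(\Phi))^{-1}\leq 1$, and also $|\cL^d_c\circ\Phi| \leq C_2 e^{-s\varDelta_0} \leq C_2$) with the just-obtained $O(e^{-s\varDelta_0})$ terms, hence $|\cA_{a,c}(s,p)|, |\cB_{a,c}(s,p)| \leq \text{const}\cdot e^{-s\varDelta_0}$ on $[T_1,\infty)\times K_0$; summing the $n^2$ entries and using $e^{-s\varDelta_0} \leq e^{-s\delta_0}$ (since $\delta_0 = \min\{R_0,\varDelta_0\} \leq \varDelta_0$ by \eqref{delta_0}) yields \eqref{bscA_bscB_norms}. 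For $\bscZ^{(j)}$ I would additionally note that $\theta_c^+ = 2\htheta_c$ and $\lambda_c^+$ are smooth functions on $P$ \emph{alone}, independent of $s$, so $|\PD{\sinh(\theta_c^+)}{x_j}|$ and $|\PD{\lambda_c^+}{x_j}|$ are bounded on the compact set $K_0$ by the extreme value theorem; substituting this and $t = s$ into \eqref{cZ} gives $|\cZ_a^{(j)}(s,p)| \leq \text{const}\cdot(s+1)e^{-s\varDelta_0}$, and summing over $a \in \bN_n$ and again replacing $\varDelta_0$ by $\delta_0$ produces \eqref{bscZ_norm}.

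I do not expect any genuine analytic obstacle: the whole argument is bookkeeping on top of the already-established decay estimates. The only points that require a little care are (i) keeping straight which factors are uniformly bounded in $s$ — in particular exploiting that $H$ is a first integral to control $\theta_a\circ\Phi$ uniformly in time via \eqref{theta_estim_0}, and that $\theta_c^+,\lambda_c^+$ and their $x_j$-derivatives are $s$-free and hence bounded on $K_0$ — and (ii) observing that the lone factor $t = s$ present in \eqref{cZ} but absent from \eqref{cA_and_cB} is precisely what upgrades the constant prefactor to the linear prefactor $(s+1)$ in \eqref{bscZ_norm}.
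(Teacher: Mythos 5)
Your proposal is correct and follows essentially the same route as the paper's proof: entrywise bounds obtained by combining the uniform control \eqref{theta_estim_0} and Proposition \ref{PROPOSITION_u_asymptotics} with the decay estimates of Propositions \ref{PROPOSITION_cL_a_b_estim} and \ref{PROPOSITION_cL_a_b_c_estim} in the explicit formulae \eqref{PD_varphi_a_lambda_b}--\eqref{PD_varphi_a_theta_b}, then inserted into \eqref{cA_and_cB} and \eqref{cZ}, with boundedness of $\lambda_c^+$, $\theta_c^+$ and their derivatives on $K_0$ and the factor $t=s$ accounting for the $(s+1)$ prefactor. No gaps.
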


\begin{proof}
Let $a, b, c \in \bN_n$, $s \in [T_1, \infty)$ and $p \in K_0$. By combining
the estimate \eqref{theta_estim_0} with Propositions 
\ref{PROPOSITION_u_asymptotics}, \ref{PROPOSITION_cL_a_b_estim} and 
\ref{PROPOSITION_cL_a_b_c_estim}, from the explicit expressions 
\eqref{PD_varphi_a_lambda_b} and \eqref{PD_varphi_a_theta_b} it is evident 
that
\be
    \left\vert \PD{\varphi_a}{\lambda_b} (\Phi(s, p)) \right\vert 
        \leq c_1 e^{-s \delta_0}
    \midand
    \left\vert \PD{\varphi_a}{\theta_b} (\Phi(s, p)) \right\vert 
        \leq c_1 e^{-s \delta_0}
\label{S2}
\ee
with some constant $c_1 \geq 0$. Thus, giving a glance at \eqref{cA_and_cB},
it readily follows that there is a constant $c_2 \geq 0$ such that
\be
    \vert \cA_{a, c}(s, p) \vert \leq c_2 e^{-s \delta_0}
    \midand
    \vert \cB_{a, c}(s, p) \vert \leq c_2 e^{-s \delta_0}.
\label{S3}
\ee 
Finally, from \eqref{cZ} it is also clear that $\forall j \in \bN_N$ we can 
write 
\be
\begin{split}
    \vert \cZ_a^{(j)}(s, p) \vert
    & \leq \sum_{c = 1}^n 
            \vert \cA_{a, c}(s, p) \vert 
            \left(
                s \vert \cosh(\theta_c^+(p)) \vert 
                \left\vert \PD{\theta^+_c}{x_j}(p) \right\vert
                + \left\vert \PD{\lambda^+_c}{x_j}(p) \right\vert
            \right)
    \\
    & \quad + \sum_{c = 1}^n 
                \vert \cB_{a, c}(s, p) \vert 
                \vert \cosh(\theta_c^+(p)) \vert 
                \left\vert \PD{\theta^+_c}{x_j}(p) \right\vert.
\end{split}
\label{S4}
\ee
Since the smooth functions $\lambda_c^+$, $\theta_c^+$, and their partial
derivatives are also bounded on the compact subset $K_0$, the estimates 
displayed in \eqref{S3} entail that
\be
    \vert \cZ_a^{(j)}(s, p) \vert \leq (c_3 s + c_4) e^{-s \delta_0}
\label{S5_}
\ee
with some constants $c_3, c_4 \geq 0$. So, recalling \eqref{HS_norm}, with 
$c_0 = \max\{ n c_2, \sqrt{n} c_3, \sqrt{n} c_4 \}$ the Proposition follows.
\end{proof}

As can be seen below, Proposition \ref{PROPOSITION_bscA_bscB_bscZ_norms} 
itself allows us to give a very rough estimate on the growth of the functions 
\eqref{cV_and_cW}.

\begin{LEMMA}
\label{LEMMA_bscV_and_bscW_estim}
There are constants $v_0, w_0 \geq 0$ such that $\forall s \in [T_1, \infty)$,
$\forall p \in K_0$, and $\forall j \in \bN_N$ we can write that
\be
    \Vert \bscV^{(j)}(s, p) \Vert \leq v_0 + (s - T_1) w_0
    \midand
    \Vert \bscW^{(j)}(s, p) \Vert \leq w_0.
\label{bscV_and_bscW_estim}
\ee
\end{LEMMA}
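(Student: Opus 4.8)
The plan is to treat the equations of variation \eqref{variational_equations} as an inhomogeneous linear system with exponentially decaying coefficients and run a Gr\"onwall argument, feeding in the bounds of Proposition \ref{PROPOSITION_bscA_bscB_bscZ_norms}.

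First I would record the initial data. Since $\bscV^{(j)}$ and $\bscW^{(j)}$ are smooth, hence continuous, on $\bR \times P$, while $K_0$ is compact and $\bN_N$ is finite, the constant
\[
    a_0
    = \max \{
        \Vert \bscV^{(j)}(T_1, p) \Vert, \, \Vert \bscW^{(j)}(T_1, p) \Vert
        \mid j \in \bN_N, \ p \in K_0
    \}
    \geq 0
\]
is well-defined. Fix $j \in \bN_N$ and $p \in K_0$, and abbreviate $f(s) = \Vert \bscV^{(j)}(s, p) \Vert$, $g(s) = \Vert \bscW^{(j)}(s, p) \Vert$. Integrating the two identities in \eqref{variational_equations} from $T_1$ to $s$, then taking Hilbert--Schmidt norms and using the triangle inequality together with submultiplicativity, gives for all $s \in [T_1, \infty)$
\[
    f(s) \leq a_0 + \int_{T_1}^s g(\sigma)\, \dd\sigma,
    \qquad
    g(s) \leq a_0 + \int_{T_1}^s
        \big( \Vert \bscA(\sigma, p) \Vert f(\sigma)
            + \Vert \bscB(\sigma, p) \Vert g(\sigma)
            + \Vert \bscZ^{(j)}(\sigma, p) \Vert \big)\, \dd\sigma .
\]
Integrating the ODEs \emph{before} estimating in this way avoids any subtlety about the differentiability of $s \mapsto \Vert \cdot \Vert$.

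Next I would substitute the first inequality into the second and invoke the bounds $\Vert \bscA \Vert, \Vert \bscB \Vert \leq c_0 e^{-s \delta_0}$ and $\Vert \bscZ^{(j)} \Vert \leq c_0 (s + 1) e^{-s \delta_0}$ from Proposition \ref{PROPOSITION_bscA_bscB_bscZ_norms}. The contributions of $a_0 \Vert \bscA \Vert$, of $\Vert \bscZ^{(j)} \Vert$, and of the constant $a_0$ amount to a fixed finite number, since $\sigma \mapsto e^{-\sigma \delta_0}$ and $\sigma \mapsto (\sigma + 1) e^{-\sigma \delta_0}$ are integrable over $[T_1, \infty)$; the term $\int_{T_1}^s \Vert \bscB(\sigma, p) \Vert g(\sigma)\, \dd\sigma$ is already dominated by $c_0 \int_{T_1}^s e^{-\sigma \delta_0} g(\sigma)\, \dd\sigma$. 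The genuinely coupled term $\int_{T_1}^s \Vert \bscA(\sigma, p) \Vert \big( \int_{T_1}^\sigma g(\tau)\, \dd\tau \big) \dd\sigma$ is handled by Tonelli's theorem: interchanging the order of integration and using $\int_\tau^s e^{-\sigma \delta_0}\, \dd\sigma \leq \delta_0^{-1} e^{-\tau \delta_0}$ bounds it by $c_0 \delta_0^{-1} \int_{T_1}^s e^{-\tau \delta_0} g(\tau)\, \dd\tau$. Collecting everything, one reaches an inequality of the shape
\[
    g(s) \leq C_5 + C_6 \int_{T_1}^s e^{-\sigma \delta_0} g(\sigma)\, \dd\sigma
    \qquad (s \in [T_1, \infty))
\]
with constants $C_5, C_6 \geq 0$ independent of $s$, $p$, and $j$.

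Finally, Gr\"onwall's inequality applies with the integrable kernel $\sigma \mapsto C_6 e^{-\sigma \delta_0}$, whose integral over $[T_1, \infty)$ equals $C_6 \delta_0^{-1} e^{-T_1 \delta_0}$; hence $g(s) \leq C_5 \exp( C_6 \delta_0^{-1} e^{-T_1 \delta_0} ) =: w_0$ for all $s \geq T_1$, which is the asserted bound on $\Vert \bscW^{(j)} \Vert$. Plugging this back into $f(s) \leq a_0 + \int_{T_1}^s g$ yields $f(s) \leq a_0 + (s - T_1) w_0$, so $v_0 = a_0$ works (and $w_0 \geq C_5 \geq a_0$, so the two estimates are mutually consistent at $s = T_1$). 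The conceptual point that forces the linear — rather than exponential — bound is that the homogeneous part of \eqref{variational_equations} has exponentially decaying coefficients, so its effect on $g$ stays bounded through the Gr\"onwall step, and the only linear growth enters via integrating the now-bounded $\bscW^{(j)}$ to control $\bscV^{(j)}$. The main thing to be careful about is bookkeeping: keeping every constant genuinely uniform in $j \in \bN_N$ and $p \in K_0$, which is why one maximizes over the finite index set for the initial data and uses the uniform estimates of Proposition \ref{PROPOSITION_bscA_bscB_bscZ_norms} rather than pointwise ones.
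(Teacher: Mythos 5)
Your proposal is correct and follows essentially the same route as the paper: bound the initial data at $t = T_1$ by compactness, integrate the equations of variation, insert the estimates of Proposition \ref{PROPOSITION_bscA_bscB_bscZ_norms}, handle the coupled term via Fubini--Tonelli, and close with Gr\"onwall's lemma using the integrable kernel $e^{-\sigma \delta_0}$. The only cosmetic difference is that you merge the two initial constants into a single $a_0$; otherwise the argument and the resulting bounds coincide with the paper's proof.
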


\begin{proof}
Notice that the dependence of the norms $\Vert \bscV^{(j)}(T_1, p) \Vert$ 
and $\Vert \bscW^{(j)}(T_1, p) \Vert$ on $p$ is continuous, thus by the
compactness of $K_0$ there are some constants $v_0, c_1 \geq 0$ such that
\be
    \Vert \bscV^{(j)}(T_1, p) \Vert \leq v_0 
    \midand
    \Vert \bscW^{(j)}(T_1, p) \Vert \leq c_1
    \qquad
    (j \in \bN_N, \, p \in K_0).
\label{bscV_bscW_norms_at_T_1}
\ee
To proceed further, let $s \in [T_1, \infty)$, $p \in K_0$, and $j \in \bN_N$ 
be arbitrary elements and keep them fixed. Upon integrating the equations of
variation \eqref{variational_equations} with respect to time $t$, we obtain
\be
    \bscV^{(j)}(s, p) - \bscV^{(j)}(T_1, p) 
    = \int_{T_1}^s \dot{\bscV}^{(j)} (\tau, p) \, \dd \tau
    = \int_{T_1}^s \bscW^{(j)} (\tau, p) \, \dd \tau,
\label{bscV_integral_form}
\ee
therefore we can write 
\be
    \Vert \bscV^{(j)}(s, p) \Vert 
    \leq \Vert \bscV^{(j)}(T_1, p) \Vert
    + \int_{T_1}^s \Vert \bscW^{(j)}(\tau, p) \Vert \, \dd \tau
    \leq v_0 + \int_{T_1}^s \Vert \bscW^{(j)}(\tau, p) \Vert \, \dd \tau.
\label{bscV_norm_estim}
\ee
Along the same lines, utilizing the integral equation
\be
    \bscW^{(j)}(s, p) - \bscW^{(j)}(T_1, p) 
    = \int_{T_1}^s 
        \left( 
            \bscA(\tau, p) \bscV^{(j)}(\tau, p) 
            + \bscB(\tau, p) \bscW^{(j)}(\tau, p) 
            + \bscZ^{(j)}(\tau, p)
        \right) \, \dd \tau,
\label{bscW_integral_form}
\ee
we infer that
\be
\begin{split}
    \Vert \bscW^{(j)}(s, p) \Vert
    & \leq \Vert \bscW^{(j)}(T_1, p) \Vert
    + \int_{T_1}^s 
        \Vert \bscA(\tau, p) \Vert 
        \Vert \bscV^{(j)}(\tau, p) \Vert 
        \, \dd \tau
    \\
    & \quad + \int_{T_1}^s 
                \Vert \bscB(\tau, p) \Vert 
                \Vert \bscW^{(j)}(\tau, p) \Vert 
                \, \dd \tau
    + \int_{T_1}^s \Vert \bscZ^{(j)}(\tau, p) \Vert \, \dd \tau.
\end{split}
\label{bscW_norm_estim}
\ee

Now, let us examine each term appearing on the right hand side of the above 
inequality. Due to the observation we made in \eqref{bscV_bscW_norms_at_T_1}, 
the first term is under control. Turning to the second term, let us note that
by exploiting \eqref{bscV_norm_estim} we can write 
\be
\begin{split}
    \int_{T_1}^s 
        \Vert \bscA(\tau, p) \Vert 
        \Vert \bscV^{(j)}(\tau, p) \Vert 
        \, \dd \tau
    & \leq v_0 \int_{T_1}^s \Vert \bscA(\tau, p) \Vert \, \dd \tau
    \\
    & \quad + \int_{T_1}^s 
        \Vert \bscA(\tau, p) \Vert 
        \left( 
            \int_{T_1}^\tau \Vert \bscW^{(j)}(u, p) \Vert \, \dd u
        \right) 
        \, \dd \tau.
\end{split}
\label{E2_}
\ee
However, recalling \eqref{bscA_bscB_norms}, it is obvious that
\be
    \int_{T_1}^s \Vert \bscA(\tau, p) \Vert \, \dd \tau
    \leq c_0 \int_{T_1}^s e^{-\tau \delta_0} \, \dd \tau 
    \leq c_0 \int_{T_1}^\infty e^{-\tau \delta_0} \, \dd \tau 
    = \frac{c_0}{\delta_0} e^{- T_1 \delta_0}
    \leq \frac{c_0}{\delta_0}.
\label{E3}
\ee
As concerns the double integral appearing on the right hand side of \eqref{E2_}, 
the Fubini--Tonelli theorem is clearly applicable, and we find that
\be
    \int_{T_1}^s 
        \Vert \bscA(\tau, p) \Vert 
        \left( 
            \int_{T_1}^\tau \Vert \bscW^{(j)}(u, p) \Vert \, \dd u
        \right) 
        \, \dd \tau
    = \int_{T_1}^s
        \left(
            \int_u^s \Vert \bscA(\tau, p) \Vert \, \dd \tau
        \right)
        \Vert \bscW^{(j)}(u, p) \Vert \, \dd u.
\label{double_integral}
\ee
Remembering \eqref{E3}, at this point let us notice that on the right hand 
side of the above equation we can apply the inequality
\be
    \int_u^s \Vert \bscA(\tau, p) \Vert \, \dd \tau 
    \leq \frac{c_0}{\delta_0} e^{-u \delta_0}.
\label{E4}
\ee
Thus, keeping in mind \eqref{E3}, \eqref{double_integral} and \eqref{E4}, 
from \eqref{E2_} we infer at once that
\be
    \int_{T_1}^s 
        \Vert \bscA(\tau, p) \Vert 
        \Vert \bscV^{(j)}(\tau, p) \Vert 
        \, \dd \tau
    \leq 
    v_0 \frac{c_0}{\delta_0} 
    + \frac{c_0}{\delta_0} 
        \int_{T_1}^s 
            e^{-u \delta_0} \Vert \bscW^{(j)}(u, p) \Vert 
            \, \dd u.
\label{E5}
\ee

Finally, by inspecting the last two terms on the right hand side of
\eqref{bscW_norm_estim}, the application of the estimates displayed in
\eqref{bscA_bscB_norms} leads to the inequality
\be
    \int_{T_1}^s 
        \Vert \bscB(\tau, p) \Vert 
        \Vert \bscW^{(j)}(\tau, p) \Vert 
        \, \dd \tau
    \leq c_0 \int_{T_1}^s 
                e^{-\tau \delta_0} 
                \Vert \bscW^{(j)}(\tau, p) \Vert 
                \, \dd \tau,
\label{E6}
\ee
whereas \eqref{bscZ_norm} yields
\be
    \int_{T_1}^s \Vert \bscZ^{(j)}(\tau, p) \Vert \, \dd \tau 
    \leq
    c_0 \int_{T_1}^s (\tau + 1) e^{- \tau \delta_0} \, \dd \tau
    \leq
    c_0 \int_{T_1}^\infty (\tau + 1) e^{- \tau \delta_0} \, \dd \tau
    \leq c_2
\label{E7}
\ee
with some constant $c_2 \geq 0$.

To sum up, by applying the estimates displayed in 
\eqref{bscV_bscW_norms_at_T_1}, \eqref{E5}, \eqref{E6} and \eqref{E7}, from 
the inequality \eqref{bscW_norm_estim} we can derive that
\be
    \Vert \bscW^{(j)}(s, p) \Vert
    \leq
    c_3
    + c_4 \int_{T_1}^s 
            e^{- \tau \delta_0} 
            \Vert \bscW^{(j)}(\tau, p) \Vert 
            \, \dd \tau,
\label{integral_inequality}
\ee
where $c_3 = c_1 + v_0 c_0 \delta_0^{-1} +c_2$ and 
$c_4 = c_0 (1 + \delta_0^{-1})$. As a consequence, by invoking Gr\"onwall's 
lemma (see e.g. Theorem 1.1 in Chapter III of \cite{Hartman}), we obtain that 
\be
    \Vert \bscW^{(j)}(s, p) \Vert 
    \leq c_3 \exp 
                \left( 
                    c_4 \int_{T_1}^s e^{-\tau \delta_0} \, \dd \tau 
                \right).
\label{Gronwall}
\ee
So, with the non-negative constant
\be
    w_0 
    = c_3 \exp 
            \left( 
                c_4 \int_{T_1}^\infty e^{-\tau \delta_0} \, \dd \tau 
            \right)
    = c_3 \exp \left(\frac{c_4}{\delta_0} e^{-T_1 \delta_0} \right)
\label{w_0}
\ee
we end up with the desired inequality 
$\Vert \bscW^{(j)}(s, p) \Vert \leq w_0$. Moreover, utilizing 
\eqref{bscV_norm_estim} and \eqref{bscV_bscW_norms_at_T_1}, it also 
follows that
\be
    \Vert \bscV^{(j)}(s, p) \Vert \leq v_0 + w_0 (s - T_1), 
\label{bscV_norm_OK}
\ee
whence the proof is complete.
\end{proof}

Of course, later on we shall need much sharper estimates on the functions
defined in \eqref{cV_and_cW} than the rudimentary inequalities given in the 
previous Lemma. To make progress, let $a \in \bN_n$, $p \in K_0$, and suppose 
that $s' \geq s \geq T_1$. Recalling the time derivatives appearing in Lemma 
\ref{LEMMA_cE_and_cF_time_derivatives}, it is clear that
\be
    \cE_a(s', p) - \cE_a(s, p) = \int_s^{s'} \cF_a(\tau, p) \, \dd \tau
    \midand
    \cF_a(s', p) - \cF_a(s, p) 
    = \int_s^{s'} \varphi_a(\Phi(\tau, p)) \, \dd \tau.
\label{X1}
\ee
Notice that by Lemma \ref{LEMMA_asymptotics_1} we have
\be
    \cE_a(s', p) \to 0 
    \midand
    \cF_a(s', p) \to 0
    \qquad
    (s' \to \infty). 
\label{X2}
\ee
Furthermore, from the decay conditions given in Lemma \ref{LEMMA_asymptotics_1} 
we also see that the function
\be
    [s, \infty) \ni \tau \mapsto \cF_a(\tau, p) \in \bC
\label{X3}
\ee
is integrable in Lebesgue's sense over the interval $[s, \infty)$. Combining 
the estimate given in \eqref{theta_estim_0} with Propositions 
\ref{PROPOSITION_u_asymptotics} and \ref{PROPOSITION_cL_a_b_estim}, from 
\eqref{varphi_a} it is also clear that $\varphi_a (\Phi(\tau, p))$ decays 
exponentially fast for $\tau \to \infty$, whence the function
\be
    [s, \infty) \ni \tau \mapsto \varphi_a(\Phi(\tau, p)) \in \bC
\label{X4}
\ee
also belongs to $L^1[s, \infty)$. Therefore, a trivial application of 
Lebesgue's dominated convergence theorem yields immediately that for 
$s' \to \infty$ we can write
\be
    \int_s^{s'} \cF_a(\tau, p) \, \dd \tau 
    \to \int_s^\infty \cF_a(\tau, p) \, \dd \tau
    \midand
    \int_s^{s'} \varphi_a(\Phi(\tau, p)) \, \dd \tau 
    \to \int_s^\infty \varphi_a(\Phi(\tau, p)) \, \dd \tau.
\label{X5}
\ee
Thus, combining \eqref{X2} and \eqref{X5} with \eqref{X1}, we end up with 
the integral representations
\be
    \cE_a(s, p) = - \int_s^\infty \cF_a(\tau, p) \, \dd \tau
    \midand
    \cF_a(s, p) = - \int_s^\infty \varphi_a(\Phi(\tau, p)) \, \dd \tau.
\label{cE_and_cF_integral_representation}
\ee

Making use of the above observations, now we turn back to the study of the 
partial derivatives of the functions $\cE_a$ and $\cF_a$ $(a \in \bN_n)$. 
More precisely, we wish to set up integral representations for the first 
order partial derivatives defined in \eqref{cV_and_cW}. In this respect the 
only non-trivial question is whether in the relationships displayed in 
\eqref{cE_and_cF_integral_representation} the differentiation with respect 
to $\tx_j$ $(j \in \bN_N)$ can be performed under the integral sign. 

Starting with the integral representation of $\cF_a$, notice that the 
integrand $\varphi_a \circ \Phi$ is smooth, and from Lemmas 
\ref{LEMMA_cE_and_cF_time_derivatives} and 
\ref{LEMMA_cV_and_cW_time_derivatives} it is evident that
\be
    \PD{(\varphi_a \circ \Phi)}{\tx_j} 
    = \PD{}{\tx_j} \PD{\cF_a}{t}
    = \PD{}{t} \PD{\cF_a}{\tx_j}
    = \PD{\cW^{(j)}_a}{t}
    = \sum_{c = 1}^n \cA_{a, c} \cV^{(j)}_c 
        + \sum_{c = 1}^n \cB_{a, c} \cW^{(j)}_c + \cZ^{(j)}_a.
\label{Y1}
\ee 
Since still $s \in [T_1, \infty)$ and $p \in K_0$, from Proposition 
\ref{PROPOSITION_bscA_bscB_bscZ_norms} and Lemma 
\ref{LEMMA_bscV_and_bscW_estim} we see that for any $\tau \geq s$ the
above partial derivative can be estimated from above as
\be
\begin{split}
    \left\vert \PD{(\varphi_a \circ \Phi)}{\tx_j}(\tau, p) \right\vert
    & \leq 
    \sum_{c = 1}^n 
        \vert \cA_{a, c}(\tau, p) \vert \vert \cV^{(j)}_c(\tau, p) \vert 
    + \sum_{c = 1}^n 
        \vert \cB_{a, c}(\tau, p) \vert \vert \cW^{(j)}_c(\tau, p) \vert 
    + \vert \cZ^{(j)}_a(\tau, p) \vert
    \\
    & \leq
    c_1 (\tau + 1) e^{- \tau \delta_0}
\end{split}
\label{Y2}
\ee
with some constant $c_1 \geq 0$. Now, the point is that the majorizing 
function
\be
    [s, \infty) \ni \tau \mapsto c_1 (\tau + 1) e^{- \tau \delta_0} \in \bC
\label{Y3}
\ee
is integrable on $[s, \infty)$. Moreover, the majorization \eqref{Y2} is 
uniform in the sense that it is independent of $p \in K_0$. So, a trivial 
application of Lebesgue's dominated convergence theorem yields that if $p$ 
belongs to the interior of $K_0$, then the differentiation can be performed 
under the integral sign. (This nice fact can be found in any textbook on real
analysis, see e.g. Theorem 2.27 in \cite{Folland}.) That is, it is fully 
justified to write that
\be
    \cW_a^{(j)}(s, p) = \PD{\cF_a}{\tx_j}(s, p) 
    = - \int_s^\infty \PD{(\varphi_a \circ \Phi)}{\tx_j}(\tau, p) \, \dd \tau
    \qquad
    (s \in [T_1, \infty), \, p \in \text{int}(K_0)).
\label{cW_integral_representation}
\ee
Now, due to \eqref{Y2}, it follows that $\forall s \in [T_1, \infty)$ and
$\forall p \in \text{int}(K_0)$ we can write
\be
    \vert \cW^{(j)}_a(s, p) \vert 
    \leq 
    \int_s^\infty 
        \left\vert 
            \PD{(\varphi_a \circ \Phi)}{\tx_j}(\tau, p) 
        \right\vert \, \dd \tau
    \leq 
    c_1 \int_s^\infty (\tau + 1) e^{- \tau \delta_0} \, \dd \tau
    \leq 
    c_2 (s + 1) e^{-s \delta_0} 
\label{Y4}
\ee
with some constant $c_2 \geq 0$. Since the closure of $\text{int}(K_0)$ 
coincides with $K_0$, by continuity it follows that
\be
    \vert \cW^{(j)}_a(s, p) \vert \leq c_2 (s + 1) e^{-s \delta_0}
    \qquad
    (s \in [T_1, \infty), \, p \in K_0).
\label{cW_estim_OK}
\ee

Turning to the integral representation of $\cE_a$, from 
\eqref{cE_and_cF_integral_representation} we see that in this case the 
integrand in question is given by the smooth function $\cF_a$. By definition 
\eqref{cV_and_cW}, the derivative of $\cF_a$ with respect to $\tx_j$ is the 
smooth function $\cW_a^{(j)}$, for which we have just derived the estimate 
\eqref{cW_estim_OK}. We proceed by noting that the upper bound in 
\eqref{cW_estim_OK} is independent of the choice of $p \in K_0$. Moreover, 
this upper bound is an integrable function of $s$ on $[T_1, \infty)$, thus 
performing the differentiation under the integral sign is completely 
legitimate. More precisely, we are entitled to write that
\be
    \cV_a^{(j)}(s, p) = \PD{\cE_a}{\tx_j}(s, p) 
    = - \int_s^\infty \cW_a^{(j)}(\tau, p) \, \dd \tau
    \qquad
    (s \in [T_1, \infty), \, p \in \text{int}(K_0)).
\label{cV_integral_representation}
\ee
As a consequence, $\forall s \in [T_1, \infty)$ and 
$\forall p \in \text{int}(K_0)$ we have
\be
    \vert \cV^{(j)}_a(s, p) \vert 
    \leq 
    \int_s^\infty \vert \cW_a^{(j)}(\tau, p) \vert \, \dd \tau
    \leq 
    c_2 \int_s^\infty (\tau + 1) e^{- \tau \delta_0} \, \dd \tau
    \leq 
    c_3 (s + 1) e^{-s \delta_0} 
\label{Y5}
\ee
with some constant $c_3 \geq 0$. Again, by continuity, we conclude that
\be
    \vert \cV^{(j)}_a(s, p) \vert \leq c_3 (s + 1) e^{-s \delta_0}
    \qquad
    (s \in [T_1, \infty), \, p \in K_0).
\label{cV_estim_OK}
\ee
At this point we are in a position to formulate the most important technical
result of this section.

\begin{THEOREM}
\label{THEOREM_cV_and_cW_estim}
There is a constant $C \geq 0$ such that $\forall a \in \bN_n$, 
$\forall j \in \bN_N$, $\forall s \in [T_1, \infty)$ and 
$\forall p \in K_0$ we can write that
\be
    \vert \cV^{(j)}_a(s, p) \vert \leq C (s + 1) e^{-s \delta_0}
    \midand
    \vert \cW^{(j)}_a(s, p) \vert \leq C (s + 1) e^{-s \delta_0}.
\label{cV_and_cW_estim_OK}
\ee
\end{THEOREM}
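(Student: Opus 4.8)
The plan is essentially to read the statement off the two pointwise bounds that have already been assembled. Estimates \eqref{cW_estim_OK} and \eqref{cV_estim_OK} give precisely $\vert \cW^{(j)}_a(s,p)\vert \le c_2(s+1)e^{-s\delta_0}$ and $\vert \cV^{(j)}_a(s,p)\vert \le c_3(s+1)e^{-s\delta_0}$ for all $a\in\bN_n$, $j\in\bN_N$, $s\in[T_1,\infty)$ and $p\in K_0$, where $c_2$ and $c_3$ are the constants appearing there. So I would simply put $C=\max\{c_2,c_3\}$, and both asserted inequalities follow in one line. What actually deserves comment is therefore not the theorem itself but the chain of arguments delivering \eqref{cW_estim_OK} and \eqref{cV_estim_OK}, and where the genuine difficulty in that chain lies.

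That chain runs as follows. First I would invoke the crude bound of Lemma \ref{LEMMA_bscV_and_bscW_estim} — obtained by recasting the equations of variation \eqref{variational_equations} in integral form and applying Gr\"onwall's lemma, exploiting the exponential decay of $\Vert\bscA\Vert$ and $\Vert\bscB\Vert$ from Proposition \ref{PROPOSITION_bscA_bscB_bscZ_norms} — which forces $\Vert\bscV^{(j)}\Vert$ to grow at most linearly in $s$ and $\Vert\bscW^{(j)}\Vert$ to stay bounded. Feeding this into $\dot{\cW}^{(j)}_a=\sum_c\cA_{a,c}\cV^{(j)}_c+\sum_c\cB_{a,c}\cW^{(j)}_c+\cZ^{(j)}_a$ together with Proposition \ref{PROPOSITION_bscA_bscB_bscZ_norms} produces the majorant $\left\vert \PD{(\varphi_a\circ\Phi)}{\tx_j}(\tau,p)\right\vert\le c_1(\tau+1)e^{-\tau\delta_0}$, which is integrable on $[s,\infty)$ and independent of $p$. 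By Lebesgue's dominated convergence theorem I can then differentiate the integral representation $\cF_a(s,p)=-\int_s^\infty\varphi_a(\Phi(\tau,p))\,\dd\tau$ under the integral sign for $p\in\text{int}(K_0)$, obtaining $\cW^{(j)}_a(s,p)=-\int_s^\infty\PD{(\varphi_a\circ\Phi)}{\tx_j}(\tau,p)\,\dd\tau$; estimating this integral via $\int_s^\infty(\tau+1)e^{-\tau\delta_0}\,\dd\tau\le c(s+1)e^{-s\delta_0}$ gives \eqref{cW_estim_OK} on $\text{int}(K_0)$, which extends to $K_0$ by continuity. The same argument applied to $\cE_a(s,p)=-\int_s^\infty\cF_a(\tau,p)\,\dd\tau$, now with \eqref{cW_estim_OK} itself serving as the integrable majorant, yields \eqref{cV_estim_OK}.

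The hard part is not the bookkeeping but the legitimacy of differentiation under the integral sign: this requires a majorant that is simultaneously integrable in $\tau$ over $[s,\infty)$ and uniform in $p$ on a neighbourhood, which is exactly why one argues first on $\text{int}(K_0)$ and only afterwards passes to the closed set $K_0$ by continuity. One must also check that integrating a bound of the shape $(\tau+1)e^{-\tau\delta_0}$ from $s$ to $\infty$ reproduces — rather than worsens — a bound of the shape $(s+1)e^{-s\delta_0}$, which holds since $\int_s^\infty(\tau+1)e^{-\tau\delta_0}\,\dd\tau=\delta_0^{-1}(s+1+\delta_0^{-1})e^{-s\delta_0}$. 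Once those two points are secured, the present theorem is immediate with $C=\max\{c_2,c_3\}$.
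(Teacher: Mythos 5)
Your proposal is correct and follows exactly the paper's own route: the theorem is indeed just the combination of the estimates \eqref{cW_estim_OK} and \eqref{cV_estim_OK} with $C=\max\{c_2,c_3\}$, and your reconstruction of the chain behind them (Gr\"onwall via Lemma \ref{LEMMA_bscV_and_bscW_estim}, the majorant from Proposition \ref{PROPOSITION_bscA_bscB_bscZ_norms}, differentiation under the integral sign on $\text{int}(K_0)$ with continuity extension to $K_0$, first for $\cW^{(j)}_a$ and then for $\cV^{(j)}_a$) matches the argument preceding the theorem in the text, including the explicit evaluation $\int_s^\infty(\tau+1)e^{-\tau\delta_0}\,\dd\tau=\delta_0^{-1}(s+1+\delta_0^{-1})e^{-s\delta_0}$.
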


Just as earlier in Lemma \ref{LEMMA_asymptotics_1}, in the above Theorem we 
face the problem that the relevant asymptotic information about the particle 
rapidities is somewhat hidden. Although the above estimate on $\cW_a^{(j)}$ 
proves to be crucial, what we rather need is the control over the partial
derivative
\be
    \cU_a^{(j)} = \PD{\cG_a}{\tx_j} \in C^\infty(\bR \times P)
    \qquad
    (a \in \bN_n, \, j \in \bN_N)
\label{cU}
\ee
with $\cG_a$ defined in \eqref{cG_a}. Since by construction we have
\be
    \PD{(\theta_a \circ \Phi)}{\tx_j} = \cU_a^{(j)} + \PD{\theta_a^+}{x_j},
\label{TTTT1}
\ee
notice that the application of the formula \eqref{PDs_theta_b_circ_Phi} 
entails that
\be
\begin{split}
    \cU_a^{(j)} 
    & = \frac{1}{\cosh(\theta_a(\Phi)) u_a(\Phi)} \cW^{(j)}_a
        - \frac{\cosh(\theta_a(\Phi)) u_a(\Phi) - \cosh(\theta_a^+)}
            {\cosh(\theta_a(\Phi)) u_a(\Phi)}
            \PD{\theta_a^+}{x_j} 
    \\
    & \quad 
        - \tanh(\theta_a(\Phi)) 
            \sum_{c = 1}^n 
                \cL^a_c(\Phi) 
                \left( 
                    \cV^{(j)}_c + t \cosh(\theta_c^+) \PD{\theta_c^+}{x_j} 
                        + \PD{\lambda_c^+}{x_j} 
                \right).
\end{split}
\label{TTTT2}
\ee
Thus, it is evident that
\be
\begin{split}
    \vert \cU_a^{(j)} \vert 
    & \leq \vert \cW^{(j)}_a \vert
        + \vert \cosh(\theta_a(\Phi)) u_a(\Phi) - \cosh(\theta_a^+) \vert
            \left\vert \PD{\theta_a^+}{x_j} \right\vert 
    \\
    & \quad + \sum_{c = 1}^n 
                \vert \cL^a_c(\Phi) \vert 
                \left( 
                    \vert \cV^{(j)}_c \vert 
                    + \vert t \vert \cosh(\theta_c^+) 
                        \left\vert \PD{\theta_c^+}{x_j} \right\vert 
                    + \left\vert \PD{\lambda_c^+}{x_j} \right\vert 
                \right).
\end{split}
\label{TTTT3}
\ee
Let $s \in [T_1, \infty)$ and $p \in K_0$ be arbitrary, and examine the right 
hand side of the above inequality at the point $(s, p)$. Due to Theorem 
\ref{THEOREM_cV_and_cW_estim} and Proposition \ref{PROPOSITION_cL_a_b_estim}, 
the first term and also the terms in the last sum are under direct control, 
so only the second term requires attention. However, by exploiting the 
relationship \eqref{cH_0} together with Proposition 
\ref{PROPOSITION_u_asymptotics} and Lemma \ref{LEMMA_asymptotics_2}, it is a 
routine exercise to verify that
\be
    \vert 
        \cosh(\theta_a(\Phi(s, p))) u_a(\Phi(s, p)) - \cosh(\theta_a^+(p)) 
    \vert
    \leq
    c'_1 e^{-s \delta_0} 
\label{TTTT4}
\ee
with some constant $c'_1 \geq 0$. Thus, the following result is immediate.

\begin{PROPOSITION}
\label{PROPOSITION_cU_estim}
There is a constant $C' \geq 0$ such that $\forall a \in \bN_n$, 
$\forall j \in \bN_N$, $\forall s \in [T_1, \infty)$ and 
$\forall p \in K_0$ we can write that
\be
    \vert \cU^{(j)}_a(s, p) \vert \leq C' (s + 1) e^{-s \delta_0}.
\label{cU_estim_OK}
\ee
\end{PROPOSITION}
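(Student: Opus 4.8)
The plan is to use the explicit pointwise bound \eqref{TTTT3} on $|\cU_a^{(j)}|$ already derived in the text, and to control each of the three types of terms on its right-hand side uniformly over $s\in[T_1,\infty)$ and $p\in K_0$. The estimate is essentially assembled; what remains is to verify that the middle term, namely $|\cosh(\theta_a(\Phi))u_a(\Phi)-\cosh(\theta_a^+)|$ times $|\partial\theta_a^+/\partial x_j|$, decays like $(s+1)e^{-s\delta_0}$ (in fact exponentially, with no polynomial factor needed), as asserted in \eqref{TTTT4}.

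First I would record the three contributions. The term $|\cW_a^{(j)}|$ is bounded by $C(s+1)e^{-s\delta_0}$ by Theorem \ref{THEOREM_cV_and_cW_estim}. The terms in the final sum have the shape $|\cL_c^a(\Phi)|\bigl(|\cV_c^{(j)}| + |t|\cosh(\theta_c^+)|\partial\theta_c^+/\partial x_j| + |\partial\lambda_c^+/\partial x_j|\bigr)$; here $|\cL_c^a(\Phi(s,p))|\le C_2 e^{-s\delta_0}$ by Proposition \ref{PROPOSITION_cL_a_b_estim} (using $\delta_0\le\varDelta_0$), $|\cV_c^{(j)}(s,p)|\le C(s+1)e^{-s\delta_0}$ again by Theorem \ref{THEOREM_cV_and_cW_estim}, and the remaining factors $\cosh(\theta_c^+)$, $|\partial\theta_c^+/\partial x_j|$, $|\partial\lambda_c^+/\partial x_j|$ are continuous functions of $p$, hence bounded on the compact set $K_0$ by some constant $b_0$; consequently this whole contribution is bounded by $C_2 e^{-s\delta_0}\bigl(C(s+1)e^{-s\delta_0} + s b_0 + b_0\bigr)$, which is dominated by a multiple of $(s+1)e^{-s\delta_0}$ since $e^{-s\delta_0}\le 1$.

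For the middle term I would argue as in \eqref{TTTT4}: write
\[
    \cosh(\theta_a(\Phi))u_a(\Phi) - \cosh(\theta_a^+)
    = \cosh(\theta_a(\Phi))\bigl(u_a(\Phi)-1\bigr)
        + \bigl(\cosh(\theta_a(\Phi)) - \cosh(\theta_a^+)\bigr).
\]
The first summand is controlled by $\cosh(\theta_a(\Phi(s,p)))\le\cH_0$ from \eqref{theta_estim_0} and by $u_a(\Phi(s,p))-1\le C_1 e^{-s\varDelta_0}\le C_1 e^{-s\delta_0}$ from Proposition \ref{PROPOSITION_u_asymptotics}. For the second summand, the mean value theorem gives $|\cosh(\theta_a(\Phi))-\cosh(\theta_a^+)|\le\sinh(\max\{|\theta_a(\Phi)|,|\theta_a^+|\})\,|\cG_a|$ with $\cG_a=\theta_a\circ\Phi - \theta_a^+$; both arguments are bounded by $\cH_0$ on $K_0$ (the bound $|\theta_a^+|=2|\htheta_a|\le\cH_0$ follows the same way as \eqref{theta_estim_0}, or directly from continuity on $K_0$), and $|\cG_a(s,p)|\le C_4 e^{-s\delta_0}$ by Lemma \ref{LEMMA_asymptotics_2}. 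Multiplying by the bound $b_0$ for $|\partial\theta_a^+/\partial x_j|$ on $K_0$ yields a bound of the form $c'_1 e^{-s\delta_0}$ for the middle term, which is again dominated by $(s+1)e^{-s\delta_0}$.

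Finally I would collect the three bounds, take $C'$ to be an appropriate sum of the constants appearing above, multiplied by the number of terms, to obtain $|\cU_a^{(j)}(s,p)|\le C'(s+1)e^{-s\delta_0}$ for all $a\in\bN_n$, $j\in\bN_N$, $s\in[T_1,\infty)$ and $p\in K_0$. I do not anticipate a genuine obstacle here: every ingredient is already in hand, and the only mild care needed is bookkeeping — keeping the polynomial factor $(s+1)$ (which really comes only from $\cV_c^{(j)}$, $\cW_a^{(j)}$ and the explicit $|t|=s$ in \eqref{TTTT3}) and absorbing all the purely exponential terms into it via $e^{-s\delta_0}\le 1$ and the boundedness of the smooth asymptotic data on the compact set $K_0$.
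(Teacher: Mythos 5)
Your proposal is correct and follows essentially the same route as the paper: it starts from the bound \eqref{TTTT3}, controls the first term and the sum via Theorem \ref{THEOREM_cV_and_cW_estim}, Proposition \ref{PROPOSITION_cL_a_b_estim} and compactness of $K_0$, and handles the middle term exactly as in \eqref{TTTT4}. Your splitting of $\cosh(\theta_a(\Phi))u_a(\Phi)-\cosh(\theta_a^+)$ together with \eqref{theta_estim_0}, Proposition \ref{PROPOSITION_u_asymptotics}, the mean value theorem and Lemma \ref{LEMMA_asymptotics_2} is precisely the ``routine exercise'' the paper leaves implicit.
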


As a trivial corollary of Lemma \ref{LEMMA_asymptotics_1}, Lemma
\ref{LEMMA_asymptotics_2}, Theorem \ref{THEOREM_cV_and_cW_estim} and 
Proposition \ref{PROPOSITION_cU_estim}, we conclude this subsection with 
the following observation.

\begin{PROPOSITION}
\label{PROPOSITION_important_limits}
As $s \to \infty$, $\forall a \in \bN_n$, $\forall j \in \bN_N$ and 
$\forall m \in \bN$ we have
\be
    s^m \cE_a(s, p_0) \to 0, 
    \quad 
    s^m \cG_a(s, p_0) \to 0, 
    \quad 
    s^m \cV^{(j)}_a(s, p_0) \to 0, 
    \quad 
    s^m \cU^{(j)}_a(s, p_0) \to 0.
\label{important_limits}
\ee
\end{PROPOSITION}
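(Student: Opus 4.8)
The plan is to combine the four asymptotic estimates quoted in the statement with the single elementary fact that an exponential $e^{-s\delta_0}$ with $\delta_0>0$ decays faster than any power of $s$. First I would fix arbitrary indices $a\in\bN_n$, $j\in\bN_N$ and an exponent $m\in\bN$, and recall that throughout subsections \ref{SUBSECTION_Asymptotics} and \ref{SUBSECTION_Analyzing_the_variational_equations} the base point $p_0\in P$ is held fixed, with all the constants $T_0,C_0,R_0,\varDelta_0$, the compact set $K_0\subset P$, and the shorthand $\delta_0=\min\{R_0,\varDelta_0\}>0$ attached to this same $p_0$; in particular $p_0\in\mathrm{int}(K_0)\subset K_0$, so each of the quoted results may be evaluated at the point $p_0$.

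Next I would simply read off the pointwise bounds valid for all sufficiently large $s$ (say $s\geq T_1\geq T_0$): Lemma \ref{LEMMA_asymptotics_1} gives $|\cE_a(s,p_0)|\leq C_0 e^{-sR_0}\leq C_0 e^{-s\delta_0}$; Lemma \ref{LEMMA_asymptotics_2} gives $|\cG_a(s,p_0)|\leq C_4 e^{-s\delta_0}$; Theorem \ref{THEOREM_cV_and_cW_estim} gives $|\cV^{(j)}_a(s,p_0)|\leq C(s+1)e^{-s\delta_0}$ (and likewise for $\cW^{(j)}_a$); and Proposition \ref{PROPOSITION_cU_estim} gives $|\cU^{(j)}_a(s,p_0)|\leq C'(s+1)e^{-s\delta_0}$. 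Thus in all four cases the modulus of the function in question is dominated, for large $s$, by a fixed polynomial in $s$ times $e^{-s\delta_0}$ with $\delta_0>0$.

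Finally I would multiply each of these inequalities by $s^m$ and let $s\to\infty$, using that $\lim_{s\to\infty}s^{k}e^{-s\delta_0}=0$ for every $k\geq 0$. This yields at once $s^m\cE_a(s,p_0)\to 0$, $s^m\cG_a(s,p_0)\to 0$, $s^m\cV^{(j)}_a(s,p_0)\to 0$ and $s^m\cU^{(j)}_a(s,p_0)\to 0$, which is the claim. There is no real obstacle here — the statement is indeed a trivial corollary; the only mild point of care is the bookkeeping needed to confirm that the various constants and the compact neighborhood in the four invoked results all refer to the common base point $p_0$, which is exactly how the preceding subsections were arranged.
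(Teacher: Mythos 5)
Your proposal is correct and follows exactly the route intended by the paper, which presents the Proposition as a trivial corollary of Lemma \ref{LEMMA_asymptotics_1}, Lemma \ref{LEMMA_asymptotics_2}, Theorem \ref{THEOREM_cV_and_cW_estim} and Proposition \ref{PROPOSITION_cU_estim}: evaluate the four exponential-decay bounds at $p_0 \in K_0$ and use that $s^{m+1}e^{-s\delta_0} \to 0$. Nothing is missing.
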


\subsection{Canonicity of the dual variables}
\label{SUBSECTION_dual_variables_PBs}
Our first goal in this subsection is to compute the Poisson brackets of
the asymptotic variables defined in \eqref{lambda_+} and \eqref{theta_+}. 
Recalling \eqref{cE_a}, \eqref{cG_a}, and the notation
for the sections introduced in \eqref{sections}, it is clear that 
$\forall a \in \bN_n$ and $\forall s \in \bR$ we can write that
\be
    \lambda_a \circ \Phi_s = s \sinh(\theta_a^+) + \lambda_a^+ + (\cE_a)_s
    \midand
    \theta_a \circ \Phi_s = \theta_a^+ + (\cG_a)_s.
\label{lambda_and_theta_circ_Phi_s}
\ee
Let us keep in mind that in the previous subsections we established tight 
control on the `error terms' $\cE_a$ and $\cG_a$. Therefore, by exploiting 
the fundamental Poisson brackets \eqref{lambda_and_theta_PBs} and the fact 
that $\forall s \in \bR$ the time-$s$ flow $\Phi_s \colon P \rightarrow P$ 
is a \emph{symplectomorphism}, the Poisson brackets of the asymptotic 
variables also become accessible.

\begin{LEMMA}
\label{LEMMA_asymptotic_variables_PBs}
For all $a, b \in \bN_n$ we have 
$\PB{\lambda_a^+}{\lambda_b^+} = 0 = \PB{\theta_a^+}{\theta_b^+}$ and
$\PB{\lambda_a^+}{\theta_b^+} = \delta_{a, b}$.
\end{LEMMA}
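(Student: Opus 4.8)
The plan is to exploit the fact that the time-$s$ flow $\Phi_s$ is a symplectomorphism, so the Poisson brackets of any functions are preserved when we precompose with $\Phi_s$. Concretely, for fixed $a,b \in \bN_n$ I would apply the defining Poisson-bracket formula \eqref{PB} to the pairs of functions $\lambda_a \circ \Phi_s, \lambda_b \circ \Phi_s$, etc. Since $\Phi_s$ is symplectic, $\PB{\lambda_a \circ \Phi_s}{\lambda_b \circ \Phi_s} = \PB{\lambda_a}{\lambda_b} \circ \Phi_s = 0$, and similarly for the other two brackets, which equal $0$ and $\delta_{a,b}$ respectively. Now I substitute the asymptotic decompositions \eqref{lambda_and_theta_circ_Phi_s}: $\lambda_a \circ \Phi_s = s \sinh(\theta_a^+) + \lambda_a^+ + (\cE_a)_s$ and $\theta_a \circ \Phi_s = \theta_a^+ + (\cG_a)_s$, expand the bilinear Poisson bracket, and isolate the leading behaviour as $s \to \infty$.

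The key computation is to expand, say, $\PB{\theta_a \circ \Phi_s}{\theta_b \circ \Phi_s}$. Writing it out via \eqref{PB} in the coordinates $x_k$, the bracket is bilinear, so it splits into $\PB{\theta_a^+}{\theta_b^+}$ plus cross terms involving $\PD{(\cG_a)_s}{x_k}$ and $\PD{(\cG_b)_s}{x_l}$, plus a term quadratic in these derivatives. But the partial derivatives of $(\cG_a)_s$ with respect to the coordinates $x_k$ are exactly the functions $\cU_a^{(k)}(s,\cdot)$ from \eqref{cU}, and Proposition \ref{PROPOSITION_cU_estim} (equivalently Proposition \ref{PROPOSITION_important_limits}) shows these decay to $0$ as $s\to\infty$ (at the fixed point $p_0$, and then for all $p$ since $p_0$ was arbitrary). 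Hence every cross term and the quadratic term vanishes in the limit, and since the left-hand side is identically $0$ (independent of $s$), we conclude $\PB{\theta_a^+}{\theta_b^+} = 0$. Similarly, $\PB{\lambda_a \circ \Phi_s}{\theta_b \circ \Phi_s} = \delta_{a,b}$ expands with the $s$-linear term $s\,\PB{\sinh(\theta_a^+)}{\theta_b^+} = s \cosh(\theta_a^+)\PB{\theta_a^+}{\theta_b^+}$, which already vanishes once we know $\PB{\theta_a^+}{\theta_b^+}=0$; the remaining $s$-independent leading term is $\PB{\lambda_a^+}{\theta_b^+}$, and all error contributions involve $\cV_a^{(k)}$ or $\cU_b^{(k)}$, which decay; so $\PB{\lambda_a^+}{\theta_b^+} = \delta_{a,b}$. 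The bracket $\PB{\lambda_a^+}{\lambda_b^+}$ is handled the same way: the potentially dangerous $s^2$ and $s$ terms carry a factor $\PB{\theta_a^+}{\theta_b^+} = 0$ or $\PB{\lambda_a^+}{\theta_b^+} - \delta_{a,b}$-type combinations that we have just shown vanish, leaving $\PB{\lambda_a^+}{\lambda_b^+} = 0$. So the order of operations matters: first $\PB{\theta^+}{\theta^+}$, then $\PB{\lambda^+}{\theta^+}$, then $\PB{\lambda^+}{\lambda^+}$.

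The main subtlety — and the step I would be most careful about — is the appearance of the explicit factor of $s$ (and $s^2$ in the $\lambda$--$\lambda$ bracket) multiplying derivatives of $\sinh(\theta_a^+)$. One must check that these growing prefactors are genuinely killed: this is why Proposition \ref{PROPOSITION_important_limits} is phrased with the extra polynomial weight $s^m$, guaranteeing $s^m \cV_a^{(j)}(s,p_0) \to 0$ and $s^m \cU_a^{(j)}(s,p_0)\to 0$. So in the expansion I would group terms by their power of $s$, verify that each $s$-positive coefficient is a finite sum of products of bounded smooth functions (like $\cosh(\theta_a^+)$, $\PD{\theta_c^+}{x_k}$) with factors $\cV_a^{(k)}$ or $\cU_a^{(k)}$ or with brackets already known to vanish, and invoke Proposition \ref{PROPOSITION_important_limits} to send the whole thing to $0$. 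Finally, since the claimed brackets are smooth functions on $P$ and $p_0$ was an arbitrary point, the pointwise limits established at $p_0$ prove the identities globally. A clean way to organize the write-up is to first record the general identity $\PB{f \circ \Phi_s}{h \circ \Phi_s} = \PB{f}{h}\circ\Phi_s$ valid for symplectomorphisms, then do the three expansions in the stated order.
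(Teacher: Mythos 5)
Your proposal is correct and follows essentially the same route as the paper's proof: exploit that $\Phi_s$ is a symplectomorphism, substitute the decompositions \eqref{lambda_and_theta_circ_Phi_s}, expand the brackets via \eqref{PB}, identify the error-term derivatives with $\cV_a^{(j)}$ and $\cU_a^{(j)}$, kill the $s$- and $s^2$-weighted terms using Proposition \ref{PROPOSITION_important_limits}, and establish the three brackets in the order $\PB{\theta^+}{\theta^+}$, $\PB{\lambda^+}{\theta^+}$, $\PB{\lambda^+}{\lambda^+}$. This matches the paper's argument, including the cancellation of the $s$-linear term in the $\lambda$--$\lambda$ bracket via the previously established relations.
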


\begin{proof}
Throughout the proof let $a, b \in \bN_n$ and $p \in P$ be arbitrary, 
but fixed elements. Starting with the Poisson brackets of the asymptotic 
rapidities, from \eqref{lambda_and_theta_circ_Phi_s} it is evident that 
$\forall s \in \bR$ we have
\be
    0 = \PB{\theta_a}{\theta_b} \circ \Phi_s
    = \PB{\theta_a \circ \Phi_s}{\theta_b \circ \Phi_s}
    = \PB{\theta_a^+}{\theta_b^+} + \PB{\theta_a^+}{(\cG_b)_s}
        + \PB{(\cG_a)_s}{\theta_b^+} + \PB{(\cG_a)_s}{(\cG_b)_s}.
\label{P1}
\ee
Now, recalling the definitions \eqref{tx} and \eqref{cU}, at point $p$ 
the Poisson bracket formula \eqref{PB} allows us to write that 
\be
\begin{split}
    \PB{\theta_a^+}{(\cG_b)_s}(p) 
    & = \sum_{k, l = 1}^N \Omega_{k, l} 
        \PD{\theta_a^+}{x_k}(p) \PD{(\cG_b)_s}{x_l}(p) 
    = \sum_{k, l = 1}^N \Omega_{k, l} 
        \PD{\theta_a^+}{x_k}(p) \PD{\cG_b}{\tx_l}(s, p)
    \\
    & = \sum_{k, l = 1}^N \Omega_{k, l} 
        \PD{\theta_a^+}{x_k}(p) \cU_b^{(l)}(s, p).
\end{split}
\label{P2}
\ee
However, on account of Proposition \ref{PROPOSITION_important_limits} it
is clear that 
\be
    \PB{\theta_a^+}{(\cG_b)_s}(p) \to 0
    \qquad
    (s \to \infty).
\label{P3}
\ee 
It is obvious that these arguments also apply to the last two terms 
appearing on the right hand side of the equation \eqref{P1}. Indeed, 
one can easily verify that
\be
    \PB{(\cG_a)_s}{\theta_b^+}(p) \to 0
    \midand
    \PB{(\cG_a)_s}{(\cG_b)_s}(p) \to 0
    \qquad
    (s \to \infty),
\label{P4}
\ee
and so from \eqref{P1} we infer that 
\be
    \PB{\theta_a^+}{\theta_b^+}(p) = 0.
\label{PB_theta+_theta+}
\ee

Next, remembering \eqref{lambda_and_theta_circ_Phi_s}, let us notice that 
$\forall s \in \bR$ we can write
\be
\begin{split}
    \delta_{a, b} & = \PB{\lambda_a}{\theta_b} \circ \Phi_s
    = \PB{\lambda_a \circ \Phi_s}{\theta_b \circ \Phi_s}
    \\
    & = s \PB{\sinh(\theta_a^+)}{\theta_b^+}
        + \PB{\lambda_a^+}{\theta_b^+}
        + \PB{(\cE_a)_s}{\theta_b^+}
        \\
        & \quad + s \PB{\sinh(\theta_a^+)}{(\cG_b)_s} 
        + \PB{\lambda_a^+}{(\cG_b)_s}
        + \PB{(\cE_a)_s}{(\cG_b)_s}.
\end{split}
\label{PP1}
\ee
As before, we inspect the right hand side of the above equation on a 
term-by-term basis. Due to the relationship \eqref{PB_theta+_theta+},
at point $p$ we can write
\be
    \PB{\sinh(\theta_a^+)}{\theta_b^+}(p) 
    = \cosh(\theta_a^+(p)) \PB{\theta_a^+}{\theta_b^+}(p) 
    = 0. 
\label{PP2}
\ee    
Repeating the ideas surrounding the equations \eqref{P2} and \eqref{P3}, from 
Proposition \ref{PROPOSITION_important_limits} and the Poisson bracket formula 
\eqref{PB} one can also infer that in the $s \to \infty$ limit we have
\begin{align}
    & \PB{(\cE_a)_s}{\theta_b^+}(p) \to 0,
    \quad
    s \PB{\sinh(\theta_a^+)}{(\cG_b)_s}(p) \to 0,
    \label{PP3}
    \\
    & \PB{\lambda_a^+}{(\cG_b)_s}(p) \to 0,
    \quad
    \PB{(\cE_a)_s}{(\cG_b)_s}(p) \to 0.
    \label{PP3'}
\end{align}
Thus, \eqref{PP1} immediately leads to the relationship 
\be
    \PB{\lambda_a^+}{\theta_a^+}(p) = \delta_{a, b}.
\label{PB_lambda+_theta+}
\ee

Finally, we turn our attention to the remaining Poisson brackets involving 
only the asymptotic positions. From \eqref{lambda_and_theta_circ_Phi_s} it 
is clear that for each $s \in \bR$ we have
\be
\begin{split}
    0 & = \PB{\lambda_a}{\lambda_b} \circ \Phi_s
    = \PB{\lambda_a \circ \Phi_s}{\lambda_b \circ \Phi_s}
    \\
    & = \PB{\lambda_a^+}{\lambda_b^+} 
        + s^2 \PB{\sinh(\theta_a^+)}{\sinh(\theta_b^+)}
        + s \left( 
                \PB{\sinh(\theta_a^+)}{\lambda_b^+} 
                + \PB{\lambda_a^+}{\sinh(\theta_b^+)}
            \right)
    \\
    & \quad 
        + \PB{\sinh(\theta_a^+)}{s (\cE_b)_s}
        + \PB{s (\cE_a)_s}{\sinh(\theta_b^+)}
        + \PB{\lambda_a^+}{(\cE_b)_s}
        + \PB{(\cE_a)_s}{\lambda_b^+}
        + \PB{(\cE_a)_s}{(\cE_b)_s}.
\end{split}
\label{PPP1}
\ee
Due to the equation \eqref{PB_theta+_theta+} we can write 
\be
    \PB{\sinh(\theta_a^+)}{\sinh(\theta_b^+)}(p) 
    = \cosh(\theta_a^+(p)) \PB{\theta_a^+}{\theta_b^+}(p) \cosh(\theta_b^+(p))
    = 0, 
\label{PPP2}
\ee
whilst the Poisson bracket \eqref{PB_lambda+_theta+} entails
\be
    \PB{\sinh(\theta_a^+)}{\lambda_b^+}(p) 
    + \PB{\lambda_a^+}{\sinh(\theta_b^+)}(p) 
    = - \cosh(\theta_a^+(p)) \delta_{b, a} 
        + \delta_{a, b} \cosh(\theta_b^+(p))
    = 0.
\label{PPP3}
\ee
As before, the application of \eqref{PB} and Proposition 
\ref{PROPOSITION_important_limits} immediately yields that at point $p$
the last five terms appearing on the right hand side of \eqref{PPP1} 
vanish as $s \to \infty$, and so we end up with the desired equation 
\be
    \PB{\lambda_a^+}{\lambda_b^+}(p) = 0.
\label{PB_lambda+_lambda+}
\ee    
Now, on account of the relationships \eqref{PB_theta+_theta+}, 
\eqref{PB_lambda+_theta+} and \eqref{PB_lambda+_lambda+} the proof is 
complete.
\end{proof}

To proceed further, we still have to establish the relationships between the 
asymptotic and the dual variables. Recalling \eqref{theta_+}, we already 
know that $\theta_c^+ = 2 \htheta_c$ $(c \in \bN_n)$, but the relationship 
\eqref{lambda_+} is less explicit. To make it more transparent, we shall
need the smooth function $\Delta_c \colon Q \rightarrow \bR$ defined on
the configuration space $Q$ \eqref{Q} by the formula
\be
\begin{split}
    \Delta_c(\bsxi)
    & = - \half \sum_{d = 1}^{c - 1} 
        \ln \left( 1 + \frac{\sin(\mu)^2}{\sinh(\xi_c - \xi_d)^2} \right)
    + \half \sum_{d = c + 1}^n 
        \ln \left( 1 + \frac{\sin(\mu)^2}{\sinh(\xi_c - \xi_d)^2} \right) 
    \\
    & \quad 
    + \half \sum_{\substack{d = 1 \\ (d \neq c)}}^n 
        \ln \left( 1 + \frac{\sin(\mu)^2}{\sinh(\xi_c + \xi_d)^2} \right)
    + \half \ln \left( 1 + \frac{\sin(\nu)^2}{\sinh(2 \xi_c)^2} \right),
\end{split}
\label{Delta_c}
\ee
where $c \in \bN_n$ and $\bsxi = (\xi_1, \ldots, \xi_n) \in Q$.

\begin{LEMMA}
\label{LEMMA_lambda_+_OK}
For all $c \in \bN_n$ we have
$\lambda_c^+ 
    = \half \hlambda_c + \half \Delta_c(\htheta_1, \ldots, \htheta_n)$.
\end{LEMMA}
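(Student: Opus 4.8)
The plan is to compute the leading principal minors of $\tilde{L}$ explicitly and then match the resulting expression for $\lambda_c^+ = \half \ln(m_c(\tilde{L}))$ \eqref{lambda_+} against the definitions of $\hlambda_c$ \eqref{hlambda} and $\Delta_c$ \eqref{Delta_c}. The key structural observation is that the matrix $W$ \eqref{W} restricts to the identity on the first $n$ coordinates, so for $1 \le k, l \le n$ the entries of $\tilde{L} = W \hL W^{-1}$ \eqref{bsTheta_+_and_L_tilde} coincide with those of $\hL$; hence $\pi_c(\tilde{L}) = \pi_c(\hL)$, and so $m_c(\tilde{L})$ depends only on the leading $c \times c$ block of $\hL$ for every $c \in \bN_n$. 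Moreover, since $C_{k,l} = 0$ whenever $1 \le k, l \le n$, formula \eqref{hL_entries} shows that this block of $\hL^g$ is the Cauchy-type matrix with entries $\ri \sin(\hmu) \hF^g_k \bar{\hF}^g_l / \sinh(\ri \hmu + \htheta^g_k - \htheta^g_l)$.

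First I would pull $\hF^g_k$ out of the $k$-th row and $\bar{\hF}^g_l$ out of the $l$-th column of this block — legitimate since $\hF^g_k > 0$ for $k \le n$ by Lemma \ref{LEMMA_hy} and \eqref{hF_components} — and, using $\ri \sin(\hmu) = \sinh(\ri \hmu)$, apply the determinant formula \eqref{Cauchy_det_formula} with $\alpha = \hmu$ and $\xi_k = \eta_k = \htheta^g_k$. This expresses $\pi_c(\hL^g)$ as an explicit product. Forming the ratio $m_c(\tilde{L}) = \pi_c / \pi_{c-1}$, the $\sinh(\ri\hmu)$-factors and the products over index pairs not containing $c$ cancel, and after invoking the elementary identity $\sinh(\ri \hmu + w) \sinh(\ri \hmu - w) = -(\sin(\hmu)^2 + \sinh(w)^2)$ the signs collapse, leaving
\[ m_c(\tilde{L}) = (\hF^g_c)^2 \prod_{k = 1}^{c - 1} \left( 1 + \frac{\sin(\hmu)^2}{\sinh(\htheta^g_c - \htheta^g_k)^2} \right)^{-1}, \]
which is manifestly positive, as it must be since $\tilde{L}$ is positive definite.

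Taking $\half \ln$ of this and substituting $\ln(\hF^g_c) = \half \hlambda^g_c + \half \ln(\hu^g_c)$, the rearrangement of \eqref{hlambda}, the claim is seen to reduce to the purely trigonometric identity
\[ \ln(\hu^g_c) - \sum_{k = 1}^{c - 1} \ln\left( 1 + \frac{\sin(\hmu)^2}{\sinh(\htheta^g_c - \htheta^g_k)^2} \right) = \Delta_c(\htheta^g_1, \ldots, \htheta^g_n), \]
which I would verify by a term-by-term comparison using the explicit product \eqref{hu}: subtracting the sum turns the $\half \sum_{d \neq c}$ part of $\ln(\hu^g_c)$ that involves $\sinh(\htheta^g_c - \htheta^g_d)$ into $-\half \sum_{d < c} + \half \sum_{d > c}$, which is exactly the first line of \eqref{Delta_c}, while the parts of $\ln(\hu^g_c)$ involving $\sinh(\htheta^g_c + \htheta^g_d)$ and $\sinh(2 \htheta^g_c)$ reproduce the second line (here one uses $\sin(\hmu)^2 = \sin(\mu)^2$ and $\sin(\hnu)^2 = \sin(\nu)^2$). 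This gives $\lambda_c^+ = \half \hlambda_c + \half \Delta_c(\htheta_1, \ldots, \htheta_n)$.

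Everything here is routine; the only points demanding care are the sign and empty-product bookkeeping in the Cauchy determinant ratio — in particular the case $c = 1$, where the displayed formula degenerates to $m_1(\tilde{L}) = (\hF^g_1)^2$, i.e. $\lambda_1^+ = \ln(\hF^g_1)$, and the trigonometric identity to $\ln(\hu^g_1) = \Delta_1(\htheta)$. I do not anticipate a genuine obstacle.
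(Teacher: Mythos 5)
Your proposal is correct and follows essentially the same route as the paper's proof: compute the leading principal minors of $\tilde{L}$ (equivalently of the upper-left block of $\hL$) via the Cauchy determinant formula \eqref{Cauchy_det_formula}, form the ratio $m_c = \pi_c/\pi_{c-1}$, and match the result against \eqref{hlambda}, \eqref{hu} and \eqref{Delta_c}. The only cosmetic difference is that you extract the factors $\hF_k$, $\bar{\hF}_l$ before applying the determinant formula, whereas the paper absorbs them directly into the minor computation.
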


\begin{proof}
From \eqref{hL_entries} and the definition \eqref{bsTheta_+_and_L_tilde} it 
is evident that $\forall a, b \in \bN_n$ we can write that
\be
    \tilde{L}_{a, b} = \hL_{a, b} 
    = \hF_a 
        \frac{\sinh(\ri \hmu)}{\sinh(\ri \hmu + \htheta_a - \htheta_b)} 
        \bar{\hF}_b.
\label{Z1}
\ee
Thus, the $n \times n$ submatrix in the upper left hand corner of $\tilde{L}$
is essentially a Cauchy type matrix multiplied by diagonal matrices from
both sides. Therefore, by utilizing the determinant formula 
\eqref{Cauchy_det_formula}, it is straightforward to verify that for the 
$c$-th $(c \in \bN_n)$ leading principal minor \eqref{pi_j} of $\tilde{L}$ 
we have
\be
    \pi_c(\tilde{L}) 
    = \prod_{d = 1}^c \vert \hF_d \vert^2 
        \prod_{1 \leq a < b \leq c} 
            \frac{\sinh(\htheta_a - \htheta_b)^2}
                {\vert \sinh(\ri \hmu + \htheta_a - \htheta_b) \vert^2}
    = \prod_{d = 1}^c \vert \hF_d \vert^2 
        \prod_{1 \leq a < b \leq c} 
            \left( 
                1 + \frac{\sin(\mu)^2}{\sinh(\htheta_a - \htheta_b)^2} 
            \right)^{-1}.
\label{Z2}
\ee
Recalling the definitions of $m_j$ \eqref{m_j} and the components of $\hF_j$ 
\eqref{hF_components}, it is clear that 
\be 
    m_1(\tilde{L}) = \vert \hF_1 \vert^2 = e^{\hlambda_1} \hu_1.
\label{Z3_1}
\ee
Moreover, for any $c \in \{ 2, \ldots, n \}$ we can write that
\be
    m_c(\tilde{L}) = \frac{\pi_c(\tilde{L})}{\pi_{c - 1}(\tilde{L})}
    = e^{\hlambda_c} \hu_c 
        \prod_{d = 1}^{c - 1} 
            \left( 
                1 + \frac{\sin(\mu)^2}{\sinh(\htheta_c - \htheta_d)^2} 
           \right)^{-1}.
\label{Z3_2}
\ee
Thus, from the formula of $\hu_c$ \eqref{hu} and the definition 
\eqref{lambda_+} the Lemma follows at once.
\end{proof}

The above Lemma motivates the closer inspection of the smooth functions 
defined in \eqref{Delta_c}. By taking the compositions 
$\Delta_c(\lambda_1, \ldots, \lambda_n) \in C^\infty(P)$, elementary 
differentiations immediately reveal the symmetry property
\be
    \PD{\Delta_a(\lambda_1, \ldots, \lambda_n)}{\lambda_b}
    = \PD{\Delta_b(\lambda_1, \ldots, \lambda_n)}{\lambda_a}
    \qquad
    (a, b \in \bN_n).
\label{Delta_symmetry}
\ee
Therefore, since the phase space $P$ \eqref{P} is connected and simply 
connected, there is a globally defined function $\cS \in C^\infty(Q)$,
unique up to a constant, such that for the composition 
$\cS(\lambda_1, \ldots, \lambda_n)$ we have
\be
    \dd (\cS(\lambda_1, \ldots, \lambda_n))
    = \sum_{c = 1}^n \Delta_c(\lambda_1, \ldots, \lambda_n)\dd \lambda_c.
\label{dd_cS}
\ee
At this point we are in a position to prove the most important result of the 
paper. Making use of the relationships displayed in equation \eqref{theta_+} 
and in Lemma \ref{LEMMA_lambda_+_OK}, the Theorem below can be seen as a 
corollary of our scattering theoretical analysis culminating in Lemma 
\ref{LEMMA_asymptotic_variables_PBs}.

\begin{THEOREM}
\label{THEOREM_main_theorem}
The dual variables 
$\hlambda_1, \ldots, \hlambda_n, \htheta_1, \ldots, \htheta_n$ defined in 
equations \eqref{hlambda} and \eqref{htheta} form a global Darboux system 
on the phase space $P$ \eqref{P}. That is, we have
\be
    \PB{\hlambda_a}{\hlambda_b} = 0,
    \quad
    \PB{\htheta_a}{\htheta_b} = 0,
    \quad
    \PB{\hlambda_a}{\htheta_b} = \delta_{a, b}
    \qquad
    (a, b \in \bN_n).
\label{dual_variables_PBs}
\ee
As a consequence, the duality map $\Psi$ \eqref{Psi} is an 
anti-symplectomorphism, i.e., $\Psi^* \omega = - \omega$.
\end{THEOREM}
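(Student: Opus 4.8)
The plan is to derive the Poisson brackets \eqref{dual_variables_PBs} from the already-established brackets of the asymptotic variables (Lemma \ref{LEMMA_asymptotic_variables_PBs}) together with the two dictionary identities $\theta_c^+ = 2 \htheta_c$ (equation \eqref{theta_+}) and $\lambda_c^+ = \half \hlambda_c + \half \Delta_c(\htheta_1,\ldots,\htheta_n)$ (Lemma \ref{LEMMA_lambda_+_OK}). First I would invert this dictionary to express the dual variables in terms of the asymptotic ones: clearly $\htheta_c = \half \theta_c^+$, and hence $\hlambda_c = 2\lambda_c^+ - \Delta_c(\half\theta_1^+,\ldots,\half\theta_n^+)$. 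The point of the function $\cS$ introduced in \eqref{dd_cS} is precisely that $\Delta_c = \partial \cS / \partial \xi_c$ (up to the harmless reindexing between $\lambda$- and $\htheta$-arguments), so the map $(\lambda_c^+,\theta_c^+) \mapsto (\hlambda_c,\htheta_c)$ is a \emph{shift by a gradient in the position-type variable}, i.e.\ of the classical form $q \mapsto q$, $p \mapsto p - \partial_q \cS(q)$ type up to the factors of $2$ and $\half$. Such transformations are canonical, so I expect all three families of brackets to follow by direct computation.

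Concretely, for $\PB{\htheta_a}{\htheta_b}$: since $\htheta_a = \half\theta_a^+$, bilinearity gives $\PB{\htheta_a}{\htheta_b} = \tfrac14 \PB{\theta_a^+}{\theta_b^+} = 0$ by Lemma \ref{LEMMA_asymptotic_variables_PBs}. For $\PB{\hlambda_a}{\htheta_b}$: using $\hlambda_a = 2\lambda_a^+ - \Delta_a(\half\theta_1^+,\ldots,\half\theta_n^+)$ and $\htheta_b = \half\theta_b^+$, I would expand
\be
    \PB{\hlambda_a}{\htheta_b}
    = \PB{2\lambda_a^+}{\tfrac12\theta_b^+}
    - \sum_{c=1}^n \PD{\Delta_a}{\xi_c}\Big(\tfrac12\theta_\bullet^+\Big)
        \cdot \tfrac12 \PB{\tfrac12\theta_c^+}{\tfrac12\theta_b^+}\cdot 2,
\ee
and the second term vanishes because $\PB{\theta_c^+}{\theta_b^+}=0$, leaving $\PB{\hlambda_a}{\htheta_b} = \PB{\lambda_a^+}{\theta_b^+} = \delta_{a,b}$. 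For $\PB{\hlambda_a}{\hlambda_b}$ the computation is slightly longer: expanding both factors, one gets $\PB{\hlambda_a}{\hlambda_b} = 4\PB{\lambda_a^+}{\lambda_b^+}$ plus cross terms of the form $\PB{\lambda_a^+}{\Delta_b}$ and $\PB{\Delta_a}{\lambda_b^+}$ and a term $\PB{\Delta_a}{\Delta_b}$; the last one vanishes since it is a combination of $\PB{\theta^+_\bullet}{\theta^+_\bullet}=0$, while the two cross terms combine, using $\PB{\lambda_a^+}{\theta_c^+}=\delta_{a,c}$, into $-(\partial\Delta_b/\partial\xi_a) + (\partial\Delta_a/\partial\xi_b)$, which is zero by the symmetry property \eqref{Delta_symmetry} — this is exactly the role of that identity, and of the existence of the potential $\cS$ in \eqref{dd_cS}.

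Having established \eqref{dual_variables_PBs}, the anti-symplectic statement $\Psi^*\omega = -\omega$ is immediate: the duality map $\Psi$ \eqref{Psi} sends $p$ to the point whose $(\lambda,\theta)$-coordinates are $(\htheta_1,\ldots,\htheta_n,\hlambda_1,\ldots,\hlambda_n)$, so $\Psi^*\lambda_a = \htheta_a$ and $\Psi^*\theta_a = \hlambda_a$; therefore $\Psi^*\omega = \sum_a \dd\htheta_a \wedge \dd\hlambda_a = -\sum_a \dd\hlambda_a \wedge \dd\htheta_a$, and since the dual variables are canonical by \eqref{dual_variables_PBs} this equals $-\omega$. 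The main obstacle I anticipate is purely bookkeeping: keeping the factors of $2$ and $\half$ straight when inverting the dictionary and when chain-ruling through $\Delta_a$, and making sure the reindexing of $\Delta_a$'s arguments (it is a function on $Q$, and here it is evaluated at $\htheta$ rather than $\lambda$) does not introduce spurious terms. No genuinely hard analysis remains — all the heavy lifting was done in the scattering-theoretic Lemma \ref{LEMMA_asymptotic_variables_PBs}.
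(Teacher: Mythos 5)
Your proposal is correct and follows essentially the same route as the paper: it derives \eqref{dual_variables_PBs} from Lemma \ref{LEMMA_asymptotic_variables_PBs} via the dictionary $\htheta_c = \half\theta_c^+$, $\hlambda_c = 2\lambda_c^+ - \Delta_c(\htheta_1,\ldots,\htheta_n)$, with the cross terms in $\PB{\hlambda_a}{\hlambda_b}$ cancelling by the symmetry \eqref{Delta_symmetry}, and then obtains $\Psi^*\omega = -\omega$ from $\Psi^*\lambda_c = \htheta_c$, $\Psi^*\theta_c = \hlambda_c$ exactly as in the paper.
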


\begin{proof}
As a preliminary step, we introduce the temporary shorthand notation
\be
    \hDelta_c = \Delta_c(\htheta_1, \ldots, \htheta_n) \in C^\infty(P) 
    \qquad
    (c \in \bN_n).
\label{hat_Delta}
\ee
To proceed further, take arbitrary indices $a, b \in \bN_n$ and keep them 
fixed. Recalling \eqref{theta_+} and Lemma \ref{LEMMA_lambda_+_OK}, from 
the application of Lemma \ref{LEMMA_asymptotic_variables_PBs} it comes 
effortlessly that
\begin{align}
    & \PB{\htheta_a}{\htheta_b} 
    = \frac{1}{4} \PB{\theta_a^+}{\theta_b^+}  
    = 0,
    \label{PB_OK1}
    \\
    & \PB{\hlambda_a}{\htheta_b} 
    = \PB{\lambda_a^+}{\theta_b^+} 
        - \PB{\hDelta_a}{\htheta_b} 
    = \delta_{a, b} 
        - \sum_{c = 1}^n \PD{\hDelta_a}{\htheta_c} \PB{\htheta_c}{\htheta_b}
    = \delta_{a, b}.
    \label{PB_OK2}
\end{align}
Furthermore, the above two equations and the symmetry relation 
\eqref{Delta_symmetry} entail 
\be
\begin{split}
    \PB{\hlambda_a}{\hlambda_b} 
    & = 4 \PB{\lambda_a^+}{\lambda_b^+}  
    - 2 \left( 
            \PB{\lambda_a^+}{\hDelta_b} + \PB{\hDelta_a}{\lambda_b^+} 
        \right) 
    + \PB{\hDelta_a}{\hDelta_b}
    \\
    & = - 2 \sum_{c = 1}^n 
        \left( 
            \PB{\lambda_a^+}{\htheta_c} \PD{\hDelta_b}{\htheta_c} 
            + \PD{\hDelta_a}{\htheta_c} \PB{\htheta_c}{\lambda_b^+} 
        \right)
    + \sum_{c, d = 1}^n
        \PD{\hDelta_a}{\htheta_c} 
        \PB{\htheta_c}{\htheta_d} 
        \PD{\hDelta_b}{\htheta_d}
    \\
    & = - \sum_{c = 1}^n 
        \left( 
            \PB{\lambda_a^+}{\theta_c^+} \PD{\hDelta_b}{\htheta_c} 
            + \PD{\hDelta_a}{\htheta_c} \PB{\theta_c^+}{\lambda_b^+} 
        \right)
    = \PD{\hDelta_a}{\htheta_b} - \PD{\hDelta_b}{\htheta_a} 
    = 0, 
\end{split}
\label{PB_OK3}
\ee
and so the proof of the fundamental Poisson brackets 
\eqref{dual_variables_PBs} is complete. 

In terms of the symplectic form \eqref{omega}, the canonicity of the dual 
coordinates can be rephrased as
\be
    \omega = \sum_{c = 1}^n \dd \hlambda_c \wedge \dd \htheta_c.
\label{omega_and_dual_variables}
\ee
Recalling $\Psi$ \eqref{Psi} and \eqref{lambda_and_theta}, it is clear that 
$\forall c \in \bN_n$ we have $\Psi^* \lambda_c = \htheta_c$ and 
$\Psi^* \theta_c = \hlambda_c$, whence
\be
    \Psi^* \omega
    = \Psi^* \sum_{c = 1}^n \dd \lambda_c \wedge \dd \theta_c
    = \sum_{c = 1}^n \dd (\Psi^* \lambda_c) \wedge \dd (\Psi^* \theta_c)
    = \sum_{c = 1}^n \dd \htheta_c \wedge \dd \hlambda_c
    = - \omega
\label{JJJJJ1}
\ee
also follows immediately.
\end{proof}

\subsection{The wave and the scattering maps}
\label{SUBSECTION_W_and_S}
So far we have analyzed the asymptotics of the trajectories only for large
positive values of time. Recalling \eqref{lambda_and_theta_circ_Phi_s}, for
any $a \in \bN_n$ our results can be succinctly summarized as
\be
    \lambda_a \circ \Phi_s \sim s \sinh(\theta_a^+) + \lambda_a^+
    \midand
    \theta_a \circ \Phi_s \sim \theta_a^+
    \qquad
    (s \to \infty).
\label{AS+}
\ee
However, a thorough analysis of the scattering properties does require the 
study of the asymptotics for $s \to -\infty$, too. For this reason, let us
conjugate the matrix flow \eqref{matrix_flow_OK} with $\cR_N$ \eqref{cR}. 
Thereby from \eqref{spectral_identification_OK} we obtain
\be
    \{ e^{\pm 2 \lambda_a \circ \Phi(s, p)} \mid a \in \bN_n \}
    = \Spec 
        \left( 
            \cR_N \tilde{L}(p) \cR_N^{-1} 
            e^{2 s \sinh(\cR_N \bsTheta^+(p) \cR_N^{-1})} 
    \right)
    \qquad
    (s \in \bR, \, p \in P),
\label{spectral_identification_OK_-}
\ee
where 
\be
    \cR_N \bsTheta^+ \cR_N^{-1} = -\bsTheta^+. 
\label{bsTheta+_and_cR}
\ee    
The point is that, due to the appearance of the negative sign in the above 
equation, by applying Theorem \ref{THEOREM_Ruijsenaars} to the matrix flow
\be
    \bR \ni s
    \mapsto
    \cR_N \tilde{L}(p) \cR_N^{-1} e^{-2 s \sinh(\bsTheta^+(p))} \in GL(N, \bC)
\label{matrix_flow_OK_-}
\ee
we can infer the desired asymptotic results for $s \to -\infty$ as well. 
More precisely, by the methods of Subsections 
\ref{SUBSECTION_Applying_Ruijsenaars_thm} and \ref{SUBSECTION_Asymptotics}, 
for any $a \in \bN_n$ we can easily establish the asymptotics
\be
    \lambda_a \circ \Phi_s \sim s \sinh(\theta_a^-) + \lambda_a^-
    \midand
    \theta_a \circ \Phi_s \sim \theta_a^-
    \qquad
    (s \to - \infty),
\label{AS-}
\ee
where the asymptotic rapidities obey
\be
    \theta_a^- = - \theta_a^+ = - 2 \htheta_a,
\label{theta_-}
\ee 
whereas, in complete analogy with \eqref{lambda_+}, for the asymptotic 
positions we can write
\be
    \lambda_a^- 
    = \half \ln(m_a(\cR_N \tilde{L} \cR_N^{-1})).
\label{lambda_-}
\ee
As a matter of fact, mimicking the proof of Lemma \ref{LEMMA_lambda_+_OK} one 
finds immediately that
\be
    \lambda_a^- 
    = -\half \hlambda_a + \half \Delta_a(\htheta_1, \ldots, \htheta_n).
\label{lambda_-_OK}
\ee

Now, let us introduce the smooth manifolds
\be
    P^\pm 
    = \{ \zeta 
            = (\bsxi, \bseta) 
            = (\xi_1, \ldots, \xi_n, \eta_1, \ldots, \eta_n) \in \bR^N 
        \mid
        \eta_1 \gtrless \ldots \gtrless \eta_n \gtrless 0 \}
\label{P_pm}
\ee
endowed with the symplectic forms
\be 
    \omega^\pm = \sum_{a = 1}^n \dd x^\pm_a \wedge \dd y^\pm_a,
\label{omega_pm}
\ee 
where $x_a^\pm, y_a^\pm \in C^\infty(P^\pm)$ are convenient global coordinates 
on $P^\pm$ defined by the formulae
\be
    x_a^\pm(\zeta) = \xi_a 
    \midand
    y_a^\pm(\zeta) = \eta_a
    \qquad
    (a \in \bN_n, \, \zeta \in P^\pm).
\label{x_and_y_coords}
\ee
Utilizing the asymptotic positions and rapidities, at this point we define 
the wave maps
\be
    W_\pm \colon P \rightarrow P^\pm,
    \quad
    p 
    \mapsto 
    (\lambda_1^\pm(p), \ldots, \lambda_n^\pm(p), 
        \theta_1^\pm(p), \ldots, \theta_1^\pm(p)),
\label{W_pm}
\ee
that are of central interest in scattering theory.

\begin{THEOREM}
\label{THEOREM_wave_maps}
Both wave maps $W_\pm$ \eqref{W_pm} are symplectomorphisms. Moreover, the 
corresponding scattering map 
\be
    S = W_+ \circ W_-^{-1} \colon P^- \rightarrow P^+
\label{scattering_map}
\ee
is also a symplectomorphism of the form
\be
    S(\bsxi, \bseta) 
    = \left( 
        -\xi_1 + \Delta_1 (- \bseta / 2), 
        \ldots, 
        -\xi_n + \Delta_n (- \bseta / 2), 
        - \eta_1, \ldots, -\eta_n
    \right),
\label{scattering_formula}
\ee
where 
$(\bsxi, \bseta) = (\xi_1, \ldots, \xi_n, \eta_1, \ldots, \eta_n) \in P^-$.
\end{THEOREM}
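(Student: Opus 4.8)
The plan is to read off Theorem \ref{THEOREM_wave_maps} as a corollary of the structural results already established, in particular Theorem \ref{THEOREM_main_theorem}, Lemma \ref{LEMMA_asymptotic_variables_PBs}, Lemma \ref{LEMMA_lambda_+_OK}, the identity $\theta_a^+ = 2 \htheta_a$ from \eqref{theta_+}, and the $s \to - \infty$ counterparts \eqref{theta_-} and \eqref{lambda_-_OK}. First I would handle $W_+$. Since $\theta_a^+ = 2 \htheta_a$ and, by Lemma \ref{LEMMA_lambda_+_OK}, $\lambda_a^+ = \half \hlambda_a + \half \Delta_a(\htheta_1, \ldots, \htheta_n)$, the map $W_+$ \eqref{W_pm} is expressed through the components $\htheta_a$, $\hlambda_a$ of the diffeomorphism $\Psi$ \eqref{Psi} (Theorem \ref{THEOREM_Psi_is_a_diffeomorphism}) by an explicitly invertible smooth substitution; inspecting this substitution --- the chamber condition $\htheta_1 > \ldots > \htheta_n > 0$ becomes $\theta_1^+ > \ldots > \theta_n^+ > 0$, while for fixed $\htheta$ the $\lambda_a^+$ range over all of $\bR$ as the $\hlambda_a$ do --- shows that $W_+$ is a diffeomorphism from $P$ onto $P^+$ \eqref{P_pm}. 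Furthermore, Lemma \ref{LEMMA_asymptotic_variables_PBs} states that the global coordinate system $(\lambda_1^+, \ldots, \lambda_n^+, \theta_1^+, \ldots, \theta_n^+)$ obeys the canonical Poisson bracket relations, which, exactly as in the passage from \eqref{dual_variables_PBs} to \eqref{omega_and_dual_variables}, is equivalent to $\omega = \sum_{a = 1}^n \dd \lambda_a^+ \wedge \dd \theta_a^+$. Pulling back $\omega^+$ \eqref{omega_pm} then gives $W_+^* \omega^+ = \sum_{a = 1}^n \dd \lambda_a^+ \wedge \dd \theta_a^+ = \omega$, so $W_+$ is a symplectomorphism.

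The argument for $W_-$ runs along the same lines. Taking the $s \to - \infty$ asymptotics \eqref{AS-} --- obtained, as indicated in the text preceding the theorem, by conjugating the matrix flow \eqref{matrix_flow_OK} with $\cR_N$ \eqref{cR} so that the exponent changes sign (cf.\ \eqref{bsTheta+_and_cR}) and reapplying Theorem \ref{THEOREM_Ruijsenaars} together with the analogue of Lemma \ref{LEMMA_lambda_+_OK} --- one has $\theta_a^- = - 2 \htheta_a$ and $\lambda_a^- = - \half \hlambda_a + \half \Delta_a(\htheta_1, \ldots, \htheta_n)$. A short bilinear computation that uses the fundamental brackets \eqref{dual_variables_PBs}, the fact that $\Delta_a(\htheta_1, \ldots, \htheta_n)$ depends only on the pairwise Poisson-commuting functions $\htheta_1, \ldots, \htheta_n$, and the symmetry \eqref{Delta_symmetry}, then yields
\be
    \PB{\lambda_a^-}{\lambda_b^-} = 0,
    \quad
    \PB{\theta_a^-}{\theta_b^-} = 0,
    \quad
    \PB{\lambda_a^-}{\theta_b^-} = \delta_{a, b}
    \qquad
    (a, b \in \bN_n),
\ee
where the first identity is verbatim the computation carried out in \eqref{PB_OK3}. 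Just as for $W_+$, the explicit dependence on $(\htheta, \hlambda)$ together with $\Psi$ being onto shows that $W_-$ is a diffeomorphism from $P$ onto $P^-$ (now $\theta_1^- < \ldots < \theta_n^- < 0$), and the canonicity of $(\lambda^-, \theta^-)$ gives $W_-^* \omega^- = \omega$, so $W_-$ too is a symplectomorphism.

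Consequently $S = W_+ \circ W_-^{-1}$ is a composition of symplectomorphisms, hence a symplectomorphism from $P^-$ onto $P^+$. For the explicit form \eqref{scattering_formula} I would take $(\bsxi, \bseta) \in P^-$ and put $p = W_-^{-1}(\bsxi, \bseta)$, so that $\lambda_a^-(p) = \xi_a$ and $\theta_a^-(p) = \eta_a$; inverting the relations above gives $\htheta_a(p) = - \eta_a / 2$ and $\hlambda_a(p) = - 2 \xi_a + \Delta_a(- \bseta / 2)$, and substituting into $\theta_a^+(p) = 2 \htheta_a(p)$ and $\lambda_a^+(p) = \half \hlambda_a(p) + \half \Delta_a(\htheta_1(p), \ldots, \htheta_n(p))$ produces $\theta_a^+(p) = - \eta_a$ and $\lambda_a^+(p) = - \xi_a + \Delta_a(- \bseta / 2)$, which is precisely \eqref{scattering_formula}. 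I do not expect a genuine obstacle here: the analytic content has all been absorbed into Lemma \ref{LEMMA_asymptotic_variables_PBs} and Theorem \ref{THEOREM_main_theorem}. The only points needing a little care are checking that the asymptotic coordinates sweep out all of $P^\pm$ and that the wave maps are injective --- both immediate from the explicit formulas --- and spelling out the $s \to - \infty$ asymptotics, which merely repeats the $s \to + \infty$ analysis of Subsections \ref{SUBSECTION_Applying_Ruijsenaars_thm}--\ref{SUBSECTION_Asymptotics} after the $\cR_N$-conjugation.
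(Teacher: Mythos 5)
Your proposal is correct: all the ingredients you invoke (Lemma \ref{LEMMA_asymptotic_variables_PBs}, Theorem \ref{THEOREM_main_theorem}, Lemma \ref{LEMMA_lambda_+_OK}, the relations \eqref{theta_+}, \eqref{theta_-}, \eqref{lambda_-_OK}, and the symmetry \eqref{Delta_symmetry}) are available at this point in the paper, the bracket computations for $(\lambda^-,\theta^-)$ go through exactly as you indicate, and the substitution yielding \eqref{scattering_formula} is right. The route differs from the paper's mainly in packaging: the paper factorizes the wave maps as $W_\pm=\Upsilon_\pm\circ\Psi$, where $\Upsilon_\pm$ are the explicit coordinate changes \eqref{Upsilon_pm}; it proves $\Upsilon_\pm^*\omega^\pm=-\omega$ once and for all via the closed one-form \eqref{dd_cS} (a generating-function argument), so that symplecticity of $W_\pm$ follows by composing two anti-symplectomorphisms, and the scattering formula drops out immediately from $S=\Upsilon_+\circ\Upsilon_-^{-1}$, with no second Poisson-bracket computation needed for the minus variables. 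You instead verify the Darboux property of $(\lambda^\pm,\theta^\pm)$ directly --- for the plus sign this is literally Lemma \ref{LEMMA_asymptotic_variables_PBs}, so there $W_+$ does not even need Theorem \ref{THEOREM_main_theorem}, while for the minus sign you transport the dual brackets \eqref{dual_variables_PBs} through \eqref{theta_-} and \eqref{lambda_-_OK} --- and then translate canonical brackets into $\omega=\sum_a\dd\lambda_a^\pm\wedge\dd\theta_a^\pm$, exactly the equivalence the paper itself uses in passing from \eqref{dual_variables_PBs} to \eqref{omega_and_dual_variables}. Your approach is a bit more hands-on and makes explicit that the diffeomorphism property and surjectivity onto $P^\pm$ come from the explicit invertible substitutions; the paper's factorization is slightly slicker, makes the dynamics-independence of $S$ manifest, and avoids redoing the bracket algebra for the $s\to-\infty$ data.
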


\begin{proof}
Let us define the auxiliary maps $\Upsilon_\pm \colon P \rightarrow P^\pm$ by 
the formulae
\be
    \Upsilon_\pm(p) 
    = \left(
        \pm \half \eta_1 + \half \Delta_1(\bsxi), 
        \ldots, 
        \pm \half \eta_n + \half \Delta_n(\bsxi),
        \pm 2 \xi_1, \ldots, \pm 2 \xi_n
    \right),
\label{Upsilon_pm}
\ee
where 
$p = (\bsxi, \bseta) = (\xi_1, \ldots, \xi_n, \eta_1, \ldots, \eta_n) \in P$.
It is easy to see that both $\Upsilon_+$ and $\Upsilon_-$ are diffeomorphisms 
with inverses $\Upsilon_\pm^{-1} \colon P^\pm \rightarrow P$ given by 
\be
    \Upsilon_\pm^{-1}(\zeta) 
    = \left(
        \pm \half \eta_1, \ldots, \pm \half \eta_n,
        \pm 2 \xi_1 \mp \Delta_1(\pm \bseta / 2), 
        \ldots, 
        \pm 2 \xi_n \mp \Delta_n(\pm \bseta / 2)
    \right),
\label{Upsilon_pm_inv}
\ee
where 
$\zeta = (\bsxi, \bseta) = (\xi_1, \ldots, \xi_n, \eta_1, \ldots, \eta_n) 
\in P^\pm$. 

Next, recalling the coordinate functions displayed in 
\eqref{lambda_and_theta} and \eqref{x_and_y_coords}, from the definition 
\eqref{Upsilon_pm} it is clear that
\be
    \Upsilon_\pm^* x_a^\pm 
    = x_a^\pm \circ \Upsilon_\pm 
    = \pm \half \theta_a + \half \Delta_a(\lambda_1, \ldots, \lambda_n)
    \midand
    \Upsilon_\pm^* y_a^\pm 
    = y_a^\pm \circ \Upsilon_\pm
    = \pm 2 \lambda_a.
\label{Upsilon_pull_backs}
\ee
Thus, focusing first only on $\Upsilon_+$, the observation we made in 
\eqref{dd_cS} allows us to write that
\be
    \Upsilon_+^* \sum_{a = 1}^n x_a^+ \dd y_a^+
    = \sum_{a = 1}^n 
        \left(
            \theta_a + \Delta_a(\lambda_1, \ldots, \lambda_n)
        \right) \dd \lambda_a
    = \sum_{a = 1}^n \theta_a \dd \lambda_a 
        + \dd (\cS(\lambda_1, \ldots, \lambda_n)).
\label{B1}
\ee
Now, by taking the exterior derivative of the above equation, from 
the definitions \eqref{omega} and \eqref{omega_pm} it is clear that
\be
    \Upsilon_+^* \omega^+ 
    = \sum_{a = 1}^n \dd \theta_a \wedge \dd \lambda_a 
        + \dd^2 (\cS(\lambda_1, \ldots, \lambda_n))
    = - \omega.
\label{B2}
\ee
To put it simple, the diffeomorphism $\Upsilon_+$ is an 
anti-symplectomorphism. The same technique allows us to infer the 
relationship $\Upsilon_-^* \omega^- = -\omega$, too. We mention in 
passing that, on account of \eqref{B1}, the composition 
$\cS(\lambda_1, \ldots, \lambda_n) \in C^\infty(P)$ can be seen as 
a generating function of $\Upsilon_+$.

Now, looking back to the definitions \eqref{Psi}, \eqref{W_pm} and 
\eqref{Upsilon_pm}, our key observation is that the wave maps can be 
realized as the compositions
\be
    W_\pm = \Upsilon_\pm \circ \Psi.
\label{W_and_Upsilon_and_Psi}
\ee
Therefore, by invoking Theorem \ref{THEOREM_main_theorem}, we conclude at
once that both $W_+$ and $W_-$ are symplectomorphisms. As concerns the 
scattering map \eqref{scattering_map}, from \eqref{W_and_Upsilon_and_Psi} 
it is evident that
\be
    S = \Upsilon_+ \circ \Upsilon_-^{-1}.
\label{S_map_and_Upsilon} 
\ee
Thus, by exploiting the explicit expressions appearing in \eqref{Upsilon_pm} 
and \eqref{Upsilon_pm_inv}, the formula \eqref{scattering_formula} is also 
immediate.
\end{proof}

\section{Discussion}
\label{SECTION_Discussion}
Now, we are in a position to harvest some interesting consequences of our
results. Take an arbitrary smooth function
\be
    \chi \colon \cP \rightarrow \bR,
    \quad
    Y \mapsto \chi(Y)
\label{chi}
\ee
defined on the symmetric space $\cP$ \eqref{cP}, and suppose that it is 
\emph{invariant} under the action of conjugations by the elements of the 
compact subgroup $K$ \eqref{K}; that is,
\be
    \chi(k Y k^{-1}) = \chi(Y)
    \qquad
    (Y \in \cP, \, k \in K).
\label{chi_invariant}
\ee
Recalling the Lax matrix $L$ \eqref{L} and Lemma \ref{LEMMA_hy}, it is 
immediate that for the function
\be
    H^g_\chi = \chi(L^g) \in C^\infty(P)
\label{H_chi}
\ee
we can write that
\be
    H^g_\chi 
    = \chi \big( \hy^g e^{2 \hbsTheta^g} (\hy^g)^{-1} \big) 
    = \chi \big( e^{2 \hbsTheta^g} \big).
\label{H_chi_and_htheta}
\ee
In other words, $H^g_\chi$ \eqref{H_chi} can be seen as a function of the 
coordinates $\htheta^g_1, \ldots, \htheta^g_n$. As a consequence, the members
of the global Darboux system featuring in Theorem \ref{THEOREM_main_theorem} 
provide \emph{action-angle variables} for the Hamiltonian system 
$(P, \omega, H^g_\chi)$. Notice that the van Diejen type model \eqref{H} 
belongs to this distinguished family of Hamiltonian systems. Indeed, due
to the invariance of the trace functional, the relationship \eqref{H_and_L}
entails that
\be
    H^g 
    = \half \tr(L^g) 
    = \half \tr(e^{2 \hbsTheta^g})
    = \half \sum_{j = 1}^N e^{2 \hTheta^g_j}
    = \sum_{a = 1}^n \cosh(2 \htheta^g_a).
\label{H_and_htheta}
\ee 
From a practical point of view, the Hamiltonian $H^g$ \eqref{H} is singled 
out only by the fact that it has a relatively simple form. 

In the theory of integrable Hamiltonian systems one of the principal goals 
is the construction of action-angle variables. Due to the rather explicit 
descriptions of the variables $\htheta^g_a$ \eqref{L_spec} and $\hlambda^g_a$ 
\eqref{hlambda}, in this paper this task is completely accomplished for
the family of systems $(P, \omega, H^g_\chi)$, including the van Diejen 
model $H^g$ \eqref{H}. However, as discovered by Ruijsenaars 
\cite{Ruij_CMP1988, Ruij_RIMS_2, Ruij_RIMS_3}, it is a remarkable feature 
of the CMS and the RSvD systems that they can be arranged into pairs based 
on their action-angle maps. Now we are in a position to reveal the duality 
property of the $2$-parameter family of van Diejen type systems \eqref{H}, 
too. For this reason, with the aid of the dual Lax matrix $\hL$ \eqref{hL}, 
for each $K$-invariant smooth function $\chi$ \eqref{chi} we introduce the 
`dual Hamiltonian'
\be
    \hH^g_\chi = \chi(\hL^g) \in C^\infty(P).
\label{hH_chi}
\ee
Notice that
\be
    \hH^g_\chi = \chi \big( (y^g)^{-1} e^{2 \bsLambda} y^g \big) 
    = \chi \big( e^{2 \bsLambda} \big),
\label{hH_and_bsLambda}
\ee
whence the original canonical coordinates \eqref{lambda_and_theta} provide 
action-angle variables for $(P, \omega, \hH^g_\chi)$. Moreover, due to the 
relationship \eqref{L_and_hL}, we can write that
\be
    \hH^g_\chi = H^{\hg}_\chi \circ \Psi^g.
\label{H_and_hH}
\ee
The upshot of this observation is that, up to the anti-symplectomorphism 
$\Psi^g$ \eqref{Psi}, the Hamiltonian systems $(P, \omega, \hH^g_\chi)$ and 
$(P, \omega, H^{\hg}_\chi)$ can be identified. We also see that, at the level 
of the parameter space, the systems in duality are related by the involution 
\eqref{tmfM_involution}. Since under this involution the Hamiltonian \eqref{H} 
transforms into itself, i.e., $H^{\hg} = H^g$, we may say that the van Diejen 
systems of our interest are \emph{self-dual}.

In order to uncover this new case of duality, in this paper we adapted 
Ruijsenaars' ideas \cite{Ruij_CMP1988} to the geometric picture introduced 
in \cite{Pusztai_Gorbe}. Indeed, the construction of the functions $\htheta_a$ 
\eqref{htheta} and $\hlambda_a$ \eqref{hlambda} is built upon the careful 
analysis of the Ruijsenaars type commutation relation \eqref{commut_rel}. 
However, to prove their smoothness and canonicity, we departed from the 
complex analytic approach advocated by \cite{Ruij_CMP1988}. We believe that 
our method presented in Section \ref{SECTION_Scattering_theory_and_duality} 
for proving the canonicity is a bit more general in the sense that, once 
the temporal asymptotics with the uniformity assertion is established as in 
Lemma \ref{LEMMA_asymptotics_1}, this technique may be applied to a much wider 
class of Hamiltonian systems describing repulsive particles, even under weaker 
smoothness conditions. Of course, for real-analytic Hamiltonians, the question 
of the real-analyticity of the pertinent objects would require further 
analysis. 

Turning to the scattering properties of the $2$-parameter family of 
hyperbolic van Diejen systems \eqref{H}, from Theorem \ref{THEOREM_wave_maps} 
we see that, up to an overall sign, the asymptotic rapidities are preserved. 
Moreover, from \eqref{Delta_c} it is also clear that the classical phase 
shifts are completely determined by the $1$-particle and the $2$-particle 
scattering processes. In other words, the scattering map 
\eqref{scattering_formula} has a \emph{factorized form}. Note that this 
peculiar feature seems to be characteristic to the CMS and the RSvD type 
many-particle systems. Indeed, for the models associated with the $A$-type 
root systems it is known for a long time (see e.g. \cite{Kulish_1976, 
Moser_1977, Ruij_CMP1988}), and recently it has been proved for the 
hyperbolic $BC_n$ Sutherland and the rational $BC_n$ RSvD models, too 
\cite{Pusztai_JPA2011, Pusztai_NPB2013}. Thus, following 
\cite{Ruij_FiniteDimSolitonSystems}, the content of Theorem 
\ref{THEOREM_wave_maps} can be rephrased by saying that the hyperbolic 
van Diejen system \eqref{H} is also a $BC$-type 
\emph{finite dimensional pure soliton system}.

To conclude the paper, let us recall that in \cite{Pusztai_Gorbe} we 
conjectured a Lax matrix for a bit more general $3$-parameter family of 
hyperbolic van Diejen systems (see equations (6.3-7) in \cite{Pusztai_Gorbe}). 
Therefore, by generalizing the arguments of the present paper, it would be 
the natural next step to construct action-angle variables for these systems, 
too. We expect that the members of this $3$-parameter family are also 
self-dual with factorized scattering maps.

\medskip
\noindent
\textbf{Acknowledgments.}
We are grateful to L.~Feh\'er for useful comments on the manuscript.
The work of B.G.P. was supported by the J\'anos Bolyai Research Scholarship 
of the Hungarian Academy of Sciences, by the Hungarian Scientific Research 
Fund (OTKA grant K116505), and also by a Lend\"ulet Grant; he wishes to thank 
Z.~Bajnok for hospitality in the MTA Lend\"ulet Holographic QFT Group.

\appendix
\numberwithin{THEOREM}{section}

\newcommand{\de}{\delta}
\newcommand{\C}{{\mathbb C}} 
\newcommand{\R}{{\mathbb R}}
\newcommand{\re}{{\rm Re}\, }

\section{Spectral asymptotics revisited (by Simon Ruijsenaars)}
\label{AppA}

The results encoded in Theorem~A2 of Ref.~\cite{Ruij_CMP1988} concerning spectral asymptotics play a crucial role in G\'abor's work on the van Diejen systems. My proof in Ref.~\cite{Ruij_CMP1988} contains elaborate computations involving resolvents and nested Neumann expansions. By contrast, in this appendix the essence of Theorem~A2 is recovered by exploiting the holomorphic implicit function theorem. The present method gives rise to a novel proof that is not only shorter and simpler, but also yields more explicit information. At the end of the appendix we also reconsider Theorem~A1 of Ref.~\cite{Ruij_CMP1988} along the same lines. Hence the counterparts of these older theorems are the present Theorems~\ref{THEOREM_THM_A1} and~\ref{THEOREM_THM_A2}, resp.
 
The appendix is concerned with two families of complex $N\times N$ matrices. The first one consists of diagonal matrices
\be\label{cD}
\cD \equiv \{ D=\diag (d_1,\ldots,d_N)\mid \re (d_N)<\cdots <\re (d_1)\}.
\ee
We shall use the notation
\be\label{SSSSS1}
\mu_j\equiv \re (d_j-d_{j+1}),\ \ \ j=1,\ldots,N-1,
\ee
\be\label{defR}
R\equiv \min (\mu_1,\ldots,\mu_{N-1}).
\ee

The second family consists of matrices $M$ that satisfy a restriction on their principal minors for our first theorem. Letting $1\le i_1<\cdots <i_j\le N$, $j=1,\ldots,N$, we denote the $j\times j$ minor involving these indices by $M(i_1,\ldots,i_j)$. Also, for the special case $i_k=k$, $k=1,\ldots, j$, we denote the corresponding principal minor by $\pi_j$. Thus we have in particular
\be\label{SSSSS2}
\pi_1 =M_{11},\ \ \ \pi_2=M_{11}M_{22}-M_{12}M_{21},\ \  \ \pi_N=|M|,
\ee
with $|M|$ the determinant of~$M$. The second family is now given by
\be\label{cM}
\cM\equiv \{ M\in \C^{N\times N}\mid \pi_j\ne 0,\ \ j=1,\ldots,N\}.
\ee
As a consequence, for $M\in\cM$ we may introduce nonzero complex numbers
\be\label{mj}
m_1\equiv M_{11},\ \ \ m_j\equiv \pi_j/\pi_{j-1},\ \ \ j=2,\ldots,N.
\ee

Our goal is now to elucidate the spectral asymptotics for matrices of the form
\be\label{SSSSS3}
E(t)\equiv M\exp(tD),\ \ \ M\in\cM,\ \ D\in\cD,\ \ t\in\R,
\ee
as $t\to\infty$. As will transpire, the spectrum is simple for $t$ sufficiently large, and the dominant asymptotics of the eigenvalues $\lambda_1(t),\ldots,\lambda_N(t)$   is given by
\be\label{domas}
\lambda_j(t)\sim m_j\exp(td_j),\ \ \ j=1,\ldots,N,\ \ \ t\to\infty.
\ee
(Note this yields an ordering $|\lambda_N(t)|<\cdots <|\lambda_1(t)|$ for~$t$ large,  due to the $d$-restriction in~\eqref{cD} and $m_j$ being nonzero. Note also that~\eqref{domas} is plain for a diagonal $M\in\cM$.)

For applications, however, it is crucial to improve considerably on the dominant asymptotics~\eqref{domas}. For the $N=2$ case this is readily done, since the eigenvalues can be calculated explicitly. However, this direct calculation yields no clue how to proceed for arbitrary~$N$. 

For a better understanding of the method followed for general~$N$, we begin by detailing it for the $N=2$ case. This serves to exemplify all steps of the flow chart without the inevitable notational clutter associated with the general case.

First, we define quantities~$c_j(t)$, $j=1,2$, by setting
\be\label{cj}
\lambda_j(t)=c_j(t)m_j\exp(td_j),
\ee
and note that the eigenvalues solve the system
\be\label{SSSSS4}
\lambda_1+\lambda_2= M_{11}\exp(td_1)+M_{22}\exp(td_2),\ \ \ \ \  \ \  \lambda_1 \lambda_2= |M| \exp(td_1+td_2).
\ee
Introducing
\be\label{SSSSS5}
\epsilon \equiv \exp(td_2-td_1),
\ee
this can be rewritten as
\be\label{Fjsys}
F_j(\epsilon;c_1,c_2)=0,\ \ \ j=1,2,
\ee
with
\be\label{F12}
F_1\equiv c_1+c_2 \frac{|M|}{M_{11}^2}\epsilon-1-\frac{M_{22}}{M_{11}}\epsilon,
\ \ \ \ \ \ 
F_2\equiv c_1c_2-1.
\ee

It is plain that this system has a solution
\be\label{SSSSS6}
F_j(0;1,1)=0,\ \ \ j=1,2.
\ee
Moreover, the matrix
\be\label{SSSSS7}
D_cF(0;1,1)=\left( 
\begin{array}{cc}
1 & 0 \\
1 & 1 \\
\end{array}
\right),
\ee
is regular and $F$ is entire in $\epsilon,c_1$ and $c_2$. Therefore, we may invoke the holomorphic implicit function theorem.
This theorem implies that for $|\epsilon|<a$ with~$a$ sufficiently small, there exists a unique holomorphic solution~$c=s(\epsilon)$ with the properties
\be\label{sprop}
s(0)=(1,1),\ \ |s_j(\epsilon)-1|\le 1/2,\ \ j=1,2,\ \ |\epsilon|<a.
\ee
 
Defining $T_0$ by
\be\label{SSSSS8}
\exp(-T_0\mu_1)= a/2,
\ee
we now put
\be\label{cjsj}
c_j(t)\equiv s_j(\exp( td_2-td_1)),\ \ \ t\ge T_0,\ \ \ j=1,2.
\ee
By~\eqref{sprop}, this yields
\be\label{SSSSS9} 
|c_j(t)|\in [1/2,3/2],\ \ j=1,2,\ \ t\ge T_0.
\ee
Choosing next $T_1\ge T_0$ such that
\be\label{SSSSS10}
|M_{11}|\exp(T_1\re d_1)>3||M|/M_{11}|\exp(T_1\re d_2),
\ee
we get from~\eqref{cj} eigenvalues satisfying
\be\label{SSSSS11}
|\lambda_2(t)|<|\lambda_1(t)|,\ \ \ \forall t\in [T_1,\infty).
\ee
In particular, it follows that $\sigma(E(t))$ is simple for all $t\ge T_1$.

To proceed, we note that by holomorphy of the functions $s_j(z)$ for $|z|<a$, we can find $\de\in(0,a/2]$ such that
\be\label{SSSSS12}
\sup_{|z|\le\de} |s'_j(z)|\le |s'_j(0)|+\eta,\ \ \ j=1,2,
\ee
with $\eta$ an arbitrary fixed positive number.
Defining $T_2$ by
\be\label{SSSSS13}
\exp(-T_2\mu_1)=\de,
\ee
we now put
\be\label{SSSSS14}
T_E\equiv \max(T_1,T_2).
\ee
Then we conclude from~\eqref{cjsj} that for all $t\ge T_E$ we have
\be\label{SSSSS15}
|c_j(t)-1|\le \exp (-t\mu_1)(|s'_j(0)|+\eta),
\ee
\be\label{SSSSS16}
|\dot{c}_j(t)|\le \exp (-t\mu_1)|d_2-d_1|(|s'_j(0)|+\eta).
\ee

We can make these bounds more explicit by calculating $s'_j(0)$ from the system~\eqref{Fjsys}. Indeed, using
\be\label{SSSSS17}
s_j(\epsilon)=1 +\epsilon v_j +O(\epsilon^2),\ \ \ \epsilon\to 0,
\ee
we deduce from~\eqref{F12} that $v_2$ equals $-v_1$ and  that we have
\be\label{v1}
v_1=\frac{1}{M_{11}}\Big( M_{22}-\frac{|M|}{M_{11}}\Big).
\ee
 
The upshot of our reasoning is that for all $t\ge T_E$ the spectrum of $E(t)$ is simple, with eigenvalues of the form
\be\label{SSSSS18}
\lambda_1(t)=M_{11}\exp(td_1)(1+\rho_1(t)),
 \ \ \ \ \ \lambda_2(t)=\frac{|M|}{M_{11}}\exp(td_2)(1+\rho_2(t)).
\ee
Here, the remainder terms are majorized by
\be\label{SSSSS19}
|\rho_j(t)|\le \exp(-t\mu_1)(|v_1|+\eta),\ \ \ j=1,2,
\ee
\be\label{SSSSS20}
|\dot{\rho}_j(t)|\le \exp(-t\mu_1)|d_2-d_1|(|v_1|+\eta),\ \ \ j=1,2,
\ee
with $v_1$ given by~\eqref{v1}. (Note that $v_1$ vanishes when $M$ is upper or lower triangular, as it should.)

We are now prepared for the general case.
\begin{THEOREM}
\label{THEOREM_THM_A1}
There exists $T_E\in\R$ such that the $N\times N$ matrix
\be\label{SSSSS21}
E(t)=M\exp(tD),\ \ M\in\cM,\ \ D\in\cD,\ \ t\in\R,
\ee
has nondegenerate eigenvalues $\lambda_1(t),\ldots,\lambda_N(t)$ satisfying
\be\label{lamord}
|\lambda_N(t)|<\cdots <|\lambda_1(t)|,\ \ \ \forall t\ge T_E.
\ee
They are of the form
\be\label{lamfo}
\lambda_j(t)=m_j\exp(td_j)[1+\rho_j(t)],\ \ \ t\ge T_E.
\ee
Here, the remainder functions are real-analytic on~$(T_E,\infty)$ and  satisfy
\be\label{SSSSS22}
|\rho_1(t)|\le  \exp(-t\mu_{1})q_1, 
\ee
\be\label{SSSSS23}
|\rho_j(t)|\le \exp(-t\mu_{j-1})q_{j-1} +\exp(-t\mu_{j})q_{j},\ \ \ j=2,\ldots,N-1,
\ee
\be\label{SSSSS24}
|\rho_N(t)|\le  \exp(-t\mu_{N-1})q_{N-1}, 
\ee
while their time derivatives satisfy
\be\label{SSSSS25}
|\dot{\rho}_1(t)|\le  \exp(-t\mu_{1})|d_{2}-d_{1}|q_{1}, 
\ee
\be\label{SSSSS26}
|\dot{\rho}_j(t)|\le \exp(-t\mu_{j-1})|d_{j}-d_{j-1}|q_{j-1} +\exp(-t\mu_{j})|d_{j+1}-d_{j}|q_{j},\ j=2,\ldots,N-1,
\ee
\be\label{rdN}
|\dot{\rho}_N(t)|\le  \exp(-t\mu_{N-1})|d_{N}-d_{N-1}|q_{N-1}. 
\ee
With $\eta$ an arbitrary fixed positive number, the $q_j$'s are given by
\be\label{defqj}
q_j\equiv |p_j| +\eta,\ \ \ \ j=1,\ldots,N-1,
\ee
where
\be\label{defp1}
p_1\equiv \frac{1}{M_{11}}\Big( M_{22}-\frac{M(1,2)}{M_{11}}\Big),
\ee
and
\be\label{defpj}
p_j\equiv \frac{M(1,\ldots,j-1,j+1)}{M(1,\ldots,j)}-\frac{m_{j+1}}{m_j},\ \ \ j=2,\ldots,N-1.
\ee
Furthermore, $T_E$ can be chosen uniformly for $(M,D)$ varying over an arbitrary compact subset of $\cM\times \cD$. 
\end{THEOREM}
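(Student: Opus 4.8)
The plan is to follow, for general $N$, exactly the flow chart carried out above in the case $N=2$: reduce the eigenvalue problem for $E(t)$ to an implicit-function problem that is manifestly nonsingular at $t=\infty$, and then convert the holomorphic dependence of the solution into the decay estimates \eqref{SSSSS22}--\eqref{rdN}. The eigenvalues $\lambda_1(t),\ldots,\lambda_N(t)$ of $E(t)=M\exp(tD)$ are the roots of the characteristic polynomial, hence are characterized by the $N$ relations
\be
e_j\big(\lambda_1(t),\ldots,\lambda_N(t)\big)
= \sum_{1\le i_1<\cdots<i_j\le N} M(i_1,\ldots,i_j)\,\exp\!\big(t(d_{i_1}+\cdots+d_{i_j})\big),
\qquad j=1,\ldots,N,
\ee
where $e_j$ is the $j$-th elementary symmetric polynomial and the $(i_1,\ldots,i_j)$-principal minor of $M\exp(tD)$ equals $M(i_1,\ldots,i_j)\exp(t(d_{i_1}+\cdots+d_{i_j}))$. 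Writing $\lambda_j(t)=c_j(t)m_j\exp(td_j)$, introducing $\epsilon_k\equiv\exp(t(d_{k+1}-d_k))$ for $k=1,\ldots,N-1$ — which tend to $0$ as $t\to\infty$ by the definition \eqref{cD} of $\cD$, with $\exp(t(d_b-d_a))=\epsilon_a\epsilon_{a+1}\cdots\epsilon_{b-1}$ for $a<b$ — and dividing the $j$-th relation by the dominant term $\pi_j\exp(t(d_1+\cdots+d_j))=m_1\cdots m_j\exp(t(d_1+\cdots+d_j))$, one recasts the system as $F_j(\epsilon_1,\ldots,\epsilon_{N-1};c_1,\ldots,c_N)=0$, $j=1,\ldots,N$, where each $F_j$ is a polynomial of the form $c_1\cdots c_j-1$ plus terms each carrying at least one factor $\epsilon_k$; in particular $F_N=c_1\cdots c_N-1$ identically, since $\det E(t)=|M|\exp(t\,\tr D)$ and $m_1\cdots m_N=\pi_N=|M|$.

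\emph{Implicit function theorem and simplicity.} Since $F(0;\bsone)=0$ and $D_cF(0;\bsone)$ is the lower triangular matrix with all entries on and below the diagonal equal to $1$ — the only contribution at $\epsilon=0$ being $\partial(c_1\cdots c_j)/\partial c_k=1$ for $k\le j$ — this Jacobian is invertible, and $F$ is entire; the holomorphic implicit function theorem then supplies a unique holomorphic $c=s(\epsilon)$ on a polydisc $\{|\epsilon_k|<a\}$ with $s(0)=\bsone$, where, after shrinking $a$, $|s_j(\epsilon)-1|\le 1/2$. Choosing $T_0$ so that $\exp(-T_0\mu_k)<a$ for all $k$ and setting $c_j(t)\equiv s_j(\epsilon_1(t),\ldots,\epsilon_{N-1}(t))$, $\lambda_j(t)\equiv c_j(t)m_j\exp(td_j)$ for $t\ge T_0$, these $N$ numbers satisfy all the relations above and are therefore the eigenvalues of $E(t)$. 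As $|c_j(t)|\in[1/2,3/2]$ while the moduli $|m_j\exp(td_j)|$ are strictly decreasing in $j$ for $t$ large, one finds $T_1\ge T_0$ beyond which \eqref{lamord} holds; hence the spectrum is simple on $(T_1,\infty)$, and the labeling $\lambda_1,\ldots,\lambda_N$ — being the composition of the real-analytic $s_j$ with the real-analytic $t\mapsto(\epsilon_1(t),\ldots,\epsilon_{N-1}(t))$ — is real-analytic there.

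\emph{Estimates and uniformity.} Put $\rho_j(t)=c_j(t)-1$. On a smaller polydisc $\{|\epsilon_k|\le\delta\}$ holomorphy gives $|\partial s_j/\partial\epsilon_k|\le|\partial s_j/\partial\epsilon_k(0)|+\eta$ for the prescribed $\eta>0$. Solving the triangular system $F=0$ to first order — i.e. obtaining $c_1\cdots c_j=1+p_j\epsilon_j+O(|\epsilon|^2)$ with $p_j$ as in \eqref{defp1}--\eqref{defpj} and taking successive quotients — shows that $\partial s_j/\partial\epsilon_k(0)=0$ unless $k\in\{j-1,j\}$, the surviving values being $p_j$ for $k=j$ and $-p_{j-1}$ for $k=j-1$ (for $N=2$ this is the computation $v_1=p_1$, $v_2=-p_1$ recorded above). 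Since $|\epsilon_k(t)|=\exp(-t\mu_k)$ and $\dot\epsilon_k(t)=(d_{k+1}-d_k)\epsilon_k(t)$, integrating $\nabla s_j$ along the segment from $0$ to $\epsilon(t)$ and differentiating in $t$ yield \eqref{SSSSS22}--\eqref{rdN} with the $q_j$ given by \eqref{defqj}, valid for $t\ge T_E\equiv\max(T_1,T_2)$ where $\exp(-T_2\mu_k)\le\delta$. Finally, all the ingredients — the radii $a,\delta$, the gaps $\mu_k$ (bounded below on compacta of $\cD$), the numbers $m_j$, and the coefficients $p_j$ — depend continuously on $(M,D)$, and the pertinent bounds hold on full neighborhoods; covering a compact $\cK\subset\cM\times\cD$ by finitely many such neighborhoods and taking the largest associated $T_E$ gives a value uniform over $\cK$.

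\emph{Main obstacle.} The conceptual crux is the reduction in the first paragraph: verifying that dividing out the dominant exponential turns the $N$ eigenvalue relations into an \emph{entire} system with the clean triangular structure $F_j=c_1\cdots c_j-1+(\text{terms with an }\epsilon)$, so that the holomorphic implicit function theorem applies at $t=\infty$ with no further input. The most laborious part is the first-order bookkeeping that identifies $\partial s_j/\partial\epsilon_k(0)$ with the explicit minors $p_{j-1},p_j$ and confirms that only the neighbouring parameters $\epsilon_{j-1},\epsilon_j$ enter at leading order — which is precisely what pins down the refined bounds \eqref{SSSSS22}--\eqref{rdN}; everything else is routine.
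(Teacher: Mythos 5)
Your proposal reproduces the paper's own argument (Ruijsenaars' proof in Appendix A) essentially step for step: the same substitution $\lambda_j=c_j m_j e^{td_j}$ and variables $\epsilon_k=e^{t(d_{k+1}-d_k)}$, the same entire system $F_j=c_1\cdots c_j-1+O(\epsilon)$ with lower-triangular unit Jacobian at $(\epsilon,c)=(0,\boldsymbol{1})$, the same use of the holomorphic implicit function theorem, the same first-order identification $\partial_k s_j(0)\in\{p_j,-p_{j-1},0\}$ yielding the stated bounds, and the same compactness/polydisc argument for uniformity of $T_E$. It is correct and takes the same route as the paper.
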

\begin{proof}
We define $c_1,\ldots,c_N$ by
\be\label{csjN}
\lambda_j=c_jm_j\exp(td_j),\ \ \ j=1,\ldots,N,
\ee
where $\lambda_1,\ldots,\lambda_N$ are the solutions to $|E-\lambda {\bf 1}_N|=0$. Thus the latter solve the system
\be\label{SSSSS27}
\sum_{i_1<\cdots < i_j}\lambda_{i_1}\cdots \lambda_{i_j}=\sum_{i_1<\cdots < i_j}M(i_1,\ldots,i_j)\exp(td_{i_1}+\cdots +td_{i_j}),\ \ \ j=1,\ldots,N.
\ee
Introducing
\be\label{epsj}
\epsilon_j \equiv \exp(td_{j+1}-td_j),\ \ \ j=1,\ldots,N-1,
\ee
this can be rewritten
\be\label{FjsysN}
F_j(\epsilon_1,\ldots,\epsilon_{N-1};c_1,\ldots,c_N)=0,\ \ \ j=1,\ldots,N,
\ee
where
\begin{multline}\label{defFj}
F_j \equiv \sum_{i_1<\cdots <i_j}\Big( \frac{m_{i_1}\cdots m_{i_j}}{m_1\cdots m_j}c_{i_1}\cdots c_{i_j}-\frac{M(i_1,\ldots,  i_j)}{\pi_j}\Big)
\prod_{k_1=1}^{i_1-1} \epsilon_{k_1}\cdots \prod_{k_j=j}^{i_j-1}\epsilon_{k_j},\ \ \ j=1,\ldots,N.
\end{multline}

Clearly, this system has a solution
\be\label{SSSSS28}
F_j(0,\ldots,0;1,\ldots,1)=0,\ \ \ j=1,\ldots,N,
\ee
and it is not hard to verify
\be\label{SSSSS29}
(D_cF)(0,\ldots,0;1,\ldots,1)_{ij}=\left\{ 
\begin{array}{cc}
 0,  &  i<j, \\
 1, & i\ge j.\\
 \end{array}
 \right.
 \ee
 Since this yields a regular matrix  and $F$ is entire in 
 \be\label{SSSSS30}
 \epsilon\equiv (\epsilon_1,\ldots,\epsilon_{N-1}),\ \ \  c\equiv (c_1,\ldots,c_N),
 \ee
   the holomorphic implicit function theorem can be invoked.
It entails that for $\epsilon$ in the polydisc 
\be\label{polyd}
|\epsilon_1|,\ldots,|\epsilon_{N-1}|<a,
\ee
 with~$a$ small enough, we get a unique holomorphic solution~$c=s(\epsilon)$ fulfilling
\be\label{spropN}
s_j(0)=1,\ \ |s_j(\epsilon)-1|\le 1/2,\ \ j=1,\ldots,N.
\ee
 
We now choose $T_0$ satisfying
\be\label{T0}
\exp(-T_0 R)= a/2,
\ee
with $R$ given by~\eqref{defR}.
Defining
\be\label{cjsjN}
c_j(t)\equiv s_j(\exp( td_2-td_1),\ldots,\exp( td_N-td_{N-1})),\ \ \ t\ge T_0,\ \ \ j=1,\ldots,N,
\ee
we then deduce from~\eqref{spropN} that we have
\be\label{cjbo} 
|c_j(t)|\in [1/2,3/2],\ \ j=1,\ldots,N, \ \ \ t\ge T_0.
\ee

Next, we choose $T_1\ge T_0$ such that
\be\label{mjbo}
|m_j|\exp(T_1\mu_j)>3|m_{j+1}|,\ \ \ j=1,\ldots,N-1.
\ee
Then we get from~\eqref{csjN} the ordering~\eqref{lamord}, which implies that $\sigma(E(t))$ is simple for all $t\ge T_1$.

We now observe that since  the functions $s_j(\epsilon)$ are holomorphic in the polydisc~\eqref{polyd}, for an arbitrary $\eta>0$ we can find $\de\in(0,a/2]$ with
\be\label{sjsup}
\sup_{|z_1|,\ldots,|z_{N-1}|\le\de} |\partial_ks_j(z)|\le |\partial_ks_j(0)|+\eta,\ \ k=1,\ldots,N-1,\  \ j=1,\ldots,N.
\ee
Using
\begin{multline}\label{SSSSS100}
s_j(z_1,\ldots,z_{N-1})-s_j(0)=\int_0^{z_1}\partial_1 s_j(w_1,0,\ldots,0)dw_1 \\
+\int_0^{z_2}\partial_2 s_j(z_1,w_2,0,\ldots,0)dw_2 +\cdots + \int_0^{z_{N-1}}\partial_{N-1} s_j(z_1,\ldots,z_{N-2},w_{N-1})dw_{N-1},
\end{multline}
we infer from this that we have bounds
\be\label{sjbou}
|s_j(z)-1|\le \sum_{k=1}^{N-1}|z_k|(|\partial_k s_j(0)|+\eta),\ \ \ |z_1|,\ldots,|z_{N-1}|\le \de,\ \ j=1,\ldots,N.
\ee

Defining now $T_2$ by
\be\label{SSSSS40}
\exp(-T_2 R)=\de,
\ee
we set
\be\label{TE}
T_E\equiv \max(T_1,T_2).
\ee
Then we see from~\eqref{cjsjN} that for all $t\ge T_E$ we have
\be\label{cjN}
|c_j(t)-1|\le \sum_{k=1}^{N-1}\exp (-t\mu_k)(|\partial_k s_j(0)|+\eta),\ \ j=1,\ldots,N,
\ee
\be\label{cjNd}
|\dot{c}_j(t)|\le \sum_{k=1}^{N-1}|d_{k+1}-d_k|\exp (-t\mu_k)(|\partial_k s_j(0)|+\eta),\ \ \ \ j=1,\ldots,N.
\ee

We proceed to calculate the partials $\partial_ks_j(0)$ from the system~\eqref{FjsysN}. To this end we write
\be\label{SSSSS41}
s_j(\epsilon)=1 +\sum_{k=1}^{N-1}v_{jk}\epsilon_k +\mathrm{h.~o.},\ \ \ v_{jk}\equiv \partial_ks_j(0),
\ee
where h.~o.~denotes terms of higher order in the power series expansion around $\epsilon=0$. Substituting this for $c_j$ in~\eqref{FjsysN}, we collect all terms linear in $\epsilon_1,\ldots,\epsilon_{N-1}$. Defining $p_j$ by~\eqref{defpj}, the resulting linear system can then be written as
\be\label{SSSSS42}
\sum_{k=1}^{N-1}\sum_{i=1}^{j}v_{ik}\epsilon_k-p_j\epsilon_j=0, \ j=1,\ldots,N-1,
\ \ \  \ \ \ \ \sum_{k=1}^{N-1}\sum_{i=1}^{N}v_{ik}\epsilon_k=0.
\ee

Now by holomorphy the coefficients of $\epsilon_1,\ldots,\epsilon_{N-1}$ must vanish. From this we conclude first that for $j=1$ the partials $ \partial_ks_j(0)$ are given by
\be\label{SSSSS43}
v_{1k}=\left\{ 
\begin{array}{ll}
 p_1,  & \ \  k=1, \\
 0, & \ \ k>1,\\
 \end{array}
 \right.
 \ee
 and then recursively for $j=2,\ldots,N-1$, by
\be\label{SSSSS44}
v_{jk}=\left\{ 
\begin{array}{ll}
 -p_{j-1},  & \ \ \  k=j-1, \\
 p_j, & \ \ \ k=j,\\
 0, & \ \ \ \mathrm{otherwise},\\
 \end{array}
 \right.
 \ee
 while for $j=N$ we finally get
 \be\label{SSSSS45}
 v_{Nk}=\left\{ 
\begin{array}{ll}
 -p_{N-1},  & \ \  k=N-1, \\
 0, & \ \ k<N-1.\\
 \end{array}
 \right.
 \ee
 Substituting this in \eqref{cjN}--\eqref{cjNd}, we arrive at~\eqref{lamfo}--\eqref{rdN}. As a consequence, it remains to prove the uniformity claim.
 
To this end we begin by observing that the functions $F_j$ defined by~\eqref{defFj} are not only entire functions of $\epsilon$ and $c$, but also holomorphic functions of the matrix elements $M_{ij}$ on the open subset $\cM$ of $\C^{N^2}$.  Moreover, recalling~\eqref{epsj}, we see that they are entire in~$t$ and in~$d_1,\ldots,d_N$. We now fix $M_0\in\cM$ and $D_0\in\cD$ and view the set~$\cD$ as an open subset of~$\C^N$ in the obvious way. 

Next, we consider closed polydiscs in $\cM$ and $\cD$ around $M_0$ and $D_0$, with nonzero radii~$r_1$ and~$r_2$. We need only show that we can choose $T_E$ uniformly for $M$ and $D$ varying over such polydiscs $P_0(r_1)\times Q_0(r_2)\subset \cM\times\cD$ when we suitably decrease~$r_1$ and~$r_2$ if need be. (Indeed, any compact subset of $\cM\times\cD$ is covered by a finite number of such polydiscs.)

To prove this, we retrace the above steps, as applied to $M_0\exp (tD_0)$. First, we note that the holomorphic implicit function theorem implies that when we choose not only~$a$ in~\eqref{polyd}, but also $r_1$ small enough, then we get a unique solution $s_j(\epsilon,M)$ obeying~\eqref{spropN} and holomorphic in the Cartesian product of the~$\epsilon$- and $M$-polydiscs.

Second, we define $T_0$ by
\be\label{SSSSS46}
\exp(-T_0m(r_2))=a/2,\ \ m(r_2)\equiv \min_{D\in Q_0(r_2)} (R(D)),
\ee
with~$R(D)$ given by~\eqref{defR}. Then~\eqref{cjbo} follows again.

Third, we choose $T_1\ge T_0$ such that the inequalities~\eqref{mjbo}
 hold true on  $P_0(r_1)\times Q_0(r_2)$. Then we obtain the ordering~\eqref{lamord} and nondegeneracy of $\sigma(E(t))$ for all $(M,D)\in P_0(r_1)\times Q_0(r_2)$ and all $t\ge T_1$.
 
Fourth, choosing $\de\in (0,a/2]$ and eventually decreasing~$r_1$ such that \eqref{sjsup} holds on the product of the closed $\epsilon$-polydisc with radius~$\de$ and $P_0(r_1)$, the bounds~\eqref{sjbou} follow as before.

Finally, defining $T_2$ by
\be\label{SSSSS47}
\exp(-T_2m(r_2))=\de, 
\ee
it follows that our uniformity assertion holds true for $T_E$~\eqref{TE}.
\end{proof}

In Appendix~A of Ref.~\cite{Ruij_CMP1988} we also studied the spectral asymptotics for $t\to\infty$ of matrices of the form $M+tD$,  with $M$ an arbitrary $N\times N$ matrix and $D\in\cD$. We have meanwhile realised that (a slightly different version of) the pertinent result (namely, Theorem A1 in Ref.~\cite{Ruij_CMP1988}) can be readily understood via Rayleigh--Schr\"odinger perturbation theory, cf.~for example \cite{Reed_Simon_VOL_4}, p.~7.

Specifically, the spectral asymptotics of $M+tD$  for $t\to\infty$ can be readily deduced from the behavior of the spectrum $\sigma(D+\epsilon M)$ for $|\epsilon|$ small enough. Indeed, since $D$ has simple spectrum, we can find $r>0$ such that this is still true for $D+\epsilon M$ with $|\epsilon|\le r$.  Then Rayleigh--Schr\"odinger perturbation theory can be invoked to infer that $D+\epsilon M$ has eigenvalues of the form
\be\label{SSSSS50}
d_j +\epsilon M_{jj} +\epsilon^2 \alpha_j +O(\epsilon^3),\ \ \epsilon \to 0,\  \ j=1,\ldots,N,
\ee
where
\be\label{alj}
\alpha_j\equiv \sum_{k\ne j} \frac{M_{jk}M_{kj}}{d_j-d_k},\ \ \ j=1,\ldots,N.
\ee
As a consequence, $M+tD$ has eigenvalues that satisfy estimates
\be\label{SSSSS51}
\lambda_j(t)=M_{jj}+ td_j+t^{-1}\alpha_j +O(t^{-2}),\ \ t\to\infty,\ \ \ j=1,\ldots,N.
\ee

This result can also be recovered and slightly improved by the method we followed to prove Theorem~\ref{THEOREM_THM_A1}. This yields the following theorem, which concludes this appendix.

\begin{THEOREM}
\label{THEOREM_THM_A2}
There exists $T_E\in\R$ such that the matrix
\be\label{SSSSS52}
E(t)=M+tD,\ \ M\in\C^{N\times N},\ \ D\in\cD,\ \ t\in\R,
\ee
has nondegenerate eigenvalues $\lambda_1(t),\ldots,\lambda_N(t)$, satisfying 
\be\label{lamord2}
|\lambda_N(t)|<\cdots <|\lambda_1(t)|,\ \ \ \forall t\ge T_E.
\ee
They are of the form
 \be\label{lamM}
\lambda_j(t)=M_{jj}+ td_j+\rho_j(t),\ \ \ j=1,\ldots,N,
\ee
where the $\rho_j(t)$'s are real-analytic functions on~$(T_E,\infty)$ such that
\be\label{rhob}
|\rho_j(t)|\le \frac{1}{t}(|\alpha_j|+\eta),\ \ \ 
|\dot{\rho}_j(t)|\le \frac{1}{t^2}(|\alpha_j|+\eta),\ \ \ j=1,\ldots,N.
\ee
Here, $\eta$ is an arbitrary fixed positive number and the $\alpha_j$ are given by~\eqref{alj}.
Moreover, $T_E$ can be chosen uniformly for $(M,D)$ varying over an arbitrary compact subset of $\C^{N\times N}\times \cD$. 
\end{THEOREM}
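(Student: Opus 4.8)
The plan is to recover Theorem~\ref{THEOREM_THM_A2} by the same device used for Theorem~\ref{THEOREM_THM_A1}, with the multiplicative parameters $\exp(td_{k+1}-td_k)$ replaced by the single parameter $\epsilon\equiv 1/t$ and the holomorphic implicit function theorem applied directly to the characteristic polynomial. Since $E(t)=t(D+\epsilon M)$ with $\epsilon=1/t$, one has $\sigma(E(t))=t\,\sigma(D+\epsilon M)$, so it suffices to control the spectrum of $D+\epsilon M$ for $|\epsilon|$ small. Consider
\[
P(\lambda,\epsilon)=\det\big(D+\epsilon M-\lambda{\bf 1}_N\big),
\]
a polynomial in $\lambda$ and $\epsilon$ that also depends holomorphically on the entries of $M$ and on $d_1,\ldots,d_N$. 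Because $P(\lambda,0)=\prod_{k=1}^N(d_k-\lambda)$ and the $d_k$ are pairwise distinct, $\lambda=d_j$ is a simple root of $P(\cdot,0)$, i.e.\ $\partial_\lambda P(d_j,0)=\prod_{k\ne j}(d_k-d_j)\ne 0$. Hence, for each $j$, the holomorphic implicit function theorem yields a radius $a>0$ and a unique holomorphic function $\epsilon\mapsto\mu_j(\epsilon)$ on the disc $|\epsilon|<a$ with $\mu_j(0)=d_j$ and $P(\mu_j(\epsilon),\epsilon)=0$; after shrinking $a$ we may assume $|\mu_j(\epsilon)-d_j|<\tfrac12\min_{k\ne j}|d_k-d_j|$ there, so the $\mu_j(\epsilon)$ are pairwise distinct and exhaust $\sigma(D+\epsilon M)$.

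Next I would read off the first two Taylor coefficients of $\mu_j$ at the origin. Writing $\mu_j(\epsilon)=d_j+\epsilon\beta_j+\epsilon^2 g_j(\epsilon)$ with $g_j$ holomorphic on $|\epsilon|<a$, substituting into $P(\mu_j(\epsilon),\epsilon)=0$ and matching powers of $\epsilon$ gives $\beta_j=M_{jj}$ and $g_j(0)=\alpha_j$, with $\alpha_j$ as in~\eqref{alj}; this is precisely the content of second-order Rayleigh--Schr\"odinger perturbation theory for the simple eigenvalue $d_j$ of $D$, cf.\ \cite{Reed_Simon_VOL_4}. Choosing $T_E$ so large that $1/t<a$ for $t>T_E$, I then set $\lambda_j(t)=t\,\mu_j(1/t)$ and $\rho_j(t)=\lambda_j(t)-M_{jj}-td_j=\tfrac1t\,g_j(1/t)$. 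These $\lambda_j(t)$ enumerate $\sigma(E(t))$, the $\rho_j$ are real-analytic on $(T_E,\infty)$, and~\eqref{lamM} holds; the non-degeneracy and the ordering~\eqref{lamord2} for $t$ large then follow from~\eqref{lamM} together with the strict inequalities defining $\cD$.

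For the estimates, pick $\de\in(0,a]$ small enough that $\sup_{|\epsilon|\le\de}\big(|g_j(\epsilon)|+|\epsilon|\,|g_j'(\epsilon)|\big)\le|\alpha_j|+\eta$ for every $j$ — possible since $g_j(0)=\alpha_j$, $\epsilon g_j'(\epsilon)\to0$, and $g_j$ is continuous — and enlarge $T_E$ so that $1/t\le\de$ for $t>T_E$. Then $|\rho_j(t)|=\tfrac1t|g_j(1/t)|\le\tfrac1t(|\alpha_j|+\eta)$, while differentiating $\rho_j(t)=\tfrac1t g_j(1/t)$ gives $\dot\rho_j(t)=-t^{-2}g_j(1/t)-t^{-3}g_j'(1/t)$, so $|\dot\rho_j(t)|\le t^{-2}\big(|g_j(1/t)|+t^{-1}|g_j'(1/t)|\big)\le t^{-2}(|\alpha_j|+\eta)$, which is~\eqref{rhob}. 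The uniformity claim is handled verbatim as in the proof of Theorem~\ref{THEOREM_THM_A1}: $P(\lambda,\epsilon)$ is jointly holomorphic in $\lambda$, $\epsilon$ and in $(M,D)$ (viewing $\cD$ as an open subset of $\C^N$), so the holomorphic implicit function theorem produces $\mu_j$, hence $g_j$, as holomorphic functions of $(\epsilon,M,D)$ on a polydisc; retracing the steps above on closed polydiscs $P_0(r_1)\times Q_0(r_2)\subset\C^{N\times N}\times\cD$ around a fixed $(M_0,D_0)$, shrinking $r_1,r_2$ if necessary, yields a $T_E$ valid on each such polydisc, and a finite subcover of an arbitrary compact set completes the argument.

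I expect the only genuine bookkeeping point to be the identification of the $\epsilon^2$-coefficient of $\mu_j$ with the expression~\eqref{alj} when this is done directly from the cofactor expansion of $\det(D+\epsilon M-\lambda{\bf 1}_N)$, which is notationally cumbersome; but this can be bypassed entirely by quoting the standard second-order Rayleigh--Schr\"odinger formula, exactly as the preceding discussion in the appendix already indicates. All remaining steps are routine and parallel the proof of Theorem~\ref{THEOREM_THM_A1}.
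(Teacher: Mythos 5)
Your argument is correct, but it takes a genuinely different route from the paper's own proof. The paper mirrors its proof of Theorem~\ref{THEOREM_THM_A1}: after setting $c_j(\epsilon)=\epsilon\lambda_j(1/\epsilon)$ it forms the $N$-dimensional system \eqref{SSSSS53}--\eqref{Fjep} equating the elementary symmetric functions of the $c_j$'s with the rescaled coefficients of the characteristic polynomial, verifies that the Jacobian at $(0;d_1,\ldots,d_N)$ equals the Vandermonde determinant $\prod_{1\le i<j\le N}(d_i-d_j)\neq 0$, applies the multivariate holomorphic implicit function theorem to that system, and then extracts $c_j'(0)=M_{jj}$ and $c_j''(0)=2\alpha_j$ by substituting the expansion back into the system (the step the paper calls ``arduous, but straightforward''). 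You instead apply the scalar holomorphic implicit function theorem to the characteristic polynomial $P(\lambda,\epsilon)=\det(D+\epsilon M-\lambda\,\bsone_N)$ at each simple root $(d_j,0)$, where $\partial_\lambda P(d_j,0)$ is a nonzero product of the differences $d_k-d_j$, and you identify the two Taylor coefficients by quoting second-order Rayleigh--Schr\"odinger perturbation theory, which is exactly the computation the appendix itself records, with the same reference, in the discussion preceding the theorem. Your route buys a shorter argument: no symmetric-function bookkeeping, no Vandermonde Jacobian, and the coefficient \eqref{alj} comes for free; your remainder and derivative estimates and the polydisc/compactness uniformity argument are sound, since each branch $\mu_j$ is jointly holomorphic in $(\epsilon,M,D)$. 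What the paper's version buys is methodological uniformity with Theorem~\ref{THEOREM_THM_A1} (a single application of the implicit function theorem producing all branches simultaneously) and a derivation of \eqref{alj} internal to its own framework. One shared caveat, in which you are no less complete than the original: closeness of the rescaled eigenvalues to the $d_j$'s directly yields separation of their real parts, hence simplicity of the spectrum and the labelling \eqref{lamM}; the modulus ordering \eqref{lamord2} is asserted from this in one line in the paper as well, so your equally brief treatment of that point introduces no gap relative to the paper's proof.
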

\begin{proof}
The crux is that the holomorphic function theorem again applies to the case at hand, provided we choose a suitable starting point. Specifically, let $\lambda_j(t)$, $j=1,\ldots,N$, be the solutions to $|E(t)-\lambda{\bf 1}_N|=0$. Then we set
\be\label{clam}
c_j(\epsilon)\equiv \epsilon\lambda_j(1/\epsilon),\ \ \ j=1,\ldots,N.
\ee
In this case the associated spectral system is given by
\be\label{SSSSS53}
F_j(\epsilon;c_1,\ldots,c_N)=0,\ \ \ j=1,\ldots,N,
\ee
where the functions $F_j$ are of the form
\begin{multline}\label{Fjep}
F_j  =\sum_{i_1<\cdots <i_j}\Big(  c_{i_1}\cdots c_{i_j} - d_{i_1}\cdots d_{i_j} -\epsilon \sum_{k=1}^{j }M_{i_ki_k}d_{i_1}\cdots \hat{d}_{i_k}\cdots d_{i_j}\\
 -\epsilon^2 \sum_{1\le k<l\le j}M(i_k,i_l)d_{i_1}\cdots \hat{d}_{i_k}\cdots \hat{d}_{i_l}\cdots d_{i_j} \Big)+O(\epsilon^3).
\end{multline}
Here the hat signifies that the pertinent $d_i$ should be omitted.

This system has an obvious solution
\be\label{SSSSS54}
F_j(0;d_1,\ldots,d_N)=0,\ \ \ j=1,\ldots,N,
\ee
with corresponding partial matrix
\be\label{SSSSS55}
(D_c F)(0;d_1,\ldots,d_N)=\left(
\begin{array}{ccc}
1 & \cdots & 1 \\
d_2+\cdots +d_N  &  \cdots & d_1+\cdots +d_{N-1} \\
\vdots & \vdots & \vdots \\
d_2\cdots d_N & \cdots & d_1\cdots d_{N-1} \\
\end{array}\right),
\ee
and it is not difficult to verify
\be\label{SSSSS56}
|(D_c F)(0;d_1,\ldots,d_N)|=\prod_{1\le i<j\le N}(d_i-d_j).
\ee
(The determinant vanishes for $d_i=d_j$ and it is a polynomial of degree $N(N-1)/2$ in $d_1,\ldots,d_N$. By antisymmetry it then must be a nonzero multiple of the right-hand side. It is easy to see this multiple equals 1.)

Since the numbers $d_1,\ldots,d_N$ are distinct, the product on the right-hand side is nonzero. Hence the holomorphic implicit function theorem may be invoked. From this we deduce that for $|\epsilon|<a$ with~$a$ small enough, there exists a unique solution $c(\epsilon)$ satisfying
\be\label{SSSSS60}
|c_j(\epsilon)-d_j|\le r_j/3,\ \ \ j=1,\ldots,N,
\ee
where
\be\label{SSSSS61}
r_1\equiv \mu_1,\ \ r_N\equiv \mu_{N-1},\ \ r_j\equiv \min(\mu_{j-1},\mu_j),\ \ j=2,\ldots, N-1.
\ee
This entails that the corresponding $\lambda_j(t)$ (cf.~\eqref{clam}) satisfy~\eqref{lamord}. Hence $\sigma(E(t))$ is simple for all $t\ge T_0$, where 
\be\label{SSSSS62}
T_0\equiv 1/a.
\ee

Writing next
\be\label{SSSSS63}
c_j(\epsilon)= d_j+c'_j(0)\epsilon +c''_j(0)\epsilon^2/2 + O(\epsilon^3),
\ee
it is immediate from the above system that $c'_j(0)$ equals $M_{jj}$. To prove that we have
\be\label{SSSSS64}
c''_j(0)=2\alpha_j,\ \ \ j=1,\ldots,N,
\ee
with $\alpha_j$ given by~\eqref{alj}, is arduous, but straightforward. (One need only check that $F_j(\epsilon;c)$ vanishes to second order in $\epsilon$ when $c_j$ is replaced by $d_j+\epsilon M_{jj} +\epsilon^2 \alpha_j$ in~\eqref{Fjep}. Using permutations, it suffices to verify that the sum of all second-order terms involving   $M_{11}M_{22}$ and $M_{12}M_{21}$ vanishes. Noting that the latter product can only arise from $\alpha_1$ and $\alpha_2$, this can be readily achieved.) 

Now by holomorphy there exists for a given $\eta>0$ a number $\de\in (0,a/2]$ such that
\be\label{SSSSS65}
|c_j(\epsilon)-d_j-M_{jj}\epsilon|\le (|\alpha_j|+\eta)|\epsilon|^2.
\ee
Defining
\be\label{SSSSS66}
T_E\equiv 1/\de,
\ee
we then obtain \eqref{lamM}--\eqref{rhob}. Finally, the uniformity assertion follows as in the proof of the above theorem.
\end{proof}


\end{document}